\newtheorem{theorem}{Theorem}
\newtheorem{definition}{Definition}
\newtheorem{proposition}{Proposition}
\newtheoremstyle{example}% name
{3pt}% Space above
{3pt}% Space below
{\upshape}% Body font
{}% Indent amount
{\bf}% Theorem head font
{.}% Punctuation after theorem head
{.5em}% Space after theorem head
{}% Theorem head spec (can be left empty, meaning ‘normal’ )
\theoremstyle{example}
\newtheorem{example}{Example}
\begin{document}
\title{Conditional Randomization Rank Test}

\author{Yanjie Zhong, Todd Kuffner and Soumendra Lahiri}
\date{}
%\address{Washington University in St. Louis, St. Louis, USA}
\affil{Washington University in St. Louis, St. Louis, USA}

\maketitle 

\begin{abstract}
We propose a new method named the Conditional Randomization Rank Test (CRRT) for testing conditional independence of a response variable $Y$ and a covariate variable $X$, conditional on the rest of the covariates $Z$.  The new method generalizes the Conditional Randomization Test (CRT) of  \cite{candes2018panning} by exploiting the knowledge of the conditional distribution of $X|Z$ and is a conditional sampling based method that is easy to implement and interpret. 
%Besides being able to hold 
In addition to guaranteeing exact type 1 error control, owing to a more flexible framework, the new method markedly outperforms the CRT in computational efficiency. We establish bounds on the probability of type 1 error in terms of total variation norm and also in terms of observed Kullback–Leibler divergence  when the conditional distribution of $X|Z$ is misspecified. We validate our theoretical results by extensive simulations and show that our new method has considerable advantages over other existing conditional sampling based methods when we take both power and computational efficiency into consideration. %simultaneously. 
\end{abstract}

\section{Introduction}
We study the problem of testing conditional independence in this paper. To be more specific, let $X\in \mathbb{R}$, $Y\in \mathbb{R}$ and $Z\in \mathbb{R}^p$ be random variables. We consider to test the null hypothesis
\begin{equation}
H_0:Y\perp X|Z,
\label{conditional_test}
\end{equation}
i.e., $Y$ is independent of $X$ conditional on the controlling variables $Z$. Conditional independence test plays an important role in many statistical problems, like causal inference (\cite{dawid1979conditional}, \cite{spirtes2010introduction}, \cite{lechner2001identification}, \cite{kalisch2007estimating}, \cite{spirtes2010introduction}) and Bayesian networks (\cite{singh1995construction}, \cite{cheng1998learning}, \cite{campos2006scoring}). For example, consider the case where $X$ is a binary variable indicating whether the patient received a treatment, with $X=1$ when the patient received the treatment and $X=0$ otherwise. Let $Y$ be an outcome associated with the treatment, like the diabetes risk lagged 6 months. Let $Z$ feature patient's individual characteristics. Usually, it is of interest to see if $\mathbb{E}\left[Y|X=1,Z\right]=\mathbb{E}\left[Y|X=0,Z\right]$ 
or more generally, if the probability distribution of $Y$ given $X=1,Z$ is the same as
that of $Y$ given $X=0,Z$ 
so that we can tell whether the treatment is effective.
%which is actually a test weaker than (\ref{conditional_test}).

\subsection{Our contributions}
In this paper, we propose a versatile and computationally  efficient method,
called the Conditional Randomization Rank Test (CRRT)
to test the null hypothesis (\ref{conditional_test}) . This method generalizes the Conditional Randomization Test (CRT) introduced in \cite{candes2018panning}. Like the CRT, the CRRT is built on the conditional version of model-X assumption, which implies that the conditional distribution of $X|Z$ is known. Such an assumption is in fact feasible 
% and necessary
in modern applications. On one hand, it is feasible in practice because usually we have a large amount of unlabeled data (data without $Y$) (\cite{barber2018robust}, \cite{romano2019deep}, \cite{sesia2019gene}), which combined with some domain knowledge, allow us to learn the conditional distribution of $X|Z$ quite accurately. On the other hand, 
%it's necessary because 
as implied in \cite{shah2018hardness}, domain knowledge is needed to construct non-trivial tests of conditional independence. Since it would be too complicated to know how $Y$ depends on the covariates when the covariates are high-dimensional, marginal information on the covariates  becomes valuable and can be utilized in conditional tests. 
%demanded.

Compared with the CRT, we allow the CRRT to be implemented in a more flexible way. It makes the CRRT computationally more efficient than the CRT.
%and can be adapted to more personalized and suitable choices.
It is known that the greatest drawback of the CRT is its restrictive computational burden (\cite{liu2020fast}). There are many special forms of the CRT aiming at solving this problem, including the distillation to the CRT (dCRT) in \cite{liu2020fast} and the Holdout Randomization Test (HRT) in \cite{tansey2018holdout}. In addition to having favorable 
computational efficacy compared to these methods, the CRRT is consistently  powerful across a wide range of settings while the HRT suffers from lack of power and the dCRT shows inferior power performance  in complicated models. To highlight the advantages of the CRRT,  we provide a simple interaction model example here.
We consider to test hypothesis (\ref{conditional_test}) when $X$ and $Z$ have a non-ignorable interaction in addition to separate linear effects on $Y$. Results are presented in Figure \ref{intro_plot}. We can see that the CRRT can have comparable power as the CRT and CPT while costing far less computational time; See  Section \ref{sec:empirical}
 for more details. In addition, we also show that the CRRT is  robust to  misspecification of the conditional distribution of $X|Z$, both theoretically and empirically.

\begin{figure}[ht]
    \centering
    \begin{subfigure}[t]{0.49\textwidth}
        \centering
        \includegraphics[width=1\linewidth,height=0.6\linewidth]{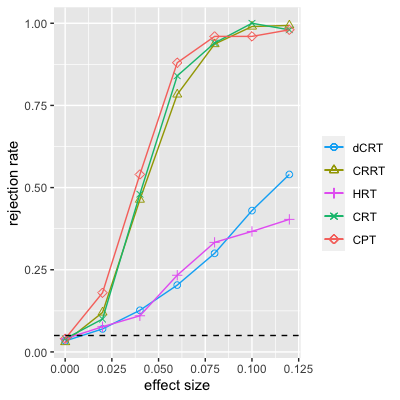} 
        \caption{Rejection Rate} \label{fig:ip1}
    \end{subfigure}
    \hfill
    \begin{subfigure}[t]{0.49\textwidth}
        \centering
        \includegraphics[width=1\linewidth,height=0.6\linewidth]{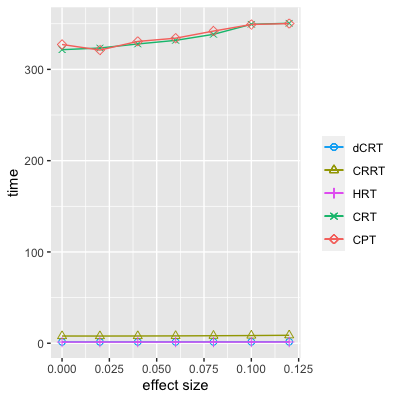} 
        \caption{Time} \label{fig:ip2}
    \end{subfigure}
\caption{(a) Proportions of rejection. The y-axis represents the proportion of times the null hypothesis is rejected. The x-axis represents the true coefficient of $X$. (b) Time in seconds per Monte-Carlo simulation.}
\label{intro_plot}
\end{figure}

\subsection{Related Work}
Conditional tests are ubiquitous in statistical inference. Indeed, 
 many commonly seen statistical problems are essentially conditional independence tests. For example, when we assume a linear model for $Y$ on covariates  $\left(\begin{array}{c} X\\Z\end{array}\right)$:
$$
Y = \beta_0X+\beta^T Z+\varepsilon,
$$
where $\beta_0\in \mathbb{R}$ and $\beta\in \mathbb{R}^p$ are fixed parameters, and $\varepsilon\in \mathbb{R}$ is a random error, it is of interest to test whether $\beta_0=0$, which is equivalent to test (\ref{conditional_test}). If the dimensionality of $Z$ is relatively small compared with the sample size, we can easily construct a valid test based on classical OLS estimator. In a high-dimensional setting, more conditions on sparsity or design matrix (i.e. on the covariates model when we are considering a random design) or coefficients $\beta$ are needed. For example, see \cite{chatterjee2013rates}, \cite{javanmard2014confidence}, \cite{javanmard2014hypothesis}, \cite{van2014asymptotically}, \cite{zhang2014confidence}.

To match the complexity in real application, recently many researchers focused more on the assumptions on covariates while allowing flexibility of the response model to the greatest extent. Our method also belongs to this category. Other methods along this line include the CRT (\cite{candes2018panning}), the CPT (\cite{berrett2019conditional}), the HRT (\cite{tansey2018holdout}) and the dCRT (\cite{liu2020fast}). These methods can usually control the type 1 error in practice because they totally rely on covariates model (which suffer less from model misspecification) when adequate unlabeled data is accessible. We review the CRT and the CPT for completeness 
%in detail 
in Section \ref{connect}. We can naturally relate these model-X methods to semi-supervised learning methods, the key idea of which is also to maximize the use of unlabeled data. There is
some recent work developing reliable methods on variable selection solely based on side-information on the 
covariates; See
\cite{barber2015controlling}, \cite{gimenez2018improving}, \cite{barber2019knockoff} and the references therein. We 
discuss connections  between the CRRT and these methods in the supplementary material. After finishing our work, we became aware of a concurrent work (\cite{li2021deploying}) where the CRRT was briefly mentioned. Compared with their work, we provide a more in-depth analysis on the CRRT beyond validity.

There are many other nonparametric methods for testing conditional independence which we point out here but we do not intend to compare them with our method in this paper because of some fundamental differences in the framework and assumptions. Roughly speaking, these nonparametric methods  can be divided into 4 categories. The first category includes kernel-based methods which require no distributional assumptions (\cite{fukumizu2008kernel}, \cite{zhang2012kernel}, \cite{strobl2019approximate}). 
The Kernel Conditional Independence Test can be viewed as the extension of the Kernel Independence Test (\cite{fukumizu2008kernel}, \cite{gretton2008kernel}). Taking the advantage of the characterization of conditional independence in reproducing kernel Hilbert spaces, one can reduce the original conditional independence test to a low-dimensional test in general. However, kernel-based methods depend on  complex asymptotic distributions of the test statistics. Therefore, they have no finite-sample guarantee and show deteriorating performance as the dimension of $Z$ increases. The second category constructs asymptotically valid tests by measuring the distance between conditional distribution of $Y|Z$ and $Y|X,Z$; See, for example, \cite{su2007consistent}, \cite{su2014testing} and \cite{wang2015conditional}. This approach requires estimating certain  point-wise conditional expectations by kernel methods like the Nadaraya-Watson kernel regression. Hence, the number of samples must be extremely large when the dimension of $Z$ is high. The third category relies on discretization of the conditioning variables, i.e. $Z$. They key idea is that $Y\perp X|Z$ holds if and only if $Y\perp X|Z=z'$ holds for all possible $z'$. Therefore, if one can discretize the space of $Z$, one can test conditional independence on each part and then, can synthesize all test results together (\cite{huang2010testing}, \cite{tsamardinos2010permutation}, \cite{doran2014permutation}, \cite{sen2017model}, \cite{runge2018conditional}). However, it is well-known that discretization of high-dimensional continuous variables is
subject to curse of dimensionality and computationally 
demanding. Note that tests based on within-group permutation also fall in this category because they essentially rely on local independence. The fourth category includes methods testing some weaker null hypothesis (\cite{linton1997conditional}, \cite{bergsma2004testing}, \cite{song2009testing}), where copula estimation would be helpful (\cite{gijbels2011conditional}, \cite{veraverbeke2011estimation}).

\subsection{Paper Outline}
In Section \ref{sec-crrt}, we describe our newly-proposed CRRT in detail and show its effectiveness in controlling finite sample type 1 error. In Section \ref{sec:robust}, we demonstrate that the CRRT is robust to the misspecification of conditional distribution of $X|Z$. In Section \ref{connect}, we analyze the connection among several existing methods based on conditional sampling, including the proposed CRRT. In Section \ref{sec:empirical}, we study finite sample performance of the CRRT and show its advantages
over other conditional sampling based competitors by extensive simulations. Finally, we conclude our paper by Section \ref{sec:discuss} with a discussion.
\subsection{Notation}
The following notation will be used. For a $n\times p$ matrix $x$, its $i$th row is denoted by $x_{[i]}$ and its $j$th column is denoted by $x_j$. For any two random variables $X$ and $Y$, $Y \perp X$ means that they are independent. For two random variables $X$ and $Z$ with $n$ i.i.d. copies $\{(x_i,z_{[i]}):i=1,2,\ldots,n\}$, if the density of $X$ conditional on $Z$ is $Q(\cdot|Z)$, we denote the pdf of $x=(x_1,\ldots,x_n)^T$ conditional on $z=(z_{[1]},\ldots,z_{[n]})^T$ by $Q_n(\cdot|z)$. When there is no ambiguity, we sometimes suppress the subscript and denote it by $Q(\cdot|z)$.

\section{Conditional Randomization Rank Test}
\label{sec-crrt}

Suppose that we have a response variable $Y\in \mathbb{R}$ and predictors $(X,Z)=(X,Z^1,\ldots,Z^p)\in \mathbb{R}^p$. For now, we assume that the conditional distribution of $X|Z$ is known and denote the conditional density by $Q(\cdot|Z)$ throughout this section. Let $y=(y_1,\ldots,y_n)^T$ be an n-dimension i.i.d. realization of $Y$, and similarly, let  $(x,z)=(x,z_1,\ldots,z_p)$ be an $n\times (p+1)$ dimension matrix with rows i.i.d. drawn from the distribution of $(X,Z)$. For a given positive integer $b$ (usually we can take $b=100$), and $k=1,\ldots,b$, let $x^{(k)}$ be an n-dimensional %pseudo
vector with elements $x^{(k)}_i\sim Q(\cdot|z_{[i]})$ independently, $i = 1,\ldots,n$, where $z_{[i]}$ is the $i$th row of $z$. We denote $\tilde{x}=(x,x^{(1)},\ldots,x^{(b)})$.

Let $T(y,z,\tilde{x})=(T_0,T_1,\ldots,T_b)^T:\mathbb{R}^n\times \mathbb{R}^{n\times p}\times \mathbb{R}^{n\times (b+1)}\rightarrow \mathbb{R}^{b+1}$ be any function mapping the enlarged dataset to a $(b+1)$-dimension vector. Typically, $T_0$ is a function measuring the importance of  $x$ to $y$ and $T_k$ is a function measuring the importance of $x^{(k)}$ to $y$.

Let $\Sigma_{b+1}$ be the collection of all permutations of $(1,2,\ldots,b,b+1)$. For any $\sigma\in \Sigma_{b+1}$, $\tilde{x}_{permute(\sigma)}$ is a $n\times (b+1)$ matrix with the $j$th column being the $\sigma_j$th column of $\tilde{x}$, $j=1,2,\ldots,b+1$. If $\nu$ is a $(b+1)$-dimension vector, $\nu_{permute(\sigma)}$ is a $(b+1)$-dimension vector with the $j$th element being the $\sigma_j$th element of $\nu$, $j=1,2,\ldots,b+1$. We introduce a property of $T$, which is key to our analysis.

\begin{definition}
Let $T(y,z,\tilde{x})=(T_0,T_1,\ldots,T_b)^T:\mathbb{R}^n\times \mathbb{R}^{n\times p}\times \mathbb{R}^{n\times (b+1)}\rightarrow \mathbb{R}^{b+1}$. We say that $T$ is $X$-symmetric if for any $\sigma\in \Sigma_{b+1}$, $T(y,z,\tilde{x}_{permute(\sigma)}) = T(y,z,\tilde{x})_{permute(\sigma)}$.
\end{definition}

If $X$ is indeed independent with $Y$ given $Z$, we expect that $x,x^{(1)},\ldots,x^{(b)}$ should have similar influence on $Y$. Hence, if $T$ is a X-symmetric function, $T_0(y,z,\tilde{x}),\ldots,T_b(y,z,\tilde{x})$ should also be stochastically similar. 

%%%%%%%%%%%%%%
\begin{proposition}
Assume that the null hypothesis  $H_0:Y\perp X|Z$ is true. If $T(y,z,\tilde{x})=(T_0,T_1,\ldots,T_b)^T:\mathbb{R}^n\times \mathbb{R}^{n\times p}\times \mathbb{R}^{n\times (b+1)}\rightarrow \mathbb{R}^{b+1}$ is X-symmetric, then 
$$
T_0(y,z,\tilde{x})\stackrel{d}{=}T_1(y,z,\tilde{x})\stackrel{d}{=}\ldots\stackrel{d}{=}T_b(y,z,\tilde{x}).
$$ 
Further, they are exchangeable. In particular, for any permutation $\sigma \in \Sigma_{b+1}$, 
$$
(T_0(y,z,\tilde{x}),\ldots,T_b(y,z,\tilde{x}))_{permute(\sigma)}\stackrel{d}{=}(T_0(y,z,\tilde{x}),\ldots,T_b(y,z,\tilde{x})).
$$
\label{xsym}
\end{proposition}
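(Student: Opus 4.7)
The plan is to show that, conditional on $(y,z)$, the columns of $\tilde{x}=(x,x^{(1)},\ldots,x^{(b)})$ are exchangeable, and then to push this exchangeability through the $X$-symmetric map $T$.

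First I would verify the conditional exchangeability of $\tilde{x}$. By construction, each synthetic column $x^{(k)}$ has conditional density $Q_n(\cdot\mid z)$ given $z$, and the $x^{(k)}$ are drawn independently of one another and of $y$. Thus, conditional on $(y,z)$, each $x^{(k)}$ still has density $Q_n(\cdot\mid z)$. For the observed column $x$, the i.i.d.\ sampling of $(x_i,z_{[i]})$ implies $x\mid z \sim Q_n(\cdot\mid z)$, and the null hypothesis $Y\perp X\mid Z$ upgrades this to $x\mid (y,z)\sim Q_n(\cdot\mid z)$. Combining these observations, all $b+1$ columns of $\tilde{x}$ are, conditionally on $(y,z)$, i.i.d.\ with the same density $Q_n(\cdot\mid z)$. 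In particular, for every permutation $\sigma\in\Sigma_{b+1}$,
\[
\tilde{x}_{permute(\sigma)} \,\stackrel{d}{=}\, \tilde{x} \quad \text{conditional on } (y,z).
\]

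Next I would transfer this permutation invariance to $T$. Since $(y,z)$ are held fixed in the conditional distribution, applying the fixed measurable function $(y,z,\cdot)\mapsto T(y,z,\cdot)$ preserves equality in distribution, giving
\[
T(y,z,\tilde{x}_{permute(\sigma)}) \,\stackrel{d}{=}\, T(y,z,\tilde{x}).
\]
Invoking the $X$-symmetry of $T$, the left-hand side equals $T(y,z,\tilde{x})_{permute(\sigma)}$ pointwise, so
\[
\bigl(T_0(y,z,\tilde{x}),\ldots,T_b(y,z,\tilde{x})\bigr)_{permute(\sigma)} \,\stackrel{d}{=}\, \bigl(T_0(y,z,\tilde{x}),\ldots,T_b(y,z,\tilde{x})\bigr),
\]
first conditionally on $(y,z)$ and hence unconditionally by taking expectation over $(y,z)$. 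This is exactly the exchangeability claim, and specializing $\sigma$ to the transposition swapping $0$ with $k$ yields $T_0\stackrel{d}{=}T_k$ for every $k$.

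The argument is essentially a two-line reduction once the conditional i.i.d.\ structure of the columns of $\tilde{x}$ given $(y,z)$ is in place, so I do not expect a genuine obstacle. The only delicate point is the use of $H_0$: it is precisely $Y\perp X\mid Z$ that lets us condition on $y$ without disturbing the conditional law of $x$ given $z$, which is why the symmetry between the real and synthetic columns holds only under the null. I would take care to state this step explicitly, since if $H_0$ fails the observed $x$ will no longer be exchangeable with the synthetic $x^{(k)}$ given $(y,z)$, and the whole argument collapses.
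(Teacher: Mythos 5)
Your proposal is correct and follows essentially the same route as the paper: establish that, under $H_0$, the $b+1$ columns of $\tilde{x}$ are conditionally i.i.d.\ given $(y,z)$, deduce $\tilde{x}_{permute(\sigma)}\stackrel{d}{=}\tilde{x}$, and push this through the $X$-symmetric map $T$. If anything, your version is slightly more careful than the paper's in conditioning on $(y,z)$ before applying $T$ (which is needed since $T$ depends on $(y,z)$) and only then marginalizing.
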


Based on the above exchangeability, a natural choice of p-value for our target test is given by
\begin{equation}
p_{CRRT} = \frac{1}{b+1}\sum\limits_{k=0}\limits^{b}\mathbbm{1}\{T_0(y,z,\tilde{x})\leq T_k(y,z,\tilde{x})\}.
\label{pvalue}
\end{equation}
With this p-value, we will reject $H_0:Y\perp X|Z$ at level $\alpha$ if $p\leq \alpha$. In other words, we will have a strong evidence to believe that Z cannot fully cover X's influence on Y when $T_0$ is among the $\lfloor\alpha(1+b)\rfloor$ largest of  $(T_0,T_1,\ldots,T_b)$. We judge whether to reject $H_0$ based on the rank of $T_0$. That is why we name the test introduced here Conditional Randomization Rank Test. The following proposition and theorem justifies the use of p-value defined in (\ref{pvalue}).

%%%%%%%%%%%%%%
\begin{proposition}
Assume that $T(y,z,\tilde{x})=(T_0,T_1,\ldots,T_b)^T:\mathbb{R}^n\times \mathbb{R}^{n\times p}\times \mathbb{R}^{n\times (b+1)}\rightarrow \mathbb{R}^{b+1}$ is X-symmetric. Denote the ordered statistics of $(T_0,T_1,\ldots,T_b)$ by $T_{(0)}\leq T_{(1)}\leq\ldots\leq T_{(b)}$. When $H_0:Y\perp X|Z$ holds, we have
$$
\mathbb{P}(p_{CRRT}\leq \alpha|z,y,T_{(0)},T_{(1)},\ldots,T_{(b)})\leq \alpha,
$$
for any pre-defined $\alpha\in[0,1]$.
\label{conditionprop}
\end{proposition}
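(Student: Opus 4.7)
The plan is to strengthen Proposition \ref{xsym} to a conditional statement and then invoke the standard correspondence between exchangeability and uniform random permutations. Under $H_0$, the columns $x, x^{(1)}, \ldots, x^{(b)}$ of $\tilde x$ are conditionally i.i.d.\ draws from $Q_n(\cdot \mid z)$ given $(y, z)$, so $\tilde x$ has exchangeable columns given $(y, z)$. Combining this with the $X$-symmetry of $T$, exactly as in the proof of Proposition \ref{xsym}, shows that $(T_0, T_1, \ldots, T_b)$ is exchangeable \emph{conditional on} $(y, z)$, not merely marginally. By the standard fact that an exchangeable finite sequence conditioned on its multiset of values is a uniformly random permutation of that multiset, one obtains: conditional on $(y, z, T_{(0)}, \ldots, T_{(b)})$, the value $T_0$ is uniformly distributed over the multiset $\{T_{(0)}, \ldots, T_{(b)}\}$.

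It remains to evaluate $\mathbb{P}(p_{CRRT} \leq \alpha \mid y, z, T_{(0)}, \ldots, T_{(b)})$. Let $V_1 < V_2 < \cdots < V_m$ be the distinct values among $T_{(0)}, \ldots, T_{(b)}$ with multiplicities $n_1, \ldots, n_m$, so that $\sum_i n_i = b+1$. If $T_0 = V_i$, then the indices $k$ with $T_0 \leq T_k$ are exactly those with $T_k \geq V_i$, so $p_{CRRT}$ equals $q_i/(b+1)$, where $q_i := n_i + n_{i+1} + \cdots + n_m$. Since $q_i$ is strictly decreasing in $i$, the event $\{p_{CRRT} \leq \alpha\}$ has the form $\{T_0 \in \{V_{i_0}, \ldots, V_m\}\}$, where $i_0$ is the smallest index with $q_{i_0}/(b+1) \leq \alpha$. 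By the uniform distribution of $T_0$ over the multiset, this event has conditional probability $(n_{i_0} + \cdots + n_m)/(b+1) = q_{i_0}/(b+1) \leq \alpha$, which is the claim.

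The main delicate point is handling ties. Without ties, the argument collapses to the standard observation that the rank of $T_0$ among $(T_0, \ldots, T_b)$ is uniform on $\{1, \ldots, b+1\}$, giving $\mathbb{P}(p_{CRRT} \leq \alpha \mid \cdot) = \lfloor \alpha(b+1) \rfloor/(b+1) \leq \alpha$. The inequality (rather than equality) reflects both the discreteness of the p-value and the conservative tie-breaking convention built into $\mathbbm{1}\{T_0 \leq T_k\}$, which counts ties in favor of $T_k$; the multiplicity bookkeeping above makes this precise in the general case.
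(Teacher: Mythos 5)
Your proposal is correct and follows essentially the same route as the paper: upgrade the exchangeability of $(T_0,\ldots,T_b)$ to a statement conditional on $(y,z)$, conclude that conditionally on $(y,z,T_{(0)},\ldots,T_{(b)})$ the vector is a uniform random permutation of the order statistics, and then bound the probability that $T_0$ lands high enough in the ranking. The only difference is that you spell out the final tie-handling computation that the paper dismisses with ``we can easily know that,'' which is a welcome (and correct) addition rather than a different argument.
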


%%%%%%%%%%%%%%
\begin{theorem}
Under the null hypothesis $H_0:Y\perp X|Z$, assume that the test function $T(y,z,\tilde{x})=(T_0,T_1,\ldots,T_b)^T:\mathbb{R}^n\times \mathbb{R}^{n\times p}\times \mathbb{R}^{n\times (b+1)}\rightarrow \mathbb{R}^{b+1}$ is X-symmetric, we can know directly from Proposition \ref{conditionprop}
 that the p-value defined in (\ref{pvalue}) satisfies
$$
\mathbb{P}(p_{CRRT}\leq \alpha) \leq \alpha
$$
for any $\alpha\in[0,1]$. Actually, $p_{CRRT}$ is also conditionally valid:
$$
\mathbb{P}(p_{CRRT}\leq \alpha|y,z)\leq \alpha.
$$
\label{maintheorem}
\end{theorem}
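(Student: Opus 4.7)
The plan is to derive both bounds in the theorem directly from Proposition \ref{conditionprop} by integrating out the extra conditioning variables. Since the proposition already furnishes the strongest conditional statement $\mathbb{P}(p_{CRRT}\leq\alpha\mid y,z,T_{(0)},\ldots,T_{(b)})\leq\alpha$ almost surely, everything reduces to two applications of the tower property of conditional expectation, together with routine measurability checks.

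First I would record the proposition's inequality as holding almost surely on the underlying probability space. To obtain the conditional validity given $(y,z)$, I would then take conditional expectations of both sides with respect to the sigma-algebra generated by $(y,z)$ and invoke the tower property:
$$\mathbb{P}(p_{CRRT}\leq\alpha\mid y,z)=\mathbb{E}\bigl[\mathbb{P}(p_{CRRT}\leq\alpha\mid y,z,T_{(0)},\ldots,T_{(b)})\bigm| y,z\bigr]\leq\mathbb{E}[\alpha\mid y,z]=\alpha.$$
For the unconditional bound, I would integrate once more over $(y,z)$:
$$\mathbb{P}(p_{CRRT}\leq\alpha)=\mathbb{E}[\mathbb{P}(p_{CRRT}\leq\alpha\mid y,z)]\leq\alpha.$$
The only non-mechanical checks are that $p_{CRRT}$ is measurable (immediate, since it is a finite sum of indicator functions of comparisons among the $T_k$) and that the relevant regular conditional probabilities exist, which is standard since $(y,z,\tilde{x})$ lives in a Euclidean space.

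The main obstacle for the overall theory is therefore not in this theorem but in Proposition \ref{conditionprop}: that step must translate the exchangeability of $(T_0,\ldots,T_b)$ guaranteed by Proposition \ref{xsym} into a uniform-rank statement for $T_0$ among the order statistics, from which the $\alpha$-level bound on the rank-based p-value follows. Once that conditional statement is in hand, Theorem \ref{maintheorem} itself is a two-line tower-property consequence, so my write-up would be correspondingly brief and would simply defer the substantive probabilistic work to Proposition \ref{conditionprop}.
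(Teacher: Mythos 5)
Your proposal is correct and matches the paper's intent exactly: the paper gives no separate proof of Theorem \ref{maintheorem}, treating it as an immediate consequence of Proposition \ref{conditionprop}, and your two applications of the tower property (first conditioning down to $(y,z)$, then integrating out $(y,z)$) are precisely the omitted routine step.
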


\section{Robustness of the CRRT}
\label{sec:robust}

According to Theorem  \ref{maintheorem}, we know that when the conditional distribution of $X$ given $Z$ is fully known, CRRT can precisely control type 1 error at any desirable level. However, in real applications, this information may
not be available. Luckily, however, the following theorems guarantee that the CRRT is robust to moderate misspecification of the conditional distribution. We state two theorems explaining the robustness of the CRRT from two different angles.

%%%%%%%%%%%%%%
\begin{theorem}
(Type 1 Error Upper Bound 1.) Assume the true conditional distribution of $X|Z$ is $Q^{\star}(\cdot|Z)$. The pseudo variables matrix $(x^{(1)},\ldots,x^{(b)})$ is generated in the same way described in Section \ref{sec-crrt} except that the generation is based on a falsely specific or roughly estimated conditional density $Q(\cdot|Z)$ instead of $Q^{\star}(\cdot|Z)$. The CRRT p-value $p_{CRRT}$ is defined in (\ref{pvalue}). Under the null hypothesis $H_0:Y\perp X|Z$, for any given $\alpha\in[0,1]$, 
$$
\mathbb{P}(p_{CRRT}\leq \alpha|z,y)\leq \alpha+d_{TV}(Q^{\star}_n(\cdot|z),Q_n(\cdot|z)),
$$
$$
\mathbb{P}(p_{CRRT}\leq \alpha)\leq \alpha+\mathbb{E}[d_{TV}(Q^{\star}_n(\cdot|z),Q_n(\cdot|z))],
$$
where the total variation distance is defined as $d_{TV}(Q^{\star}_n(\cdot|z),Q_n(\cdot|z)) = \sup_{A\subset \mathbb{R}^n}|P_Q(x\in A|z)-P_{Q^{\star}}(x\in A|z)|$.
\label{robust1}
\end{theorem}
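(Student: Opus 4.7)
The plan is to reduce the misspecified scenario to one in which exchangeability still holds, via a swap-out coupling argument. Let $\mathcal{L}(\cdot\mid y,z)$ denote the actual joint conditional distribution of $\tilde{x}=(x,x^{(1)},\ldots,x^{(b)})$ under the hypotheses of the theorem, namely $x\sim Q^{\star}_n(\cdot|z)$ together with $x^{(k)}\sim Q_n(\cdot|z)$ iid for $k=1,\ldots,b$. I would introduce a hypothetical conditional distribution $\tilde{\mathcal{L}}(\cdot\mid y,z)$ under which \emph{all} of $x,x^{(1)},\ldots,x^{(b)}$ are drawn iid from $Q_n(\cdot|z)$. Under $\tilde{\mathcal{L}}$, the $b+1$ columns of $\tilde{x}$ are conditionally exchangeable given $(y,z)$, so the proof of Theorem \ref{maintheorem} applies verbatim (via Proposition \ref{xsym} using $X$-symmetry of $T$) and yields
$$
\mathbb{P}_{\tilde{\mathcal{L}}}(p_{CRRT}\leq\alpha\mid y,z)\leq\alpha.
$$

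Next I would bound the difference between the probabilities under $\mathcal{L}$ and $\tilde{\mathcal{L}}$. Because $\{p_{CRRT}\leq\alpha\}$ is a deterministic event depending only on $(y,z,\tilde{x})$, the standard inequality for indicators gives
$$
\bigl|\mathbb{P}_{\mathcal{L}}(p_{CRRT}\leq\alpha\mid y,z)-\mathbb{P}_{\tilde{\mathcal{L}}}(p_{CRRT}\leq\alpha\mid y,z)\bigr|\leq d_{TV}\bigl(\mathcal{L}(\cdot\mid y,z),\tilde{\mathcal{L}}(\cdot\mid y,z)\bigr).
$$
Both $\mathcal{L}(\cdot\mid y,z)$ and $\tilde{\mathcal{L}}(\cdot\mid y,z)$ are product measures on $\mathbb{R}^{n\times(b+1)}$ whose last $b$ factors coincide (each equal to $Q_n(\cdot|z)$), so by the elementary identity $d_{TV}(P_1\otimes R,P_2\otimes R)=d_{TV}(P_1,P_2)$ the TV on the right collapses to $d_{TV}(Q^{\star}_n(\cdot|z),Q_n(\cdot|z))$. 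Combining with the previous display proves the conditional bound, and the unconditional bound follows by taking expectation over $(y,z)$ via the tower property.

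The main conceptual point---where the most natural first attempt goes astray---is the choice of the hypothetical measure. A naive alternative would swap each pseudo column $x^{(k)}$ to a $Q^{\star}$-draw (so that an ``oracle'' CRT distribution is recovered), but then one must couple $b$ independent $Q$-draws against $b$ independent $Q^{\star}$-draws, which only yields a bound of order $b\cdot d_{TV}(Q^{\star}_n,Q_n)$ by tensorization. Swapping in the opposite direction---making $x$ itself a $Q$-draw in the hypothetical---confines the discrepancy to a single coordinate and avoids the factor of $b$. Once this observation is in hand, the remaining ingredients (exchangeability under $\tilde{\mathcal{L}}$, the TV coupling inequality, and the product-measure reduction) are all routine.
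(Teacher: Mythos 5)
Your proposal is correct and follows essentially the same route as the paper's proof: the paper likewise introduces an auxiliary draw $\dot{x}\sim Q_n(\cdot|z)$ in place of the true $x$, applies Theorem \ref{maintheorem} to the resulting fully exchangeable tuple $(\dot{x},x^{(1)},\ldots,x^{(b)})$, and bounds the residual discrepancy by integrating out $(x^{(1)},\ldots,x^{(b)})$ so that only the single swapped coordinate contributes $d_{TV}(Q^{\star}_n(\cdot|z),Q_n(\cdot|z))$. Your closing remark about why swapping in the opposite direction would incur a factor of $b$ is a nice observation but does not change the fact that the argument is the same one the authors give.
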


With the Pinsker's inequality, we have $d_{TV}^2(Q^{\star}_n(\cdot|z),Q_n(\cdot|z))\leq \frac{1}{2}d_{KL}(Q^{\star}_n(\cdot|z),Q_n(\cdot|z))$ (\cite{berrett2019conditional}). Combining it with the results in Theorem \ref{robust1}, we know that controlling the Kullback-Leibler divergence is sufficient to guarantee a small deviation from desired type 1 error level.

\begin{example} (Gaussian Model) We now consider that $\left(\begin{array}{c} X\\Z\end{array}\right)$ is jointly Gaussian. It implies that there exists $\beta\in \mathbb{R}^p$ and $\sigma^2>0$ such that conditional on $z$, $x_i|z\sim N(z_{[i]}^T\beta,\sigma^2)$ independently. 

Suppose that we know the true variance $\sigma^2$ and can estimate $\beta$ by $\hat{\beta}$ from a large amount of unlabeled data of size $N$. For example, $\hat{\beta}$ can be the OLS estimator when $p$ is relatively small, or some estimators based on regularization when $p$ is of a comparative order as $N$. Without the sparsity assumption, under appropriate conditions, we have $||\hat{\beta}-\beta||_2=O_p\left(\sqrt{\frac{p\log p}{N}}\right)$ (\cite{zhang2008sparsity}, \cite{bickel2009simultaneous}, \cite{chernozhukov2018double}). We may simply assume $||\hat{\beta}-\beta||_2=O\left(\sqrt{\frac{p\log p}{N}}\right)$ for convenience. If we additionally assume that $\mathbb{E}|z_{ij}|^2=O(1)$ uniformly for $i=1,\ldots,n$ and $j=1,\ldots,p$, we have
$$
\resizebox{.92\hsize}{!}{$
\begin{array}{ll}
& \left[\mathbb{E}\left[d_{TV}(Q^{\star}_n(\cdot|z),Q_n(\cdot|z))\right]\right]^2 \leq \mathbb{E}\left[d_{TV}^2(Q^{\star}_n(\cdot|z),Q_n(\cdot|z))\right] \leq \frac{1}{2}\mathbb{E}\left[d_{KL}(Q^{\star}_n(\cdot|z),Q_n(\cdot|z))\right]\\

= & \frac{1}{2}\mathbb{E}\left[\sum\limits_{i=1}^n d_{KL}(Q^{\star}_n(\cdot|z_{[i]}),Q_n(\cdot|z_{[i]}))\right]= \frac{1}{2}\mathbb{E}\left[\sum\limits_{i=1}^n \frac{1}{2\sigma^2}(z_{[i]}^T\hat{\beta}-z_{[i]}^T\beta)^2 \right]\\

\leq & \frac{n}{4\sigma^2}\mathbb{E}\left[\|\hat{\beta}-\beta\|_2^2\|z_{[1]}\|_2^2\right]=O\left(\frac{np^2\log p}{N}\right).\\
\end{array}
$}
$$
\label{ex1}
\end{example}

It implies that if $\frac{np^2\log p}{N}$ converges to 0 as $n$ goes to infinity, the type 1 error control would be asymptotically exact. For other models, the conditions may be more complicated. But in general, with sufficiently large amount of unlabeled data, we are able to make the bias of type 1 error control negligible.

Different from the Theorem \ref{robust1}, the following theorem demonstrate the robustness of CRRT from another angle, where the discrepancy between $Q^{\star}(\cdot|z)$ and $Q(\cdot|z)$ will be quantified in the form of observed KL divergence.

\begin{theorem}
(Type 1 Error Upper Bound 2.) Assume the true conditional distribution of $X|Z$ is $Q^{\star}(\cdot|Z)$. The pseudo variables matrix $(x^{(1)},\ldots,x^{(b)})$ is generated in the same way described in section \ref{sec-crrt} except that the generation is based on a falsely specific or roughly estimated conditional density $Q(\cdot|Z)$ instead of $Q^{\star}(\cdot|Z)$. The CRRT p-value $p_{CRRT}$ is defined in (\ref{pvalue}). We define random variables
\begin{equation}
\widehat{KL}_k = \sum\limits_{i=1}\limits^{n}\log \frac{Q^{\star}(x_i|z_{[i]})Q(x_i^{(k)}|z_{[i]})}{Q(x_i|z_{[i]})Q^{\star}(x_i^{(k)}|z_{[i]})},\ k = 1,\ldots,b.
\label{kl}
\end{equation}
Denote the ordered statistics of $\left\{\widehat{KL}_1,\ldots,\widehat{KL}_b\right\}$ by $\widehat{KL}_{(1)}\leq\ldots\leq \widehat{KL}_{(b)}$. For any given  $\varepsilon_1,\ldots,\varepsilon_b$, which are independent of $x$ and $y$ conditional on $z$, and for any given $\alpha\in[0,1]$, under the null hypothesis $H_0:Y\perp X|Z$, 
$$
\mathbb{P}(P_{CRRT}\leq \alpha,\widehat{KL}_{(1)}\leq \varepsilon_1,\ldots,\widehat{KL}_{(b)}\leq \varepsilon_b)\leq \mathbb{E}\left[\frac{\lfloor (b+1)\alpha\rfloor}{1+e^{-\varepsilon_1}+\ldots+e^{-\varepsilon_b}}\right].
$$
Further, we can obtain the following bound directly controlling type 1 error,
\begin{equation}
\mathbb{P}(P_{CRRT}\leq \alpha) \leq \inf\limits_{\varepsilon_1,\ldots,\varepsilon_b} \left\{ \mathbb{E}\left[\frac{\lfloor (b+1)\alpha\rfloor}{1+e^{-\varepsilon_1}+\ldots+e^{-\varepsilon_b}}\right]+ \mathbb{P}\left(\bigcup\limits_{k=1}\limits^{b}\{\widehat{KL}_{(k)}>\varepsilon_k\}\right)\right\}.
\label{klbound}
\end{equation}
\label{thm:robust2}
\end{theorem}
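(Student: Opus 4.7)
The plan is a change-of-measure argument against a family of ``swap'' distributions on the enlarged vector $\tilde x=(x,x^{(1)},\ldots,x^{(b)})$, combined with the $X$-symmetry of $T$ and a trivial rank-counting bound. First I would fix $(y,z)$ and, for each $j\in\{0,1,\ldots,b\}$, introduce the joint density on $\tilde u=(u_0,u_1,\ldots,u_b)$
$$
f^{(j)}(\tilde u\mid y,z)=Q^\star_n(u_j\mid z)\prod_{k\neq j}Q_n(u_k\mid z).
$$
Under the null $H_0:Y\perp X\mid Z$, the true conditional law of $\tilde x$ given $(y,z)$ is $f^\star:=f^{(0)}$, and, with the convention $\widehat{KL}_0=0$, the definition in (\ref{kl}) yields the pointwise likelihood ratio
$$
\frac{f^{(k)}(\tilde u)}{f^\star(\tilde u)}=\exp(-\widehat{KL}_k),\qquad k=0,1,\ldots,b.
$$

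Next I would establish a swap identity. Let $R_k=\#\{j:T_k\leq T_j\}$ and $A_k=\{R_k\leq\lfloor(b+1)\alpha\rfloor\}$, so that $A_0=\{p_{CRRT}\leq\alpha\}$, and check (an elementary case analysis, handling ties correctly) that $\sum_{k=0}^{b}\mathbbm{1}_{A_k}\leq\lfloor(b+1)\alpha\rfloor$. Let $\pi_k$ denote the transposition of coordinates $0\leftrightarrow k$ on $\tilde u$. Then $X$-symmetry gives $T(y,z,\pi_k(\tilde u))=\pi_k\bigl(T(y,z,\tilde u)\bigr)$, from which one verifies $A_0(\pi_k(\tilde u))=A_k(\tilde u)$. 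Since $U\sim f^{(k)}$ implies $\pi_k(U)\sim f^\star$, this gives
$$
\mathbb P_{f^{(k)}}(A_0\mid y,z)=\mathbb P_{f^\star}(A_k\mid y,z).
$$
Summing over $k$, combining with the counting bound, and converting each summand via $\mathbb P_{f^{(k)}}(A_0\mid y,z)=\mathbb E_{f^\star}\bigl[\mathbbm{1}_{A_0}e^{-\widehat{KL}_k}\mid y,z\bigr]$ produces the master inequality
$$
\mathbb E_{f^\star}\!\left[\mathbbm{1}_{A_0}\Bigl(1+\sum_{k=1}^b e^{-\widehat{KL}_k}\Bigr)\,\Big|\,y,z\right]\leq\lfloor(b+1)\alpha\rfloor.
$$

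To get the first inequality I would intersect with $B=\bigcap_{k=1}^b\{\widehat{KL}_{(k)}\leq\varepsilon_k\}$. The rearrangement $\sum_{k=1}^b e^{-\widehat{KL}_k}=\sum_{k=1}^b e^{-\widehat{KL}_{(k)}}\geq\sum_{k=1}^b e^{-\varepsilon_k}$ holds on $B$; since $(\varepsilon_1,\ldots,\varepsilon_b)\perp(x,y)\mid z$, I can condition on $(y,z,\varepsilon)$, restrict the master inequality to $B$, and pull the now-deterministic denominator out to obtain
$$
\mathbb P(A_0\cap B\mid y,z,\varepsilon)\leq\frac{\lfloor(b+1)\alpha\rfloor}{1+\sum_{k=1}^b e^{-\varepsilon_k}};
$$
taking the outer expectation proves the first inequality. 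The second bound (\ref{klbound}) follows directly from $\mathbb P(A_0)\leq\mathbb P(A_0\cap B)+\mathbb P(B^c)$ with $\mathbb P(B^c)=\mathbb P\bigl(\bigcup_{k=1}^b\{\widehat{KL}_{(k)}>\varepsilon_k\}\bigr)$, followed by an infimum over $(\varepsilon_1,\ldots,\varepsilon_b)$.

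The main obstacle is the swap identity $\mathbb P_{f^{(k)}}(A_0\mid y,z)=\mathbb P_{f^\star}(A_k\mid y,z)$, which must simultaneously use (i) the null-hypothesis factorization that detaches $y$ from the $\tilde x$-distribution, (ii) $X$-symmetry to translate a coordinate swap on $\tilde u$ into the corresponding swap of $T$-coordinates (and hence of the rank events $A_k$), and (iii) the precise form of the likelihood ratio $e^{-\widehat{KL}_k}=f^{(k)}/f^\star$ dictated by (\ref{kl}). Everything after that reduces to a pointwise lower bound on the decreasing function $t\mapsto e^{-t}$ along the order statistics of $\widehat{KL}$ and an elementary union bound.
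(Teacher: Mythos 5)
Your proposal is correct, and it reaches the stated bounds by a genuinely different route than the paper. The paper conditions on the \emph{unordered} set $\{u_0,\ldots,u_b\}$ of $\{x,x^{(1)},\ldots,x^{(b)}\}$ together with $(y,z)$: it first proves a combinatorial claim (after a ``no ties'' reduction) that the rank of $T_0$ is a deterministic one-to-one function of which element of the unordered set $x$ equals, then computes the exact conditional probability that $T_0$ ranks $s$-th as proportional to the density ratio $Q^{\star}(u_{s-1}|z)/Q(u_{s-1}|z)$, bounds these ratios on the event $\{\widehat{KL}_{(k)}\leq\varepsilon_k,\,\forall k\}$, and extracts the bound from $1=\sum_r\mathbb{P}(T_0\text{ ranks }r\text{-th}\mid\cdot)$. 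You instead integrate out the order statistics entirely and run a change-of-measure/weighted-exchangeability argument: the tilted laws $f^{(k)}$, the transposition identity $\mathbb{P}_{f^{(k)}}(A_0)=\mathbb{P}_{f^{\star}}(A_k)$ obtained from $X$-symmetry, the pointwise likelihood ratio $f^{(k)}/f^{\star}=e^{-\widehat{KL}_k}$, and the deterministic counting bound $\sum_{k=0}^{b}\mathbbm{1}_{A_k}\leq\lfloor(b+1)\alpha\rfloor$. All of these steps check out (in particular your counting bound survives ties, and the possible failure of $f^{(k)}\ll f^{\star}$ only helps, since $\mathbb{E}_{f^{\star}}[\mathbbm{1}_{A_0}e^{-\widehat{KL}_k}]\leq\mathbb{P}_{f^{(k)}}(A_0)$ in the needed direction). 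Your argument is shorter and dispenses with the paper's Claim 1 and its almost-sure-distinctness reduction; what the paper's heavier conditioning buys is a bound that holds conditionally on $(u_0,\ldots,u_b,z,y)$, which it then reuses in the appendix to prove the two-sided generalization (its Theorem on Type 1 Error Upper Bound 3) with both lower thresholds $\eta_k$ and upper thresholds $\varepsilon_k$; recovering that refinement from your unconditional swap argument would require redoing the tilting for each rank position rather than only for $A_0$. The only point to state explicitly in a final write-up is that the $\varepsilon_k$ must be independent of the whole enlarged vector $(x,x^{(1)},\ldots,x^{(b)},y)$ given $z$ so that conditioning on them does not disturb the law $f^{\star}$ --- the same measurability caveat the paper itself relies on.
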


\noindent {\bf Remark:}
\begin{itemize}
\item We can see that, when we have the access to the accurate conditional density, i.e. $Q(\cdot|Z) \equiv Q^{\star}(\cdot|Z)$, $\widehat{KL}_1,\ldots,\widehat{KL}_b$ will be all zero. The last bound for type 1 error in Theorem \ref{thm:robust2} would be $\frac{\lfloor (b+1)\alpha\rfloor}{b+1}$, which is nearly exact when $b$ is sufficiently large. 
\item Though at first glance, it seems that the bound given in (\ref{klbound}) may increase with increasing $b$ and may even grow to 1 due to the second term on the righthand side, actually we can show that for any small $\epsilon>0$, the bound on the righthand side of (\ref{klbound}) can be asymptotically controlled from above, as $b$ grows to infinity, by a function $g(Q,Q^{\star},\alpha)$, which equals to $\alpha(1+\epsilon)$ when $Q=Q^{\star}$. For more details and a generalization of Theorem \ref{thm:robust2}, see the supplement. 
\item In Theorem \ref{robust1} and \ref{thm:robust2}, the proposed conditional density $Q(\cdot|Z)$ cannot depend on $(y,z,x)$. The source of $Q(\cdot|Z)$ is usually a combination of domain knowledge and unlabeled data. It would be interesting to see what theoretical bound we can have when $Q(\cdot|Z)$ is estimated from $(y,z,x)$. 
\end{itemize}

\begin{example}(Gaussian Model) We continue to follow the settings and assumptions in Example \ref{ex1}. For $k=1,\ldots,b$, we can simplify $\widehat{KL}_k$ as 
$$
\widehat{KL}_k=\frac{1}{\sigma^2}(x-x^{(k)})^Tz(\beta-\hat{\beta})=\frac{1}{\sigma^2}(x^{(k)}-z\beta)^Tz(\hat{\beta}-\beta)-\frac{1}{\sigma^2}(x-z\beta)^Tz(\hat{\beta}-\beta).
$$
Denote $M_0\triangleq\frac{1}{\sigma^2}(x-z\beta)^Tz(\hat{\beta}-\beta)$ and $M_k\triangleq\frac{1}{\sigma^2}(x^{(k)}-z\beta)^Tz(\hat{\beta}-\beta)$ for $k=1,\ldots,b$. Besides, denote $H\triangleq (\hat{\beta}-\beta)^Tz^Tz(\hat{\beta}-\beta)$. Under our assumptions, $\mathbb{E}H=O\left(\frac{np^2\log p}{N}\right)$. We know that $M_0|z\sim N(0,\frac{1}{\sigma^2}H)$, $M_1|z,\ldots,M_k|z \sim N(\frac{1}{\sigma^2}H,\frac{1}{\sigma^2}H)$ and they are conditionally independent. Now, we assume that $\frac{np^2\log p}{N}$ converges to $0$ as $n$ goes to infinity, which is the same as in the Example \ref{ex1}. 

First, consider that $b$ is bounded, that is, $b$ does not go to infinity as $n$ goes to infinity. We let $\varepsilon_1=\ldots=\varepsilon_k=\sqrt{\frac{np^2\log p}{N}}f(n)$, where $f(n)$ satisfies that $\lim\limits_{n\rightarrow \infty}f(n)=\infty$ and $\lim\limits_{n\rightarrow \infty}\sqrt{\frac{np^2\log p}{N}}f(n)=0$. Then, according to (\ref{klbound}), we have
$$
\resizebox{.92\hsize}{!}{$
\begin{array}{ll}
& \mathbb{P}\left(P_{CRRT}\leq \alpha\right) \leq \frac{\lfloor (b+1)\alpha \rfloor}{1+b\times exp\left(-\sqrt{\frac{np^2\log p}{N}}f(n)\right)}+\mathbb{P}\left(\bigcup\limits_{k=1}\limits^{b}\left\{\widehat{KL}_{(k)}>\sqrt{\frac{np^2\log p}{N}}f(n)\right\}\right)\\

\leq &  \frac{\lfloor (b+1)\alpha \rfloor}{1+b\times exp\left(-\sqrt{\frac{np^2\log p}{N}}f(n)\right)}+ \mathbb{P}\left(\max\limits_{k=1,\ldots,b}M_k-M_0>\sqrt{\frac{np^2\log p}{N}}f(n) \right).\\
\end{array}
$}
$$

The difference between the first term and $\frac{\lfloor (b+1)\alpha \rfloor}{1+b}$ obviously converges to 0 because $\frac{np^2\log p}{N}f(n)$ converges to $0$. As for the second term, note that
$$
\resizebox{.92\hsize}{!}{$
\mathbb{P}\left(\max\limits_{k=1,\ldots,b}M_k-M_0>\sqrt{\frac{np^2\log p}{N}}f(n) \right) = \mathbb{P}\left(\max\limits_{k=1,\ldots,b}\sqrt{\frac{N}{np^2\log p}}M_k-\sqrt{\frac{N}{np^2\log p}}M_0>f(n) \right),
$}
$$
where $\sqrt{\frac{N}{np^2\log p}}M_0,\sqrt{\frac{N}{np^2\log p}}M_1,\ldots, \sqrt{\frac{N}{np^2\log p}}M_b=O_p(1)$. Besides, considering that $f(n)\rightarrow \infty$ and $b$ is bounded, we can conclude that the second term converges to $0$ as $n\rightarrow \infty$. Therefore, we have
$$
\mathbb{P}\left(P_{CRRT}\leq \alpha\right) - \frac{\lfloor (b+1)\alpha \rfloor}{1+b} \rightarrow 0,
$$
as $n\rightarrow \infty$. As for the case of $\lim\limits_{n\rightarrow \infty}b = \infty$, we relegate it to the supplement.

\end{example}

The above 2 theorems guarantee the robustness of the CRRT. To control the type 1 error at a tolerable level, it is sufficient to control the total variation or the observed KL divergence of $Q^{\star}(\cdot|Z)$ and $Q(\cdot|Z)$. Then, it is natural to ask a conjugate question. Whether controlling the total variation or the observed KL divergence is necessary? The following 2 theorems give positive answers.

%%%%%%%%%%%%%%
\begin{theorem}
(Type 1 Error Lower Bound 1.) Assume the null hypothesis $H_0:Y\perp X|Z$ holds. Suppose the true conditional distribution $Q^{\star}(\cdot|Z)$, the proposed conditional density $Q(\cdot|Z)$ used to sample pseudo variables, and the desired type 1 error level $\alpha\in(0,1]$ are fixed. Suppose $x\sim Q^{\star}(\cdot|z)$, $x^{(k)}\sim Q(\cdot|z),k=1,\ldots,b,$ independently. Suppose that $X|Z$ is a continuous variable under both of $Q(\cdot|Z)$ and $Q^{\star}(\cdot|Z)$. Let $A(z)\subseteq\mathbb{R}^n$ such that 
$$
d_{TV}(Q^{\star}_n(\cdot|z),Q_n(\cdot|z)) = \mathbb{P}_{Q_n^{\star}}(x\in A(z)|z) -\mathbb{P}_{Q_n}(x\in A(z)|z).
$$
For any fixed $0<\varepsilon<\alpha$, when $b$ is sufficiently large, we have
$$
\mathbb{P}(p_{CRRT}\leq \alpha|y,z) \ge \left[\alpha-\varepsilon+d_{TV}(Q^{\star}_n(\cdot|z),Q_n(\cdot|z)) f(z,Q,Q^{\star},\alpha,\varepsilon) \right]\times \left(1-o(1) \right),
$$
where $ f(z,Q,Q^{\star},\alpha,\varepsilon)=\mathbbm{1}\{\alpha_0(z)\ge \alpha-\varepsilon\}\frac{\alpha-\varepsilon}{\alpha_0(z)}+  \mathbbm{1}\{\alpha_0(z)< \alpha-\varepsilon\}\frac{1-(\alpha-\varepsilon)}{1-\alpha_0(z)}$, $\alpha_0(z)=\mathbb{P}_{Q_n}(x\in A(z)|z)$, and the $o(1)$ term decays to 0 exponentially as $b$ grows to infinity.

\label{lowerbound1}
\end{theorem}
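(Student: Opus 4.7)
The plan is to prove this tightness result by explicitly constructing one X-symmetric test statistic $T$ whose CRRT rejection probability matches the displayed lower bound; the theorem is naturally read as an existence claim over all admissible $T$, so exhibiting a single good $T$ suffices.

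First, I would build a randomized rejection set $B(z)\subset\mathbb{R}^n$ of $Q_n(\cdot|z)$-measure exactly $\alpha-\varepsilon$ that captures as much $Q_n^{\star}(\cdot|z)$-mass as possible, piggybacking on the TV-maximizer $A(z)$. Introduce auxiliary i.i.d.\ uniforms $V_0,\ldots,V_b$ on $[0,1]$, independent of everything else, and define for each $k$
$$
B_k(z)=\begin{cases}\{x^{(k)}\in A(z)\}\cap\{V_k\leq (\alpha-\varepsilon)/\alpha_0(z)\} & \text{if } \alpha_0(z)\geq\alpha-\varepsilon,\\ \{x^{(k)}\in A(z)\}\cup\left(\{x^{(k)}\notin A(z)\}\cap\{V_k\leq \tfrac{(\alpha-\varepsilon)-\alpha_0(z)}{1-\alpha_0(z)}\}\right) & \text{if } \alpha_0(z)<\alpha-\varepsilon.\end{cases}
$$
A direct calculation gives $Q_n(B_k(z)|z)=\alpha-\varepsilon$, while $Q_n^{\star}(B_k(z)|z)=(\alpha-\varepsilon)+f(z,Q,Q^{\star},\alpha,\varepsilon)\cdot d_{TV}(Q_n^{\star}(\cdot|z),Q_n(\cdot|z))$, precisely the bracketed expression in the theorem with the same case split.

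Second, I would promote the indicator into an X-symmetric continuous statistic. Let $U_0,\ldots,U_b$ be another independent i.i.d.\ uniform sequence and set $T_k=\mathbbm{1}\{x^{(k)}\in B_k(z)\}+\delta U_k$ for a small $\delta\in(0,1)$. Viewing the pair $(V_k,U_k)$ as attached to the $k$-th column of $\tilde{x}$ (so that column permutations carry $(V_k,U_k)$ along), $T$ is X-symmetric in the randomized-test sense, and the exchangeability of the pseudo-$T_k$'s under $Q$ is preserved. Because $\delta<1$, the indices with $M_k:=\mathbbm{1}\{x^{(k)}\in B_k(z)\}=1$ occupy exactly the top ranks, ties broken uniformly by $U_k$. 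Under $H_0$ and conditional on $(y,z)$, $M_1,\ldots,M_b$ are i.i.d.\ $\mathrm{Bernoulli}(\alpha-\varepsilon)$ while $M_0\sim\mathrm{Bernoulli}(Q_n^{\star}(B_0(z)|z))$, and the rejection event ($T_0$ in the top $k^*:=\lfloor\alpha(b+1)\rfloor$) is equivalent to $\{M_0=1\}\cap\{N:=\sum_{k\geq 1}M_k\leq k^*-1\}$.

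Finally, a Chernoff bound on $N\sim\mathrm{Binomial}(b,\alpha-\varepsilon)$ (using $k^*\approx \alpha b$) yields $\mathbb{P}(N\geq k^*\mid y,z)\leq e^{-c(\varepsilon) b}$ for some $c(\varepsilon)>0$, so
$$
\mathbb{P}(p_{CRRT}\leq\alpha\mid y,z)\geq Q_n^{\star}(B_0(z)|z)-e^{-c(\varepsilon) b}=\bigl[\alpha-\varepsilon+d_{TV}\cdot f\bigr]\bigl(1-o(1)\bigr),
$$
with the $o(1)$ decaying exponentially in $b$. The hardest part is carefully handling X-symmetry in the presence of external randomness: the paper's definition is sample-path, so a randomized test requires extending the data record to $((x^{(k)},V_k,U_k))$ and re-verifying the permutation property there; once this enlargement is accepted, the remainder is a Neyman--Pearson-style measure computation together with standard Binomial concentration, and the continuity assumption on $X|Z$ under both $Q$ and $Q^{\star}$ is what makes $A(z)$ and the $B_k(z)$'s well-defined without atom problems.
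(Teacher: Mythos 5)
Your proposal follows the same skeleton as the paper's proof: build from the total-variation maximizer $A(z)$ a set of exact $Q_n(\cdot|z)$-measure $\alpha-\varepsilon$ that captures at least the ``proportional'' share of $Q^{\star}_n$-mass (with the same case split on $\alpha_0(z)\ge\alpha-\varepsilon$ versus $\alpha_0(z)<\alpha-\varepsilon$, yielding exactly the factor $f$), take indicator test statistics, reduce the rejection event to $\{M_0=1\}\cap\{\mathrm{Binomial}(b,\alpha-\varepsilon)\le\lfloor(b+1)\alpha\rfloor-1\}$, and kill the binomial tail with an exponential concentration bound (you use Chernoff, the paper uses Bennett; both give the exponentially decaying $o(1)$). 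The one genuine difference is that you realize the measure-$(\alpha-\varepsilon)$ set by external randomization (thinning/augmenting $A(z)$ with auxiliary uniforms $V_k$, plus tie-breakers $U_k$), whereas the paper chooses a deterministic subset $\tilde A(z)\subseteq A(z)$ (resp.\ superset) maximizing $Q^{\star}_n$-mass, whose existence is exactly what the continuity assumption buys; since the supremum over such deterministic sets is at least the average your randomization achieves, the paper gets the same bound without leaving the class of X-symmetric functions $T(y,z,\tilde x)$. Your route therefore carries the extra burden you yourself flag --- extending the X-symmetry definition to randomized statistics --- which is avoidable; and the tie-breaking uniforms are also unnecessary, because with pure indicator statistics the event $\{T_0=1,\ \sum_{k\ge1}T_k\le(b+1)\alpha-1\}$ already implies $p_{CRRT}\le\alpha$ under the paper's p-value convention, which is the only direction the lower bound needs. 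Modulo either de-randomizing as above or rigorously enlarging the data record, your argument is correct.
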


It is easy to see that $f(z,Q,Q^{\star},\alpha,\varepsilon)$ is bounded from below by $\max\{\alpha-\varepsilon,1-(\alpha-\varepsilon)\}$. Therefore, the extra error due to misspecification of conditional distribution is at least linear in $d_{TV}(Q^{\star}_n(\cdot|z),Q_n(\cdot|z))$. It demonstrates that when we apply the CRRT, we need to make sure that we propose a conditional distribution with a small bias. Otherwise, bad choice of $X$-symmetric test function $T$ could lead to considerable type 1 error. It is also intuitive that if observed KL divergence is large with high probability, the CRRT would fail to give a good type 1 error control. The following theorem formally demonstrate this point.

%%%%%%%%%%%%%%
\begin{theorem}
(Type 1 Error Lower Bound 2.) Assume the null hypothesis $H_0:Y\perp X|Z$ holds. Suppose the true conditional distribution $Q^{\star}(\cdot|Z)$, the proposed conditional density $Q(\cdot|Z)$ used to sample pseudo variables, and the targeted type 1 error level $\alpha\in[0,1]$ are fixed. Suppose that $X|Z$ is a continuous variable under both of $Q(\cdot|Z)$ and $Q^{\star}(\cdot|Z)$. Denote $\lambda=\lfloor(b+1)\alpha\rfloor$. Suppose the following inequality holds when $x\sim Q^{\star}(\cdot|z)$, $x^{(k)}\sim Q(\cdot|z),k=1,\ldots,b,$ independently,
\begin{equation}
\mathbb{P}(\widehat{KL}_k\ge 0,k=1,\ldots,b,\ and,\ \widehat{KL}_{(\lambda)}\ge \varepsilon)\ge c,
\label{conditionc}
\end{equation}
where $\varepsilon$ and $c$ are some nonnegative values, $\widehat{KL}_k,k=1,\ldots,b$ are defined in (\ref{kl}). Then there exists an X-symmetric function $T$ and a conditional distribution $P_{Y|Z}$, such that:
\begin{enumerate}
\item If $x,x^{(1)},\ldots,x^{(b)}\sim_{i.i.d.}Q^{\star}(\cdot|z)$, CRRT has near-exact type 1 error, i.e.,
$$
\mathbb{P}(P_{CRRT}\leq \alpha) = \frac{\lfloor (b+1)\alpha\rfloor}{b+1}.
$$
\item If $x\sim Q^{\star}(\cdot|z)$, $x^{(k)}\sim Q(\cdot|z),k=1,\ldots,b,$ independently, we have a lower bound for type 1 error,
$$
\mathbb{P}(P_{CRRT}\leq \alpha) \ge \frac{\lfloor (b+1)\alpha\rfloor}{b+1} + c\left(1-\frac{\lfloor (b+1)\alpha\rfloor}{b+1}\right)(1-e^{-\varepsilon}).
$$
\end{enumerate}
\label{lowerbound2}
\end{theorem}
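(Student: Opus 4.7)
My plan is to construct an explicit $T$ and take a trivial $P_{Y|Z}$ (say $Y\equiv 0$, so the null $Y\perp X\mid Z$ is automatic) and then verify both statements. The natural test function is the per-column likelihood ratio $T_k=\prod_{i=1}^n Q^{\star}(\tilde{x}_{ik}\mid z_{[i]})/Q(\tilde{x}_{ik}\mid z_{[i]})$ (with $\tilde{x}_{i0}=x_i$); since $T_k$ depends only on column $k$ of $\tilde{x}$, the vector $T$ is X-symmetric.

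Statement~1 is immediate: under the all-$Q^\star$ scenario the columns of $\tilde{x}$ are i.i.d., so $(T_0,\ldots,T_b)$ is exchangeable, and Proposition~\ref{xsym} gives that the rank of $T_0$ is uniform on $\{1,\ldots,b+1\}$, yielding $\mathbb{P}(p_{CRRT}\le\alpha)=\lfloor(b+1)\alpha\rfloor/(b+1)$ exactly.

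For Statement~2 I would condition on the unordered multiset $S=\{v_0,\ldots,v_b\}$ with $v_0=x$. Writing $r_j=Q^\star(v_j\mid z)/Q(v_j\mid z)$, a standard importance-weighting calculation shows that under scenario~B, given $S$, the true $x$ occupies the $j$-th element with probability $r_j/\sum_{j'}r_{j'}$, so
$$\mathbb{P}_B(p_{CRRT}\le\alpha\mid S)=\frac{\sum_{j\in\text{top-}\lambda(S)}r_j}{\sum_{j'}r_{j'}}\ge\frac{\lambda}{b+1}=\alpha_0,$$
which supplies the base term. To extract the extra $c(1-\alpha_0)(1-e^{-\varepsilon})$, I would restrict to the event $E^B$ of probability $\ge c$: on $E^B$, $r_0$ is strictly largest and $\widehat{KL}_{(\lambda)}\ge\varepsilon$ translates to the $\lambda$-th largest of $r_1,\ldots,r_b$ being at most $r_0 e^{-\varepsilon}$, so the $b-\lambda+1$ smallest among $r_0,\ldots,r_b$ are each bounded by $r_0 e^{-\varepsilon}$. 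The strategy is then to show that on $E^B$ the conditional rejection probability is at least $1-(1-\alpha_0)e^{-\varepsilon}$ and take expectations against $\mathbb{P}_B(E^B)\ge c$.

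The main obstacle is obtaining the constant $(1-\alpha_0)$ rather than the cruder $(b-\lambda+1)$ that comes out of a naive estimate $\sum_{\text{top}}/R\ge 1/(1+(b-\lambda+1)e^{-\varepsilon})$. I expect closing this gap will require either (i) a finer exchangeable symmetrization over the multiset that distinguishes the $\lambda-1$ non-max ``top'' weights from the $b-\lambda+1$ genuinely small bottom weights, or (ii) augmenting $T$ with an exchangeable uniform randomization encoded in a non-trivial $P_{Y|Z}$ (attaching an auxiliary $U_k$ to each column through the per-row entries of $y$) so that the $\alpha_0$ base rate is produced by pure randomization and the $E^B$ boost enters additively via inclusion--exclusion, reproducing the product form $c(1-\alpha_0)(1-e^{-\varepsilon})$ in the claim. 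I would pursue option (ii) because it naturally yields the product structure in the target bound; Statement~1 is preserved by the permutation symmetry of the added randomization, so only Statement~2's more delicate computation remains.
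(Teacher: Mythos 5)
Your construction (the likelihood-ratio statistic $T_k=\prod_{i=1}^n Q^{\star}(\tilde{x}_{ik}\mid z_{[i]})/Q(\tilde{x}_{ik}\mid z_{[i]})$ with a trivial $P_{Y|Z}$), your argument for Statement 1, and your conditional-on-multiset identity $\mathbb{P}_B(p_{CRRT}\le\alpha\mid S)=\sum_{j\in\text{top-}\lambda}r_j/\sum_{j'}r_{j'}$ all coincide with the paper's proof. But the gap is exactly where you flag it, and neither of your escape routes closes it. The intermediate inequality you propose---that on $E^B$ the conditional rejection probability is at least $1-(1-\alpha_0)e^{-\varepsilon}$ with $\alpha_0=\lambda/(b+1)$---is false: take weights $r_0=\dots=r_{\lambda-1}=1$ and $r_{\lambda}=\dots=r_b=e^{-\varepsilon}$ (perturbed to be distinct), a configuration compatible with $E^B$; then $1-\pi(S)=(b+1-\lambda)e^{-\varepsilon}/\bigl(\lambda+(b+1-\lambda)e^{-\varepsilon}\bigr)$, which strictly exceeds $(1-\alpha_0)e^{-\varepsilon}$ for every $\varepsilon>0$. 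Your option (ii), encoding auxiliary randomization in $P_{Y|Z}$, is a speculative detour the paper does not need and that sits awkwardly with hypothesis (\ref{conditionc}), which constrains the $\widehat{KL}$'s rather than $T$.

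The missing idea is that the factor $(1-\alpha_0)$ is not obtained pointwise in $\pi(S)$, but only after pairing the excess $\pi(S)-\alpha_0$ with the conditional probability of $E^B$ given the multiset, which equals $\mathbbm{1}\{\text{config}\}\,r_{\max}/\sum_j r_j$ because on $E^B$ the true $x$ must occupy the maximal weight. The paper writes
$$
\pi(S)-\alpha_0=\left[\bigl(\pi(S)-\alpha_0\bigr)\frac{\sum_j r_j}{r_{\max}}\right]\cdot\frac{r_{\max}}{\sum_j r_j},
$$
and shows by an explicit minimization over ordered weights $M_0\ge\dots\ge M_b\ge0$ with $M_0/M_{\lambda}\ge e^{\varepsilon}$ that the bracketed quantity is at least $(1-\alpha_0)(1-e^{-\varepsilon})$, the minimum being attained essentially at the configuration above. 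Taking expectations, the leftover factor $\mathbbm{1}\{\text{config}\}\,r_{\max}/\sum_j r_j$ integrates to exactly $\mathbb{P}(\widehat{KL}_k\ge0,\ k=1,\dots,b,\ \widehat{KL}_{(\lambda)}\ge\varepsilon)\ge c$, which yields the claimed bound. With this single algebraic step your outline becomes the paper's proof; without it, only the cruder $1/(1+(b-\lambda+1)e^{-\varepsilon})$-type estimate you already dismissed is available.
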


Actually, there could be many alternatives way to state the above theorem. We can replace the key condition ($\ref{conditionc}$) with a more general form like
$$
\mathbb{P}(\widehat{KL}_{(k)}\ge\varepsilon_k,k=1,\ldots,b)\ge c,
$$
where $0 \leq \varepsilon_1\leq \ldots \leq \varepsilon_b$, and then derive a corresponding lower bound for the type 1 error. 

%%%%%%%%%%%%%%%%%%%%%%%%%%%%%%%%%%

\section{Connections Among Tests Based on Conditional Sampling}
\label{connect}
As we have mentioned, model-X methods can effectively use information contained in the (conditional) distribution of covariates and can shift the burden from learning the model of response on covariates to estimating the (conditional) distribution of covariates. They would be a very suitable choice when we have little knowledge about how the response relies on covariates but know more about how the covariates interact with each other. When it comes to the conditional independence testing problem, following the spirit of model-X, there exist several methods based on conditional sampling, including Conditional Randomization Test (\cite{candes2018panning}), Conditional Permutation Test (\cite{berrett2019conditional}) and the previously introduced Conditional Randomization Rank Test. For completeness, we also propose Conditional Permutation Rank Test, which will be defined in the current section later soon. In this section, we discuss the connections among them.

To make the structure complete, let us briefly review the CRT and CPT. We still suppose that $(y,z,x)$ consists of $n$ i.i.d. copies of random variables $(Y,Z,X)$, where $y$ is of dimension $n\times 1$, $z$ is of dimension $n\times p$ and $x$ is of dimension $n\times 1$. Suppose that we believe the conditional density of $X|Z$ is $Q(\cdot|Z)$. This conditional distribution is the key to the success of conditional tests. It could be either precisely true or biased. As shown in \cite{berrett2019conditional}, as long as $Q$ has a moderate deviation from the true underlying conditional distribution, say $Q^{\star}(\cdot|Z)$, we are able to control the type 1 error. \\

\noindent {\bf $\bullet$ Comparing CRRT and CRT}

The CRT independently samples $b$ pseudo vectors $x^{(1)},\ldots,x^{(b)}$ in the same way as the CRRT. For each $k=1,2,\ldots,b$, $x^{(k)}$ is of dimension $n$, with entries $x_i^{(k)}\sim Q(\cdot|z_{[i]})$ independently, $i=1,2,\ldots,n$. Let $T_m=T(y,z,x)$ be any pre-defined function mapping from $\mathbb{R}^{n}\times \mathbb{R}^{n\times p}\times \mathbb{R}^n$ to $\mathbb{R}$. Here, the requirement of 'pre-defined' is not strict. $T_m$ only needs to be independent with $x$. Then, the p-values of the CRT is defined as
\begin{equation}
p_{CRT}=\frac{1+\sum\limits_{k=1}\limits^b \mathbbm{1}\{T_m(y,z,x^{(k)})\ge T_m(y,z,x)\}}{1+b}.
\label{pcrt}
\end{equation}
It is easy to tell that the CRT is essentially a subclass of the CRRT because the CRT requires that the test statistic $T_m$ is same for $x,x^{(1)},\ldots,x^{(b)}$ while the CRRT relaxes such restriction to $X$-symmetry. Compared with the CRT, the CRRT enjoys greater flexibility and therefore can have greater efficiency. We will give some concrete examples to demonstrate this point in our empirical experiments. Here, we show an intuitive toy example. Suppose that we try to use the LASSO estimator as our test statistics for the CRT. Let
$$
(\hat{\beta}_1^T(y,z,x),\hat{\beta}_2(y,z,x))^T = \mathop{\arg\min}\limits_{\beta_1\in \mathbb{R}^p,\beta_2\in \mathbb{R}}||y-z\beta_1-x\beta_2||_2^2+\hat{\lambda}(||\beta_1||_1+||\beta_2||_1),
$$
where $\hat{\lambda}$ can be given by cross-validation based on $(y,z,x)$ or be fixed before analyzing data. For $k=0,1,2,\ldots,b$, let $T_m(y,z,x^{(k)})=\hat{\beta}_2(y,z,x^{(k)})$. Suppose the dimension of $Z$ is $p=n$ and the number of pseudo variables $b=n$. According to Efron, Hastie, Johnstone and Tibshirani (2004), the time complexity of calculating $b+1$ test statistics will be $O(n^4)$. In contrast, if we apply the CRRT with a similar LASSO statistics, we can see a significant save of time. Let
$$
(\hat{\beta}_3^T(y,z,\tilde{x}),\hat{\beta}^T_4(y,z,\tilde{x}))^T = \mathop{\arg\min}\limits_{\beta_3\in \mathbb{R}^p,\beta_4\in \mathbb{R}^{b+1}}||y-z\beta_3-\tilde{x}\beta_4||_2^2+\hat{\lambda}(||\beta_1||_3+||\beta_2||_4),
$$
where $\hat{\lambda}$ can be given by cross-validation based on $(y,z,\tilde{x})$ or be fixed before analyzing data. Then, we can define test statistics
$$
T(y,z,\tilde{x})=(T_0(y,z,\tilde{x}),\ldots,T_b(y,z,\tilde{x}))^T=\hat{\beta}_4(y,z,\tilde{x}),
$$
whose calculation would cost $O(n^3)$ time. Hence, the order of time complexity  is 
smaller by a factor of $\Theta(n)$. 

In addition to the above LASSO example, there are numerous commonly-seen examples we can use to explain CRRT's computational advantage. \cite{louppe2014understanding} showed that the time complexity of CART is $\Theta(pn\log^2 n)$ in average case, where $p$ is the number of variables and $n$ is the number of samples. Suppose $p=n$ and let the number of pseudo variables $b=n$ in both of CRT and CRRT. The total time complexity of CRT is $\Theta(n^3\log^2 n)$ while the complexity of CRRT is $\Theta(n^2\log^2 n)$. Again, CRRT outperforms CRT by a factor of $\Theta(n)$.\\

\noindent {\bf $\bullet$ Connecting CRRT with CPT}

The CPT proposed by \cite{berrett2019conditional} can be viewed as a robust version of the CRT. It inherits the spirit of traditional permutation tests and adopts a pseudo variable generation scheme different from the CRT. Denote the ordered statistics of $x$ by $x_{ordered}=(x_{(1)},\ldots,x_{(n)})^T$, where $x_{(1)}\leq \ldots \leq x_{(n)}$. Conditional on $x_{ordered}$ and $z$, define a conditional distribution $R_n(\cdot|z,x_{ordered})$ induced by $Q$:
$$
R_n(v|z,x_{ordered})=\left\{
\begin{array}{ll}
\frac{Q_n(v|z)}{\sum\limits_{\dot{v}\in\Gamma(x_{ordered})}Q_n(\dot{v}|z)}, & v\in\Gamma(x_{ordered}),\\
0, & v\notin\Gamma(x_{ordered}),\\
\end{array}
\right.
$$
where $\Gamma(x_{ordered})\triangleq \{x_{permute(\sigma)}:\sigma \in \Sigma_n\}$ and $\Sigma_n$ is the collection of all permutations of $\{1,2,\ldots,n\}$. Similarly, we can also define $R^{\star}_n(v|z,x_{ordered})$ induced by $Q^{\star}$. Let $T_m=T(y,z,x)$ be any function from $\mathbb{R}^{n}\times \mathbb{R}^{n\times p}\times \mathbb{R}^n$ to $\mathbb{R}$. The requirements of $T_m$ is same as those in the CRT. The p-values of the CPT is defined in the same way as the CRT. For the CRT, if $Q$ specifies the true conditional distribution, under the null hypothesis that $H_0:Y\perp X|Z$, we can see that $T_m(y,z,x^{(0)}),\ldots,T_m(y,z,x^{(b)})$ are i.i.d. conditional on $y$ and $z$. Such exchangeability guarantees the control of type 1 error. But for the CPT, we need to take a further step to condition on $x_{ordered},y$ and $z$ to ensure exchangeability. As usual, more conditioning, more robustness and correspondingly, less power. \cite{berrett2019conditional} has some good examples to illustrate these points. The following two propositions, combined with the Theorem 5 in \cite{berrett2019conditional} can give us some clues on the robustness provided by the extra conditioning.

%%%%%%%%%%%%%%
\begin{proposition}
Assume the null hypothesis $H_0:Y\perp X|Z$ holds. Let $x\sim Q_n^{\star}(\cdot|z)$, conditional on $z$. And, conditional on $x_{ordered}$ and $z$, $x^{(1)},\ldots,x^{(b)}\sim R_n(\cdot|z,x_{ordered})$ independently. There exists a statistic $T_m(y,z,x):\mathbb{R}^n\times \mathbb{R}^{n\times p}\times\mathbb{R}^n \rightarrow \mathbb{R}$ such that, for the CPT, the following lower bound holds.
$$
\begin{array}{cl}
 &\mathbb{E}\left[\sup\limits_{\alpha\in[0,1]}(\mathbb{P}(p_{CPT}\leq \alpha|y,z,x_{ordered})-\alpha)|y,z \right] \\
 \ge& \mathbb{E}\left[d_{TV}(R^{\star}_n(\cdot|z,x_{ordered}),R_n(\cdot|z,x_{ordered}))|z\right]- \frac{(1+o(1))}{2}\sqrt{\frac{\log b}{b}}
\end{array}
$$
as $b\rightarrow \infty$, for a.e. $y\in \mathbb{R}^n$, $z\in \mathbb{R}^{n\times p}$, where $x_{ordered}$ is the set of ordered statistics of $x$.
\label{cptlower}
\end{proposition}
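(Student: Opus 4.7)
The strategy is to design an adversarial binary test statistic $T_m$ from the set that witnesses the total variation distance between $R_n^\star(\cdot \mid z, x_{ordered})$ and $R_n(\cdot \mid z, x_{ordered})$, reducing the analysis of the CPT p-value to a one-sided Binomial tail probability. Concretely, for each $(z, x_{ordered})$ pick a set $A = A(z, x_{ordered}) \subseteq \Gamma(x_{ordered})$ attaining
$$
R_n^\star(A \mid z, x_{ordered}) - R_n(A \mid z, x_{ordered}) = d_{TV}\bigl(R_n^\star(\cdot \mid z, x_{ordered}),\, R_n(\cdot \mid z, x_{ordered})\bigr),
$$
and define $T_m(y, z, x) := \mathbbm{1}\{x \in A(z, x_{ordered}(x))\}$. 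Write $p^\star := R_n^\star(A \mid z, x_{ordered})$, $p := R_n(A \mid z, x_{ordered})$ and $d := p^\star - p$.

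Conditional on $(y, z, x_{ordered})$, the CPT sampling mechanism makes $T_m(y, z, x) \sim \mathrm{Bern}(p^\star)$ and $T_m(y, z, x^{(k)}) \stackrel{\mathrm{iid}}{\sim} \mathrm{Bern}(p)$, with $x$ independent of $(x^{(1)}, \ldots, x^{(b)})$. Since $T_m$ is $\{0,1\}$-valued, direct inspection of definition (\ref{pcrt}) gives $p_{CPT} = 1$ whenever $T_m(y, z, x) = 0$, and $p_{CPT} = (1 + N)/(b + 1)$ whenever $T_m(y, z, x) = 1$, where $N := \#\{k : x^{(k)} \in A\} \sim \mathrm{Bin}(b, p)$ is independent of $T_m(y, z, x)$. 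Combining these observations yields
$$
\mathbb{P}(p_{CPT} \leq \alpha \mid y, z, x_{ordered}) \;=\; p^\star \cdot \mathbb{P}\bigl(N \leq (b + 1)\alpha - 1 \,\big|\, z, x_{ordered}\bigr).
$$

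Now pick $\alpha_b^\star := p + \sqrt{p(1 - p)\log b / b} + 2/b$. Berry--Esseen applied to $\mathrm{Bin}(b, p)$ at threshold $\sqrt{\log b}$ gives, uniformly in $p \in [0,1]$,
$$
\mathbb{P}\bigl(N \leq bp + \sqrt{bp(1 - p)\log b} \,\big|\, z, x_{ordered}\bigr) \;\geq\; \Phi(\sqrt{\log b}) - C/\sqrt{b} \;=\; 1 - o(1),
$$
whence $\mathbb{P}(p_{CPT} \leq \alpha_b^\star \mid y, z, x_{ordered}) \geq p^\star(1 - o(1))$. Subtracting $\alpha_b^\star$ and invoking $\sqrt{p(1 - p)} \leq 1/2$,
$$
\sup_{\alpha \in [0,1]} \bigl\{ \mathbb{P}(p_{CPT} \leq \alpha \mid y, z, x_{ordered}) - \alpha \bigr\} \;\geq\; d - \sqrt{p(1 - p)\log b / b} - o\bigl(\sqrt{\log b / b}\bigr) \;\geq\; d - \frac{1 + o(1)}{2}\sqrt{\log b / b}.
$$
Taking the conditional expectation over $x_{ordered}$ given $(y, z)$ and using $Y \perp X_{ordered} \mid Z$ under $H_0$ (so the $y$-conditioning on the right is redundant) delivers the proposition.

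The main obstacle is obtaining the sharp constant $1/2$ rather than $1/\sqrt{2}$. A naive Hoeffding bound would force the tolerance width to $\sqrt{b \log b / 2}/b$ and yield the weaker constant $1/\sqrt{2}$; the key insight is that the normal approximation at level $\sqrt{\log b}$ is already sufficient to drive the upper tail to $o(1)$, so the correct tolerance is $\sqrt{bp(1 - p)\log b}/b \leq \frac{1}{2}\sqrt{\log b / b}$. A secondary technical point is that the Berry--Esseen remainder must be uniform in $p$ and $(z, x_{ordered})$, which holds because the Bernoulli third absolute moment is bounded by $1/4$ regardless of $p$.
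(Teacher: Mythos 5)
The paper does not actually print a proof of this proposition (it is the CPT analogue of Theorem 4 of Berrett et al.\ 2019, and the construction parallels the paper's own proof of Theorem \ref{lowerbound1}), so I can only judge your argument on its merits. Your overall strategy is the right one: take $A(z,x_{ordered})$ attaining the total variation distance between $R_n^{\star}$ and $R_n$, use the indicator of $A$ as $T_m$, observe that conditional on $(y,z,x_{ordered})$ the statistic $T_m(y,z,x)$ is $\mathrm{Bern}(p^{\star})$ independent of $N\sim\mathrm{Bin}(b,p)$, and reduce $\mathbb{P}(p_{CPT}\leq\alpha\mid y,z,x_{ordered})=p^{\star}\,\mathbb{P}(N\leq(b+1)\alpha-1)$ to a binomial tail; the final conditioning step using $x_{ordered}\perp y\mid z$ under $H_0$ is also correct.

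The gap is in the concentration step. Your closing claim that the Berry--Esseen remainder is ``uniform in $p$ because the third absolute moment is bounded by $1/4$'' is false: the relevant quantity is $\rho/\sigma^{3}=\bigl(p^{2}+(1-p)^{2}\bigr)/\sqrt{p(1-p)}$, which diverges as $p\to 0$ or $p\to 1$, so the error $C/\sqrt{bp(1-p)}$ is not $o\bigl(\sqrt{\log b/b}\bigr)$ uniformly over $x_{ordered}$. This matters because the bound must hold with a single $o(1)$ sequence after integrating over $x_{ordered}$. Moreover, your stated reason for avoiding Hoeffding is mistaken: you do not need the tail probability to be $O(1/b)$, only $o\bigl(\sqrt{\log b/b}\bigr)$, since $p^{\star}$ times the tail is absorbed into the $(1+o(1))$ factor. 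Hoeffding at threshold $t=\tfrac{1}{2}\sqrt{b\log b}$ gives $\mathbb{P}(N>bp+t)\leq e^{-2t^{2}/b}=b^{-1/2}=o\bigl(\sqrt{\log b/b}\bigr)$ uniformly in $p$, with tolerance exactly $\tfrac{1}{2}\sqrt{\log b/b}+O(1/b)$ added to $\alpha$. So the elementary bound you discarded is precisely what delivers the constant $1/2$ uniformly and repairs the proof; as written, your normal-approximation step does not.
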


\begin{proposition}
Let $x\sim Q_n^{\star}(\cdot|z)$, we have
$$
\mathbb{E}\left[d_{TV}(R^{\star}_n(\cdot|z,x_{ordered}),R_n(\cdot|z,x_{ordered}))|z\right] \leq 2d_{TV}(Q_n^{\star}(\cdot|z),Q_n(\cdot|z)).
$$
\label{cptlower2}
\end{proposition}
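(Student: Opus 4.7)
The plan is to view the proposition as an instance of the standard ``chain rule'' for total variation distance applied to the deterministic sorting map $S:x\mapsto x_{ordered}$. Let $\mathcal{O}=\{u\in\mathbb{R}^n:u_1\le\cdots\le u_n\}$ denote the ordered cone. Because $X\mid Z$ is continuous under both $Q$ and $Q^{\star}$, ties have probability zero and each fiber $\Gamma(u)$ has exactly $n!$ elements. A direct symmetrization shows that the density of $x_{ordered}$ on $\mathcal{O}$ under $Q_n^{\star}(\cdot|z)$ is $g^{\star}(u|z)=\sum_{v\in\Gamma(u)}Q_n^{\star}(v|z)$, with an analogous $g(u|z)$, and by the very definition of $R$ we have $R_n^{\star}(v|z,u)=Q_n^{\star}(v|z)/g^{\star}(u|z)$ on $\Gamma(u)$, and similarly for $R_n$.

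The next step is to write the expectation out, rescaling the conditional densities back to the joint:
\begin{equation*}
\mathbb{E}\!\left[d_{TV}\!\left(R_n^{\star}(\cdot|z,x_{ordered}),R_n(\cdot|z,x_{ordered})\right)\bigm|z\right]
=\frac{1}{2}\int_\mathcal{O}\sum_{v\in\Gamma(u)}\left|Q_n^{\star}(v|z)-\frac{g^{\star}(u|z)}{g(u|z)}Q_n(v|z)\right|du.
\end{equation*}
The key move is to insert $\pm\,Q_n(v|z)$ inside the absolute value and apply the triangle inequality to split the integrand into two non-negative pieces. The first piece, $|Q_n^{\star}(v|z)-Q_n(v|z)|$, summed over $\Gamma(u)$ and integrated over $\mathcal{O}$, reassembles into a full integral over $\mathbb{R}^n$ and equals $2\,d_{TV}(Q_n^{\star}(\cdot|z),Q_n(\cdot|z))$. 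The second piece equals $Q_n(v|z)\,\bigl|1-g^{\star}(u|z)/g(u|z)\bigr|$; summing over $v\in\Gamma(u)$ uses $\sum_{v\in\Gamma(u)}Q_n(v|z)=g(u|z)$ and collapses it to $|g^{\star}(u|z)-g(u|z)|$, whose integral over $\mathcal{O}$ is $2\,d_{TV}(S_{\#}Q_n^{\star}(\cdot|z),S_{\#}Q_n(\cdot|z))$. By the data processing inequality for total variation (since $x_{ordered}$ is a deterministic function of $x$), this last quantity is bounded by $2\,d_{TV}(Q_n^{\star}(\cdot|z),Q_n(\cdot|z))$. Adding the two pieces and dividing by $2$ produces the claimed bound.

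The main obstacle is really just bookkeeping around the sorting map: keeping track of the $n!$ terms in the symmetrization that defines $g^{\star}$ on $\mathcal{O}$, and leaning on the continuity of $X\mid Z$ under both $Q$ and $Q^{\star}$ to discard the measure-zero set of ties so that the fibers $\Gamma(u)$ all have full size. Beyond that, the argument requires nothing subtler than the triangle inequality and data processing for $d_{TV}$, and the factor of $2$ in the final bound comes directly from the sum of the two contributions (joint-level TV plus marginal-level TV).
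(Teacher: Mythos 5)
Your proof is correct and follows essentially the same route as the paper's: both insert the $Q$-term to split the discrepancy into a joint-level total-variation contribution plus an ordered-statistics-marginal contribution, and bound each piece by $d_{TV}(Q_n^{\star}(\cdot|z),Q_n(\cdot|z))$, the latter amounting to data processing for the sorting map. The only difference is cosmetic: you use the $L^1$/density representation of total variation, while the paper works with the sup-over-sets representation via the extremal set $A_R(x_{ordered},z)$ and a positive/negative split of $DF$.
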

The Proposition \ref{cptlower} indicates that under the misspecification of conditional distribution, if we somehow unluckily choose bad test statistics $T_m$ and a bad type 1 error level $\alpha$, the deviation of type one error may be comparable to the deviation of conditional distribution. Though we cannot directly compare results in Proposition \ref{cptlower} to those in the Theorem 5 of \cite{berrett2019conditional}, the Proposition \ref{cptlower2} tells us that if $\mathbb{E}\left[d_{TV}(R^{\star}_n(\cdot|z,x_{ordered}),R_n(\cdot|z,x_{ordered}))|z\right]$ is large, so is $d_{TV}(Q_n^{\star}(\cdot|z),Q_n(\cdot|z))$, which means that if the type 1 error lower bound for the CPT blows up, so does the one for the CRT. Therefore, we can see that the CPT is more robust to the CRT in some sense.

To bridge the CPT and the CRRT, we need to introduce a test incorporating advantages of both, i.e. the conditional permutation rank test (CPRT). The CPRT generalize the CPT, just like that the CRRT is a generalization of the CRT. The CPRT shares a same pseudo variables generation scheme as the CPT. For any prescribed X-symmetric function 
$$
T(y,z,\tilde{x})=(T_0(y,z,\tilde{x}),\ldots,T_b(y,z,\tilde{x})):\mathbb{R}^{n}\times \mathbb{R}^{n\times p}\times\mathbb{R}^{n\times(b+1)}\rightarrow \mathbb{R}^{b+1},
$$
where $T$ is only required to be independent with $\tilde{x}$, we define the p-value for the CPRT,
$$
p_{CPRT} = \frac{1}{b+1}\sum\limits_{k=0}\limits^{b}\mathbbm{1}\{T_0(y,z,\tilde{x})\leq T_k(y,z,\tilde{x})\}.
$$
It is easy to know that the CPRT can also control the type 1 error if the conditional distribution of $X$ given $Z$ is correctly specified, which can be formalized as the following proposition. We skip its proof since the proof would be basically same as the one of Theorem \ref{maintheorem}.

\begin{proposition}
Assume that the null hypothesis $H_0:Y\perp X|Z$ holds. Suppose that $X\sim Q(\cdot|Z)$ is true. For $k=1,2,\ldots,b$, $x^{(k)}$'s are independently generated according to conditional distribution $R_n(\cdot|x_{ordered},z)$ induced by $Q$. Assume that $T(y,z,\tilde{x})$ is $X$-symmetric. Then, the CPRT can successfully control the type 1 error at any given level $\alpha\in[0,1]$:
$$
\mathbb{P}(p_{CPRT}\leq \alpha)\leq \alpha.
$$
\end{proposition}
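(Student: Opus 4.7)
The plan is to mirror the proof of Theorem \ref{maintheorem}, but with the extra conditioning on $x_{ordered}$ that is characteristic of the permutation approach. The key will be to establish exchangeability of the columns of $\tilde{x}=(x,x^{(1)},\ldots,x^{(b)})$ conditional on $(y,z,x_{ordered})$, then lift this to exchangeability of $(T_0,T_1,\ldots,T_b)$ via the $X$-symmetry of $T$, and finally convert exchangeability into a rank-based $p$-value bound exactly as in Proposition \ref{conditionprop}.

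First, I would verify the column-wise exchangeability. Since $x\sim Q_n(\cdot|z)$ by assumption, the conditional distribution of $x$ given $x_{ordered}$ and $z$ is by construction $R_n(\cdot|z,x_{ordered})$; this is just the definition of $R_n$. Under $H_0:Y\perp X|Z$, conditioning further on $y$ does not change this distribution, so $x\mid (y,z,x_{ordered})\sim R_n(\cdot|z,x_{ordered})$. The pseudo columns $x^{(1)},\ldots,x^{(b)}$ are drawn independently from the same law $R_n(\cdot|z,x_{ordered})$, and (being independent of $y$ given $z$ and $x_{ordered}$) their conditional distribution given $(y,z,x_{ordered})$ is unchanged. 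Hence, conditional on $(y,z,x_{ordered})$, the $b+1$ columns of $\tilde{x}$ are i.i.d., and in particular exchangeable: for every $\sigma\in\Sigma_{b+1}$,
\[
\tilde{x}_{permute(\sigma)}\mid (y,z,x_{ordered})\stackrel{d}{=}\tilde{x}\mid (y,z,x_{ordered}).
\]

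Second, I would invoke $X$-symmetry. For any $\sigma\in\Sigma_{b+1}$, $T(y,z,\tilde{x}_{permute(\sigma)})=T(y,z,\tilde{x})_{permute(\sigma)}$. Combining this identity with the exchangeability of $\tilde{x}$ established above yields that, conditional on $(y,z,x_{ordered})$,
\[
(T_0,T_1,\ldots,T_b)_{permute(\sigma)}\stackrel{d}{=}(T_0,T_1,\ldots,T_b).
\]
So $(T_0,\ldots,T_b)$ is exchangeable conditional on $(y,z,x_{ordered})$.

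Third, I would run the standard rank argument, exactly as in Proposition \ref{conditionprop}. Because $(T_0,\ldots,T_b)$ is exchangeable, the rank of $T_0$ among $(T_0,T_1,\ldots,T_b)$ is conditionally stochastically dominated by a discrete uniform distribution on $\{1,\ldots,b+1\}$ (with equality in the continuous case, and tie-breaking in our favor otherwise, since $p_{CPRT}$ is defined using $\leq$). Hence
\[
\mathbb{P}\!\left(p_{CPRT}\leq\alpha\mid y,z,x_{ordered},T_{(0)},\ldots,T_{(b)}\right)\leq\alpha,
\]
and taking iterated expectations over $(y,z,x_{ordered})$ and the order statistics gives $\mathbb{P}(p_{CPRT}\leq\alpha)\leq\alpha$.

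The main obstacle, such as it is, lies in Step 1: one must be careful to check that the pseudo-variable generation scheme $R_n(\cdot|z,x_{ordered})$ truly makes the $b+1$ columns exchangeable after the extra conditioning, and that $H_0$ is what permits replacing the conditional law $x\mid(z,x_{ordered})$ by $x\mid(y,z,x_{ordered})$. Once this is in place, the rest is essentially formal and parallels Theorem \ref{maintheorem}, which is presumably why the authors elect to skip the detailed proof.
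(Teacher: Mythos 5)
Your proof is correct and follows exactly the route the paper intends: the authors skip the proof, noting it is ``basically the same as the one of Theorem \ref{maintheorem},'' and your argument is precisely that adaptation, with the conditional law of $x$ given $(y,z,x_{ordered})$ identified as $R_n(\cdot|z,x_{ordered})$ under $H_0$, exchangeability lifted through $X$-symmetry, and the rank bound of Proposition \ref{conditionprop} applied conditionally before marginalizing. No gaps.
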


\begin{figure}[h]
\centering
\includegraphics[width=12cm,height=6cm]{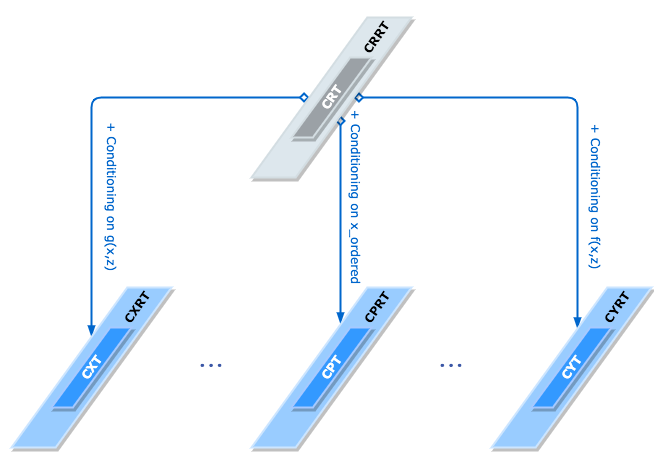}
\caption{Relationships among conditional tests based on pseudo variables. The CRT-CRRT only conditions on $z$. If we further condition on $x_{ordered}$, we get CPT-CPRT. We can also instead condition other statistics to generate pseudo variables, say $f(x,z)$ and $g(x,z)$, then we can have more choices like CXT-CXRT, CYT-CYRT and so on.}
\label{relation}
\end{figure}

The relationships among 4 conditional tests discussed here can be summarized by figure \ref{relation}. Actually, permutation is just one of several ways to enhance robustness at the expense of power loss. For example, we can also generate pseudo variables conditional on $\bar{x}$ and $z$, where $\bar{x}$ is the mean of $x$. In general, more conditioning we put, more robustness we gain. The extreme case is that we generate pseudo variables conditioning on $x$ and $z$. As a result, all $x^{(k)},k=1,\ldots,b$ will be identical with $x$ and the power will decrease to 0. 

When we try to construct a sampling-based conditional test, how much we need to condition on depends on how much we know about the conditional distribution of $X$ given $Z$. If we are confident that the proposed conditional distribution has slight bias or even no bias, we can try to be more aggressive and apply the CRRT.

\section{Empirical Results}
\label{sec:empirical}
In this section, we evaluate the performance of the CRRT proposed in this paper through comparison with other conditional sampling based methods, including CRT, CPT, dCRT, and HRT on simulated data. We also use concrete simulated examples to demonstrate the robustness of the CRRT.

\subsection{Some Preliminary Preparations and Explanations}
All methods considered in our simulations are conditional resampling based and thus can theoretically control the type 1 error regardless of the number of samplings we take. We know that though a small number of samplings $b$ is theoretically valid, a larger $b$ can prevent unnecessary conservativeness. At the same time, we do not want to take an extremely large $b$ since it will bring prohibitive computational burdens. After some preliminary analysis shown in the supplement, we decide to set $b=199$ across all simulations.

Note that the dCRT is a 2-step test. The first step is distillation, where we extract and compress information in $Y$ provided by $Z$. After distillation, the second step, which we call testing step, can be executed efficiently with a dataset of low dimension. For more details, see \cite{liu2020fast}. In the distillation step, it is optional to keep several important elements of $Z$ and incorporate them in the testing step. In such way, \cite{liu2020fast} shows that we can effectively deal with model with slight interactions. We denote the dCRT keeping $k$ elements of $Z$ by $d_kCRT$.  Actually, the dCRT is a special version of the CRT and still enjoys some flexibility because we can take any appropriate approaches in each of the distillation and testing step. In our simulation, we will explicitly specify what methods we use in each step. For example, when we say we use linear Lasso in the distillation step of $d_kCRT$, it means that we decide which $k$ elements would be kept based on the absolute fitted coefficients resulted from linear Lasso regression of $Y$ on $Z$ and use the Lasso regression residues as distillation residues. When we say we use least square linear regression in the testing step, it means that we run a simple linear regression of distillation residues on $X$ (or its pseudo copies) and important variables kept from the last step and use the corresponding absolute fitted coefficient of $X$ (or its pseudo copies) to construct test statistics. Unless specifying additionally, we use least square linear regression in the testing step by default.

The HRT (\cite{tansey2018holdout}) is also a 2-step test belonging to the family of the CRT. It splits data into training set and testing set. With the training set, we can fit a model, which can be arbitrary, of $Y$ on $X$ and $Z$. Then, we use the testing set to generate conditional samplings and feed all copies to the fitted model to obtain predicted errors. These errors work like the test function $T$. In our simulations, when we say we use logistic Lasso in the HRT, it means that we fit a logistic Lasso with the training set.

One-step test, like CRT, CRRT and CPT, would be more simple. As we can see, the test function $T$ is flexible. It would be better to choose suitable $T$'s in different settings. When we say we use the random forest as our test function, it means that we fit a random forest and obtain the feature importance of $X$ (or its pseudo copies) to construct test statistics. 

We propose a trick in implementing the CRRT. Suppose that $b=199$ and $n=400$, then $\tilde{x}$ would be a $400\times 200$ matrix. For some given positive integer $k$, say 4, we separate $\tilde{x}$ into 4 folds, as $\tilde{x}_1,\tilde{x}_2,\tilde{x}_3,\tilde{x}_4\in \mathbb{R}^{400\times 50}$. This step can be deterministic or random. For simplicity, in our simulations, we just let $\tilde{x}_i$ be a submatrix consisting of the $[(i-1)\frac{b}{k}+1]$th column to the $[\frac{bi}{k}]$th column of $\tilde{x}$, $i=1,\ldots,k$. Then, we apply a $X$-symmetric test function $T$ on each $(y,z,\tilde{x}_i)$ and obtain test statistics for $X$ and each pseudo variables. Finally, we can construct a p-value based on these test statistics. We call the CRRT equipped with such trick the $CRRT_k$ with mini-batch size $\frac{b}{k}$, or the $CRRT_k$ with $k$ folds. Though after such extra processing, the procedure is no longer a CRRT, it is not hard to show that $CRRT_k$ can promise type 1 error control. When $k=b$, the $CRRT_b$ degenerates to the $CRT$.

\subsection{The CRRT Shows a Huge Improvement in Computational Efficiency Over the CRT}
\label{sec:general} 
In this part, we assess the performance of all conditional sampling based tests we have mentioned in the setting of linear regression, which is frequently encountered in theoretical statistical analysis. In particular, we compare their ability to control type 1 error when the null hypothesis is true, their power when the null hypothesis is false, and their computational efficiency. Generally speaking, the CRRT and the dCRT are among the best because they are simultaneously efficient and powerful in our simulations.

We let $n=400$, $p=100$ and observations are i.i.d.. We perform 200 Monte-Carlo simulations for each specific setting. For $i=1,2,\ldots,400$, $\left(\begin{array}{c} x_i\\z_{[i]}\end{array}\right)$ is an i.i.d. realization of $\left(\begin{array}{c} X\\Z\end{array}\right)$, where $\left(\begin{array}{c} X\\Z\end{array}\right)$ is a $(p+1)$-dimension random vector following a Gaussian $AR(1)$ process with autoregressive coefficient 0.5. We let
$$
y_i = \beta_0 x_i + z_{[i]}^T\beta+\varepsilon_i,\ i=1,2,\ldots,400,
$$
where $\beta_0\in \mathbb{R}$, $\beta\in \mathbb{R}^{p}$, $\varepsilon_i$'s are i.i.d. standard Gaussian, independent of $x_i$ and $z_{[i]}$. We randomly sample a set of size 20 without replacement from $\{1,2,\ldots,p\}$, say $S$. For $j=1,\ldots,p$, let $\beta_j=0$ if $j\notin S$, and $\beta_j=0.5\mathbbm{1}\{B_j=1\}-0.5\mathbbm{1}\{B_j=0\}$ if $j\in S$, where $B_j$ is Bernoulli. We consider various values of $\beta_0$, including $0,0.05,0.1,\ldots,0.3$. 

\begin{figure}[ht]
    \centering
    \begin{subfigure}[t]{0.49\textwidth}
        \centering
        \includegraphics[width=1\linewidth,height=0.6\linewidth]{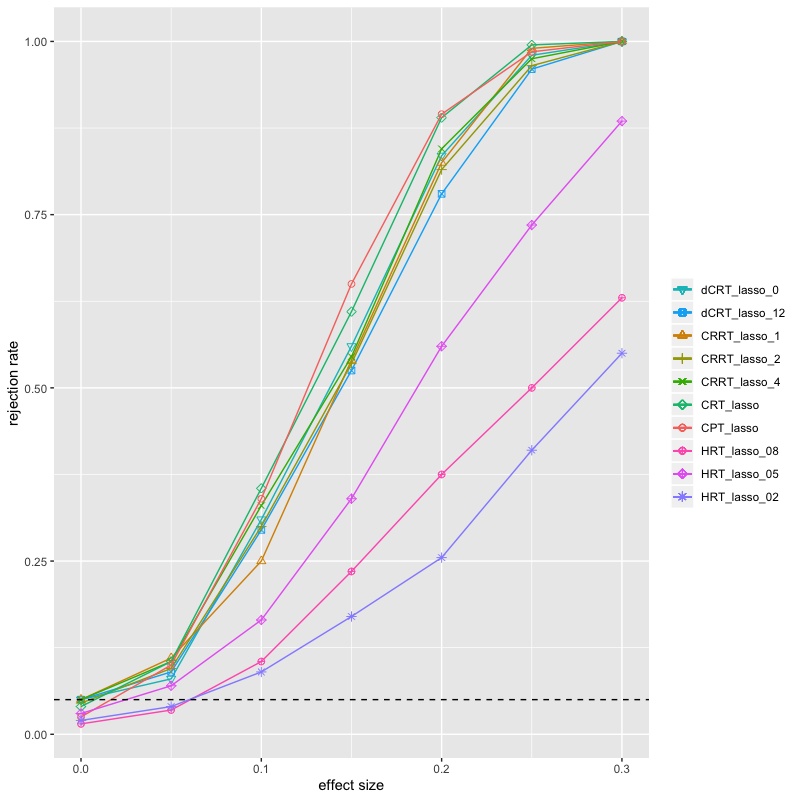} 
        \caption{Rejection Rate} \label{fig:gl1}
    \end{subfigure}
    \hfill
    \begin{subfigure}[t]{0.49\textwidth}
        \centering
        \includegraphics[width=1\linewidth,height=0.6\linewidth]{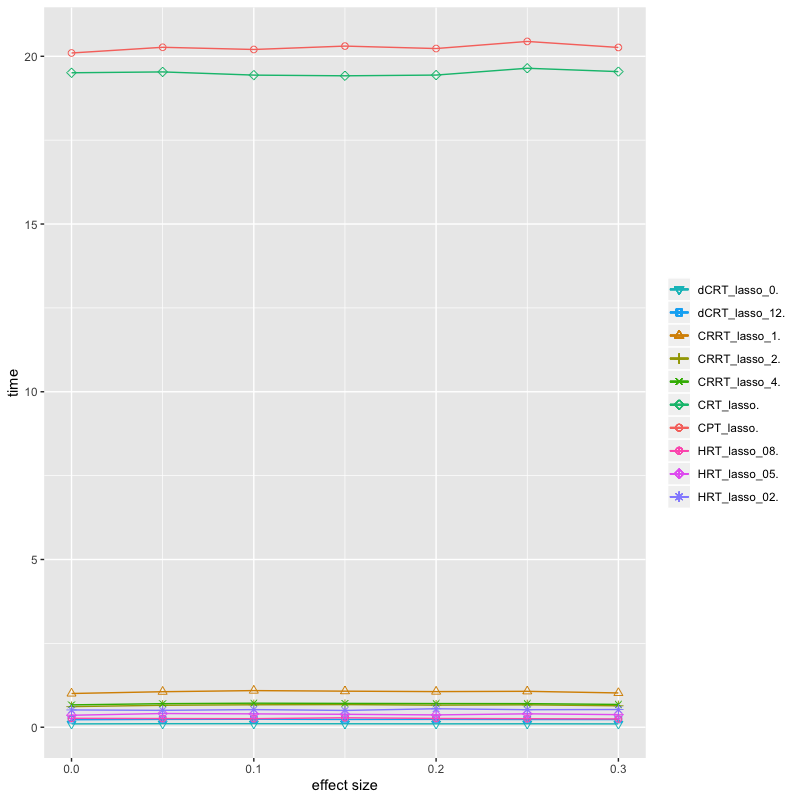} 
        \caption{Time} \label{fig:gl2}
    \end{subfigure}
\caption{(a) Proportions of rejection in the linear regression setting. The y-axis represents the proportion of times the null hypothesis is rejected. The x-axis represents the true coefficient of $X$. (b) Time in seconds per Monte-Carlo simulation. dCRT\_lasso\_k represents the dCRT with linear Lasso in the distillation step keeping k important variables after distillation. CRRT\_lasso\_k represents the CRRT using linear Lasso as the test function and with batch size $(b+1)/k$. HRT\_lasso\_0k represents the HRT fitting a linear Lasso model and using training set of size $n\times k\times 0.1$.}
\label{general_low}
\end{figure}

\noindent\textbf{Results:} Figures \ref{general_low}b exhibits a huge disadvantage of the CRT and the CPT in terms of the computational efficiency. They spend at least 10 times of time than other tests. It validates our assertion in the Section \ref{connect} that the CRRT has a great efficiency improvement over the CRT. We should also note that here we just try some simple methods like Lasso. If the methods we use become more complicated, we expect to see a greater computational efficiency discrepancy. 

From Figures \ref{general_low}a, we can see that basically all tests included in our simulations show good type 1 error control, which coincides with the theoretical results. When the null hypothesis is true, the HRT always shows markedly low rejection. Meanwhile, its power is significantly lower than other methods' when the null hypothesis is false. Though setting the proportion of training set to be $50\%$ can reduce conservativeness, it is still overly conservative in practice. The CRT and the CPT stably show greater power than other methods. However, such advantage is slight when compared with the CRRT and the dCRT. 

$d_0CRT$, $CRRT_2$ and $CRRT_4$ exhibit comparable performances in the sense of test effectiveness and computational efficiency. We should note that $CRRT_1$ does not always own computational advantage over $CRRT_k$ with $k>1$, which can be observed in the Figure \ref{general_low}b. The $d_{12}CRT$ can be viewed as a conservative version of the $d_0CRT$. It is no surprise that the $d_{12}CRT$ gain less power in the above simulations, compared with the $d_0CRT$ and some $CRRT$'s because there is no interactive effect and methods we use in the distillation step and the testing step basically suit with the nature of models, which make the important variables kept from the distillation step noisy instead. 

%%%%%%%%%%%%%%%%%%%
\subsection{The CRRT Outperforms Other Tests in Complicated Settings}
\label{sec:crrtdcrt} 
In practice, the model of the response variable on covariates is usually not as simple as what we set in the previous subsection. In this subsection, we consider complicated models and assess the performance of CRRT, dCRT and HRT. As impliied by the figure presented in the introduction, the CRT and the CPT spend far more time than the other methods when models get complicated, we do not include them in this subsection. Results show that only the CRRT can remain efficient and powerful in complicated settings.

We set $n=400$ and observations are i.i.d.. We try 200 Monte-Carlo simulations for each specific setting. For $i=1,2,\ldots,400$, $\left(\begin{array}{c} x_i\\z_{[i]}\end{array}\right)$ is an i.i.d. realization of $\left(\begin{array}{c} X\\Z\end{array}\right)$, where $\left(\begin{array}{c} X\\Z\end{array}\right)$ is a 101-dimension random vector following a Gaussian $AR(1)$ process with autoregressive coefficient 0.5. We consider the following response model: 
$$
\begin{array}{ll}
{\bf Interaction\ Model\ 1:}& y_i = \beta_0 x_i\sum\limits_{j=1}\limits^{100}z_{[i]j} + z_{[i]}^T\beta+\varepsilon_i,\ i=1,2,\ldots,400,\\
\end{array}
$$
where $\beta_0\in \mathbb{R}$, $\beta\in \mathbb{R}^{100}$, $\varepsilon_i$'s are i.i.d. standard Gaussian, independent of $x_i$ and $z_{[i]}$. To construct $\beta$, we randomly sample 5 indexes without replacement from $\{1,2,\ldots,100\}$, say $S$. For $j=1,\ldots,100$, let $\beta_j=0$ if $j\notin S$, and $\beta_j=0.5\mathbbm{1}\{B_j=1\}-0.5\mathbbm{1}\{B_j=0\}$ otherwise, where $B_j$ is Bernoulli.  We consider a range of choices of $\beta_0$. When $\beta_0=0$, $Y$ is a function of $Z$ and random errors in all 3 models, which implies that the null hypothesis of $Y\perp X|Z$ holds. When $\beta_0\ne 0$, $X$ is no longer independent of $Y$ conditional on $Z$. We also consider 2 other models with categorical responses, the results of which are provided in the supplement. 

\begin{figure}[h]
    \centering
    \begin{subfigure}[t]{0.49\textwidth}
        \centering
        \includegraphics[width=1\linewidth,height=0.6\linewidth]{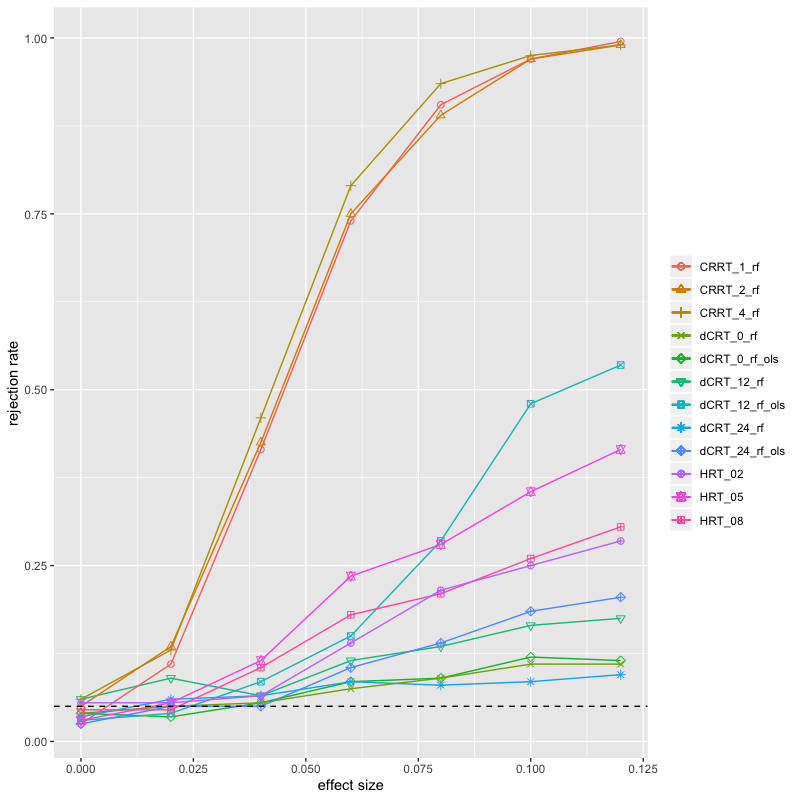} 
        \caption{Rejection Rate} \label{fig:ir1}
    \end{subfigure}
    \hfill
    \begin{subfigure}[t]{0.49\textwidth}
        \centering
        \includegraphics[width=1\linewidth,height=0.6\linewidth]{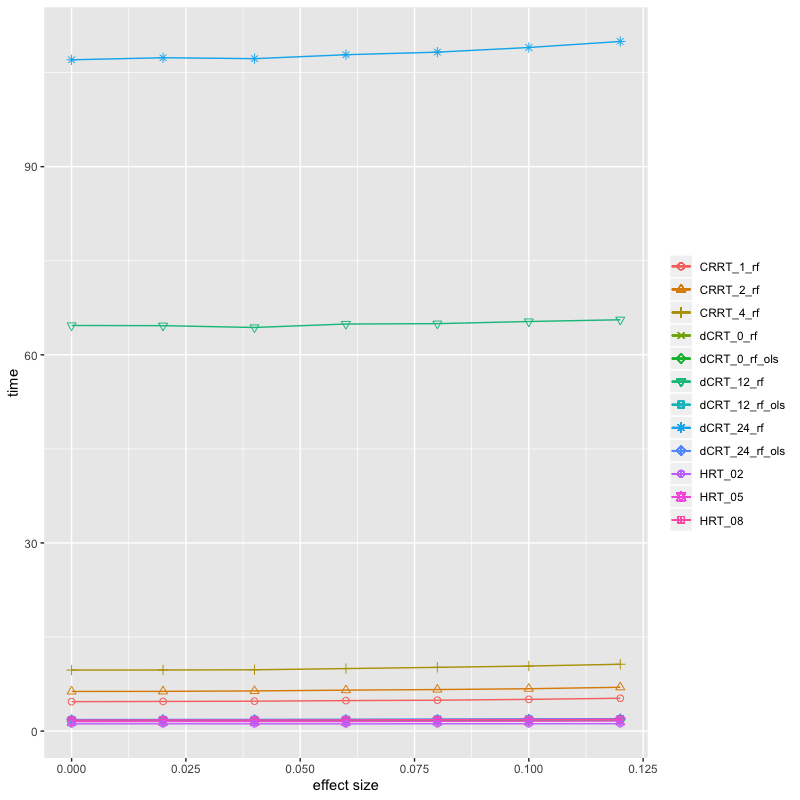} 
        \caption{Time} \label{fig:it1}
    \end{subfigure}
\caption{(a) Proportions of rejection under the Interaction Model 1. The y-axis represents the proportion of times the null hypothesis is rejected. The x-axis represents the true coefficient of $X$. (b) Time in seconds per Monte-Carlo simulation. dCRT\_k\_rf represents the dCRT with random forest in both of the distillation step and the testing step, and keeping k important variables after distillation. dCRT\_k\_rf\_ols represents the dCRT with random forest in the distillation step, least square linear regression in the testing step. CRRT\_k\_rf represents the CRRT using random forest as the test function and with batch size $(b+1)/k$. HRT\_0k represents the HRT fitting a random forest model and using training set of size $n\times k\times 0.1$.}
\label{inter_cont}
\end{figure}

\noindent {\bf Results:} Based on the Figure \ref{inter_cont}a, we can see that the CRRT's have dominating power over the dCRT's and the HRT's. In particular, the CRRT with batch size 50 (i.e. CRRT\_4\_rf) has the most power among all tests considered here. Basically all tests can control the observed type 1 error when the effect size is 0. Figure \ref{inter_cont}b shows that the CRRT's do not sacrifice much computational efficiency. It is worth mentioning that $d_{24}CRT$ does not gain more power than $d_{12}CRT$, which demonstrate that the strategy of keeping important features after the distillation step could fail to alleviate the problem brought by interactions when the model is complicated.
%%%%%%%%%%%%%%%%%%%%%%%%%%%%%%%%

\subsection{The CRRT Is More Robust to Conditional Distribution Misspecification Than the dCRT}
\label{robustsim}
We prove in the Section \ref{sec:robust} that the CRRT is robust to misspecification of conditional density, i.e. $Q(\cdot|Z)\ne Q^{\star}(\cdot|Z)$. In this subsection, we demonstrate it with simulations covering 3 sources of misspecification, including the misspecification from non data-dependent estimation, from using unlabeled data and from reusing labeled data. Note that when the misspecification originates from reusing labeled data, currently there is no theoretical results on robustness. We consider the CRRT and the dCRT, which shows relatively good performance in previous experiments.

\subsubsection{Misspecification From Non Data-Dependent Estimation}
One source of conditional distribution misspecification comes from domain knowledge or some conventions, which does not depend on labeled data or unlabeled data. We consider the following 3 response models. For each specific setting, we run 500 Monte-Carlo simulations.

\begin{figure}[ht!]
    \centering
    \begin{subfigure}[t]{0.49\textwidth}
        \centering
        \includegraphics[width=1\linewidth,height=0.6\linewidth]{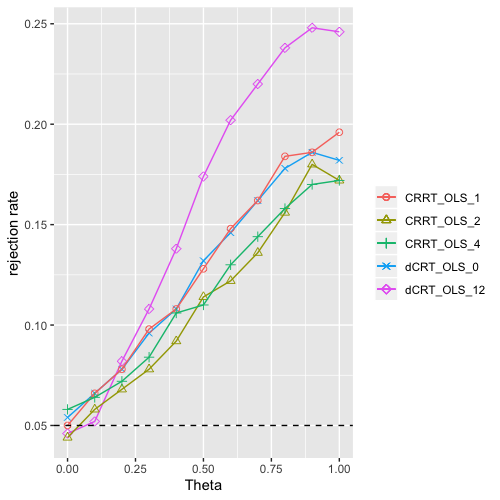} 
        \caption{$n>p$ Linear Regression, OLS-Based Tests} \label{fig:nd1}
    \end{subfigure}
    \hfill
    \begin{subfigure}[t]{0.49\textwidth}
        \centering
        \includegraphics[width=1\linewidth,height=0.6\linewidth]{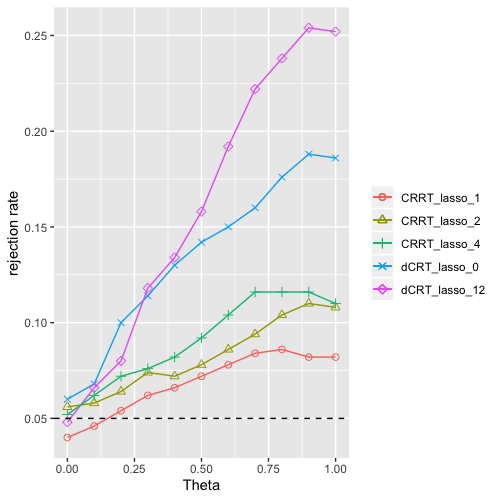} 
        \caption{$n>p$ Linear Regression, Lasso-Based Tests} \label{fig:nd2}
    \end{subfigure}
    \begin{subfigure}[t]{0.49\textwidth}
        \centering
        \includegraphics[width=1\linewidth,height=0.6\linewidth]{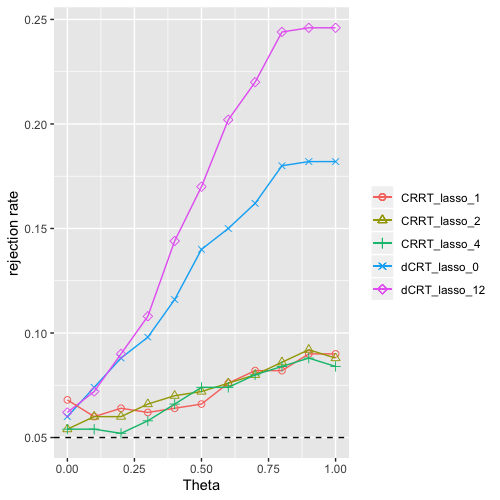} 
        \caption{$n=p$ Linear Regression, Lasso-Based Tests} \label{fig:nd3}
    \end{subfigure}
    \hfill
    \begin{subfigure}[t]{0.49\textwidth}
        \centering
        \includegraphics[width=1\linewidth,height=0.6\linewidth]{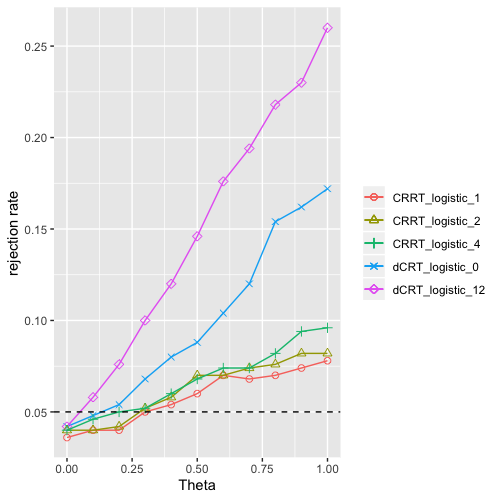} 
        \caption{$n>p$ Logistic Regression, Lasso-Based Tests} \label{fig:nd4}
    \end{subfigure}
\caption{The CRRT is less sensitive to non data-dependent misspecification of conditional distribution than the dCRT. The y-axis represents the proportion of times the null hypothesis of conditional independence is rejected out of 500 simulations. The x-axis is the value of $\theta$, representing the degree of misspecification. CRRT\_OLS\_k, CRRT\_lasso\_k and CRRT\_logistic\_k  represent the CRRT's with respectively least square linear regression, linear Lasso and logistic Lasso as the test function and batch size $(b+1)/k$. dCRT\_OLS\_k, dCRT\_lasso\_k and dCRT\_logistic\_k represents the dCRT's respectively using least square linear regression, linear Lasso and logistic Lasso in the distillation step with k important variables kept after distillation.}
\label{nondata}
\end{figure}

\noindent {\bf ${\bf n>p}$ linear regression:} We set $n=400$, $p=100$ and observations are i.i.d.. For $i=1,2,\ldots,400$, $\left(\begin{array}{c} x_i\\z_{[i]}\end{array}\right)$ is an i.i.d. realization of $\left(\begin{array}{c} X\\Z\end{array}\right)$, where $\left(\begin{array}{c} X\\Z\end{array}\right)$ is a $(p+1)$-dimension random vector. $Z\in \mathbb{R}^{p}$ follows a Gaussian $AR(1)$ process with autoregressive coefficient 0.5 and $X|Z\sim N(\mu(Z),\sigma^2)$, where $\mu$ and $\sigma^2$ will be specified shortly. We let
$$
y_i = 0\cdot x_i + z_{[i]}^T\beta+\varepsilon_i,\ i=1,2,\ldots,400,
$$
where $\beta\in \mathbb{R}^{p}$, $\varepsilon_i$'s are i.i.d. standard Gaussian, independent of $x_i$ and $z_{[i]}$. $\beta$ is constructed as follows. We randomly sample a set of size 20 without replacement from $\{1,2,\ldots,p\}$, say S. For $j=1,\ldots,p$, let $\beta_j=0$ if $j\ne S$, and $\beta_j=0.5\mathbbm{1}\{B_j=1\}-0.5\mathbbm{1}\{B_j=0\}$ if $j\in S$, where $B_j$ is Bernoulli. 

\noindent {\bf ${\bf n=p}$ linear regression:} Basically everything is same as the $n>p$ linear regression setting, except that the dimension of $Z$ is increased from 100 to 400 while the sparsity of $\beta$ remains to be 20.

\noindent {\bf ${\bf n>p}$ logistic regression:} $x$ and $z$ are generated in the same way as the previous model, where $p=100$. We consider a binary model,
$$
y_i = \mathbbm{1}\{0\cdot x_i + z_{[i]}^T\beta+\varepsilon_i\},\ i=1,2,\ldots,400,
$$
where $\beta\in \mathbb{R}^{p}$. $\varepsilon_i$'s and $\beta$ are constructed in a same way as the $n>p$ linear regression model.

Suppose that $\left(\begin{array}{c} \mathring{X}\\Z\end{array}\right)\in \mathbb{R}^{p+1}$ is from a Gaussian AR(1) process with autoregressive coefficient 0.5. The conditional distribution of $\mathring{X}$ given $Z$ must be Gaussian in the form of $N(Z^T\zeta,\mathring{\sigma}^2)$. For $\theta\in [0,1]$, we let $\mu(Z)$ and $\sigma^2$ mentioned above be $\theta Z^T\zeta$ and $\frac{1+(1-\theta)^2}{2}\mathring{\sigma}^2$ respectively. For $k=1,2,\ldots,b$, we set the conditional distribution of pseudo variable $X^{(k)}$ given $Z$ be $N(0,\mathring{\sigma}^2)$. When $\theta$ is exactly 0, the conditional distribution of $X$ given $Z$ would be also $N(0,\mathring{\sigma}^2)$ and there is no misspecification. As the value of $\theta$ increases, the degree of misspecification increases.

\noindent {\bf Results:} Figures \ref{fig:nd1} and \ref{fig:nd2} display results for the $n>p$ linear regression model. Increasing misspecification of conditional distribution can inflate the type 1 error. Besides, it can be seen that the CRT's are always more robust than corresponding dCRT's with same type of test function. In particular, the CRRT's with linear Lasso as the test function are the least sensitive. Even when $\theta=1$, the CRRT\_lasso\_1 can still control the type 1 error below 0.1 while the dCRT\_lasso\_12 has a type 1 error reaching 0.25.

When we turn to the Figure \ref{fig:nd3}, where results for the $n=p$ linear regression model are shown, we can see that the CRRT's show a more pronounced advantage in robustness to misspecification over the dCRT's. The performances of the CRRT's vary little among different choices of batch size.

From the Figure \ref{fig:nd4}, where results for the $n>p$ logistic regression model are shown, we can see that the CRRT's still behave more robustly than the dCRT's and successfully control the observed type 1 error under 0.1 as in the previous setting.

In summary, the CRRT is more robust to the current misspecification type than the dCRT across several models. Besides, the CRRT has comparable computational efficiency as the dCRT, which is not shown directly here but can be inferred from previous results since the misspecification has no impact on computation. We should note that, unlike the monotonic trend shown here, the misspecification of conditional distribution may have involved influence on the type 1 error. For more details, see the supplement.
%%%%%%%%%%%%%%%%%%%%%%

\subsubsection{Misspecification From Using Unlabeled Data}
Estimating the conditional distribution by using unlabeled data is an effective  and popular approach without sacrificing theoretical validity (\cite{candes2018panning}, \cite{huang2019relaxing}, \cite{romano2019deep}, \cite{sesia2019gene}). Being able to utilize unlabeled data is also a main reason why model-X methods become more and more attractive. Usually, we propose a parametric family for the target conditional distribution and estimates parameters with unlabeled data. When the proposed parametric family is correct, the estimated conditional distribution would be more closed to the true one as we have unlabeled data of larger size. Of course, we can also consider to use unlabeled data in a nonparametric way with some kernel methods. But here, we only show simulations of parametric estimations. 

We still consider same response models as in the previous part, i.e. $n>p$ linear regression, $n=p$ linear regression and $n>p$ logistic regression. Here, we fix $\mu(Z)=Z^T\zeta$ and $\sigma^2=\mathring{\sigma}^2$ such that $\left(\begin{array}{c} X\\Z\end{array}\right)$ follows a Gaussian AR(1) process with autoregressive coefficient 0.5. However, we will not use the exact conditional distribution $Q^{\star}(\cdot|Z)$ to generate pseudo variables. Instead, we assume a Gaussian model and get estimators $\hat{\zeta}$ and $\hat{\sigma}^2$ with unlabeled dataset $(x_{unlabeled},z_{unlabeled})$, where $x_{unlabeled} \in \mathbb{R}^N$ and $z_{unlabeled}\in \mathbb{R}^{N\times p}$ by using scaled Lasso (\cite{sun2012scaled}). Note that each row of the unlabeled dataset is a realization of $\left(\begin{array}{c} X\\Z\end{array}\right)$. Then, we generate pseudo variables based on the estimated conditional distribution $N(Z^T\hat{\zeta},\hat{\sigma}^2)$.

\begin{figure}[ht!]
    \centering
    \begin{subfigure}[t]{0.49\textwidth}
        \centering
        \includegraphics[width=1\linewidth,height=0.6\linewidth]{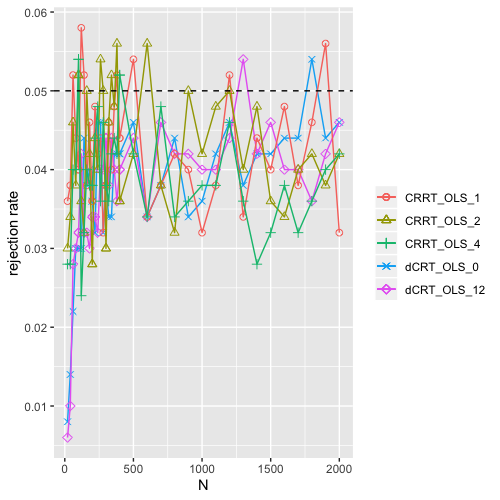} 
        \caption{$n>p$ Linear Regression, OLS-Based Tests} \label{fig:ul1}
    \end{subfigure}
    \hfill
    \begin{subfigure}[t]{0.49\textwidth}
        \centering
        \includegraphics[width=1\linewidth,height=0.6\linewidth]{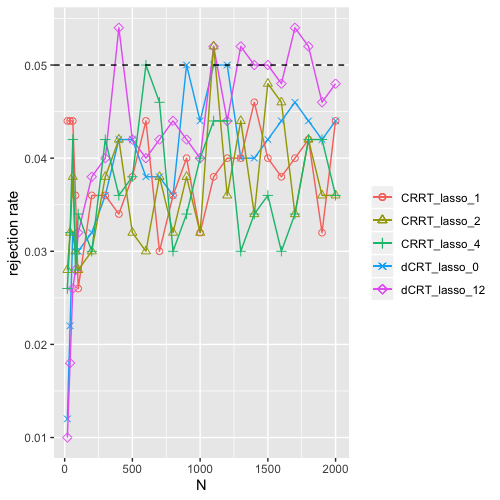} 
        \caption{$n>p$ Linear Regression, Lasso-Based Tests} \label{fig:ul2}
    \end{subfigure}
    \begin{subfigure}[t]{0.49\textwidth}
        \centering
        \includegraphics[width=1\linewidth,height=0.6\linewidth]{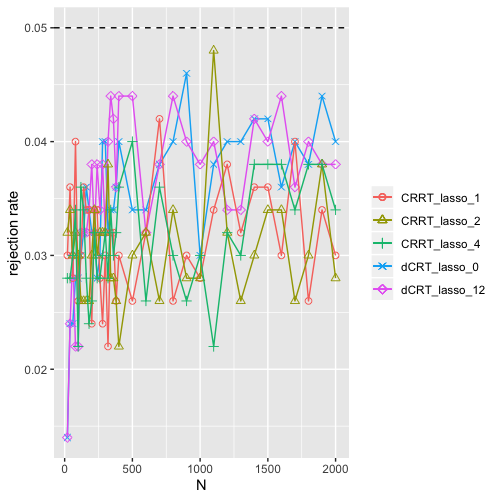} 
        \caption{$n=p$ Linear Regression, Lasso-Based Tests} \label{fig:ul3}
    \end{subfigure}
    \hfill
    \begin{subfigure}[t]{0.49\textwidth}
        \centering
        \includegraphics[width=1\linewidth,height=0.6\linewidth]{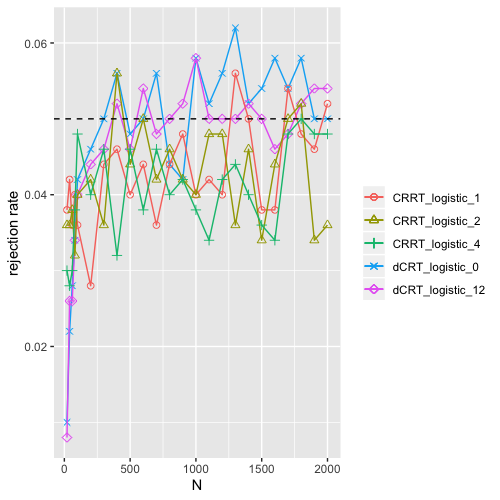} 
        \caption{$n>p$ Logistic Regression, Lasso-Based Tests} \label{fig:ul4}
    \end{subfigure}
\caption{The CRRT is less sensitive to misspecification of conditional distribution originated from unlabeled data estimation than the dCRT. The y-axis represents the proportion of times the null hypothesis of conditional independence is rejected out of 500 simulations. The x-axis represents the sample size of unlabeled data. Definitions of legends can be found in the Figure \ref{nondata}.}
\label{mis:unlabel}
\end{figure}

\noindent {\bf Results:} The plots in the Figure \ref{mis:unlabel} exhibit en encouraging phenomenon that both of the CRRT and the dCRT can basically control the type 1 error under 0.05 when there is a misspecification from unlabeled data estimation. The CRRT apparently has a more stable performance than the dCRT. The dCRT could be too conservative when $N$, the sample size of unlabeled data is relatively small. Besides, in the $n>p$ logistic regression setting, the dCRT's have observed type 1 error slightly exceeding the threshold.

%%%%%%%%%%%%%%%%%%%%%%
\subsubsection{Misspecification From Reusing Data}
Though losing theoretical guarantee, when we do not have abundant unlabeled data, we can estimate the conditional distribution by reusing data. We consider similar procedures as in the last part except that we estimate $Q(\cdot|Z)$ on labeled data instead of unlabeled data. In such case, $Q$ is no longer independent with our data for constructing test statistics and hence theoretical results on robustness proposed before fail to work here.

\begin{figure}[ht]
    \centering
    \begin{subfigure}[t]{0.49\textwidth}
        \centering
        \includegraphics[width=1\linewidth,height=0.6\linewidth]{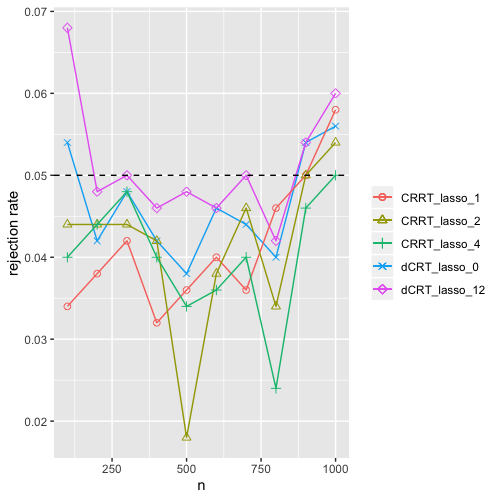} 
        \caption{Linear} \label{fig:label1}
    \end{subfigure}
    \hfill
    \begin{subfigure}[t]{0.49\textwidth}
        \centering
        \includegraphics[width=1\linewidth,height=0.6\linewidth]{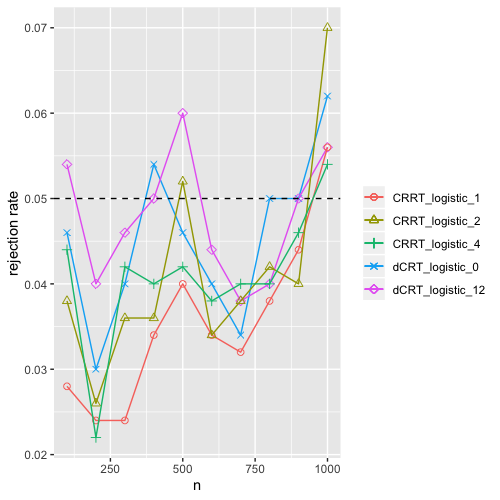} 
        \caption{Logistic} \label{fig:label2}
    \end{subfigure}
\caption{The CRRT and the dCRT have similar performance under misspecification of conditional distribution originated from reusing labeled data. The y-axis represents the proportion of times the null hypothesis of conditional independence is rejected out of 500 simulations. The x-axis represents the sample size of labeled data. Definitions of legends can be found in the Figure \ref{nondata}.}
\label{mis:label}
\end{figure}

\noindent {\bf Results:} We consider linear regression model and logistic regression model, with results shown in the Figure \ref{mis:label}. It is hard to tell whether the CRRT outperforms the dCRT here. They have similar performance while the dCRT is relatively more aggressive. Good news is that, both of these 2 types of test can basically control the observed type 1 error around the desired level. We can also see that their performances are not dependent on $n$, the sample size of data.

\section{Discussions}
\label{sec:discuss}
This paper propose the CRRT, a new method for testing conditional independence, generalizing the CRT proposed in \cite{candes2018panning}. It is more flexible, more efficient and comparably powerful with theoretically well grounded type 1 error control and robustness guarantee. We only discuss the case when the response variable $Y$ is univariate. But, we can trivially extend results presented in this paper to multi-dimensional settings. 

Again, conditional independence is not a testable test if the control variable is continuous (\cite{bergsma2004testing}, \cite{shah2018hardness}). Therefore, we need some pre-known information on the covariates model or the response model. The CRRT focuses on the former, like other model-X methods. However, usually knowledge concerning the response model is not completely absent. It would be pragmatically attractive to develop conditional sampling based methods that can incorporate information from both of the covariates model side and the response model side.

Reusing data is a commonly used approach in practice though usually leading to overfitting. It would be interesting to develop theories to theoretically justify the use of the CRRT combined with this approach, in terms of type 1 error control and robustness. Splitting data is a popular option when we have to perform estimation and testing on a same copy of data (\cite{wasserman2009high}). However, it is well known that simple splitting can cause tremendous power loss. Some cross-splitting approaches are proposed as remedies (\cite{chernozhukov2018double}, \cite{barber2019knockoff}). It seems promising to combine these approaches with the CRRT.

%%%%%%%%%%%%%
%%%%%%%%%%%%%
%%%%%%%%%%%%%

\nocite{*}
{\small \bibliography{Conditional_Randomization_Rank_Test}}

\newcommand{\etalchar}[1]{$^{#1}$}
\begin{thebibliography}{VdGBR{\etalchar{+}}14}

\bibitem[BC{\etalchar{+}}13]{belloni2013least}
Alexandre Belloni, Victor Chernozhukov, et~al.
\newblock Least squares after model selection in high-dimensional sparse
  models.
\newblock {\em Bernoulli}, 19(2):521--547, 2013.

\bibitem[BC{\etalchar{+}}15]{barber2015controlling}
Rina~Foygel Barber, Emmanuel~J Cand{\`e}s, et~al.
\newblock Controlling the false discovery rate via knockoffs.
\newblock {\em The Annals of Statistics}, 43(5):2055--2085, 2015.

\bibitem[BC{\etalchar{+}}19]{barber2019knockoff}
Rina~Foygel Barber, Emmanuel~J Cand{\`e}s, et~al.
\newblock A knockoff filter for high-dimensional selective inference.
\newblock {\em The Annals of Statistics}, 47(5):2504--2537, 2019.

\bibitem[BCS18]{barber2018robust}
Rina~Foygel Barber, Emmanuel~J Cand{\`e}s, and Richard~J Samworth.
\newblock Robust inference with knockoffs.
\newblock {\em arXiv preprint arXiv:1801.03896}, 2018.

\bibitem[Ber04]{bergsma2004testing}
Wicher~Pieter Bergsma.
\newblock {\em Testing conditional independence for continuous random
  variables}.
\newblock Eurandom, 2004.

\bibitem[BM{\etalchar{+}}58]{box1958note}
GEP Box, Mervin~E Muller, et~al.
\newblock A note on the generation of random normal deviates.
\newblock {\em The Annals of Mathematical Statistics}, 29(2):610--611, 1958.

\bibitem[Bre01]{breiman2001random}
Leo Breiman.
\newblock Random forests.
\newblock {\em Machine learning}, 45(1):5--32, 2001.

\bibitem[BRT{\etalchar{+}}09]{bickel2009simultaneous}
Peter~J Bickel, Ya’acov Ritov, Alexandre~B Tsybakov, et~al.
\newblock Simultaneous analysis of lasso and dantzig selector.
\newblock {\em The Annals of statistics}, 37(4):1705--1732, 2009.

\bibitem[BSSC20]{bates2020causal}
Stephen Bates, Matteo Sesia, Chiara Sabatti, and Emmanuel Candes.
\newblock Causal inference in genetic trio studies.
\newblock {\em arXiv preprint arXiv:2002.09644}, 2020.

\bibitem[BWBS19]{berrett2019conditional}
Thomas~B Berrett, Yi~Wang, Rina~Foygel Barber, and Richard~J Samworth.
\newblock The conditional permutation test for independence while controlling
  for confounders.
\newblock {\em Journal of the Royal Statistical Society: Series B (Statistical
  Methodology)}, 2019.

\bibitem[Cam06]{campos2006scoring}
Luis M~de Campos.
\newblock A scoring function for learning bayesian networks based on mutual
  information and conditional independence tests.
\newblock {\em Journal of Machine Learning Research}, 7(Oct):2149--2187, 2006.

\bibitem[CBL98]{cheng1998learning}
Jie Cheng, David Bell, and Weiru Liu.
\newblock Learning bayesian networks from data: An efficient approach based on
  information theory.
\newblock {\em On World Wide Web at http://www. cs. ualberta. ca/\~{}
  jcheng/bnpc. htm}, 1998.

\bibitem[CCD{\etalchar{+}}18]{chernozhukov2018double}
Victor Chernozhukov, Denis Chetverikov, Mert Demirer, Esther Duflo, Christian
  Hansen, Whitney Newey, and James Robins.
\newblock Double/debiased machine learning for treatment and structural
  parameters, 2018.

\bibitem[CFJL18]{candes2018panning}
Emmanuel Cand{\`e}s, Yingying Fan, Lucas Janson, and Jinchi Lv.
\newblock Panning for gold:‘model-x’knockoffs for high dimensional
  controlled variable selection.
\newblock {\em Journal of the Royal Statistical Society: Series B (Statistical
  Methodology)}, 80(3):551--577, 2018.

\bibitem[CG16]{chen2016xgboost}
Tianqi Chen and Carlos Guestrin.
\newblock Xgboost: A scalable tree boosting system.
\newblock In {\em Proceedings of the 22nd acm sigkdd international conference
  on knowledge discovery and data mining}, pages 785--794, 2016.

\bibitem[CL{\etalchar{+}}13]{chatterjee2013rates}
Arindam Chatterjee, Soumendra~N Lahiri, et~al.
\newblock Rates of convergence of the adaptive lasso estimators to the oracle
  distribution and higher order refinements by the bootstrap.
\newblock {\em The Annals of Statistics}, 41(3):1232--1259, 2013.

\bibitem[CLL11]{cai2011constrained}
Tony Cai, Weidong Liu, and Xi~Luo.
\newblock A constrained l1 minimization approach to sparse precision matrix
  estimation.
\newblock {\em Journal of the American Statistical Association},
  106(494):594--607, 2011.

\bibitem[Daw79]{dawid1979conditional}
A~Philip Dawid.
\newblock Conditional independence in statistical theory.
\newblock {\em Journal of the Royal Statistical Society: Series B
  (Methodological)}, 41(1):1--15, 1979.

\bibitem[DMZS14]{doran2014permutation}
Gary Doran, Krikamol Muandet, Kun Zhang, and Bernhard Sch{\"o}lkopf.
\newblock A permutation-based kernel conditional independence test.
\newblock In {\em UAI}, pages 132--141, 2014.

\bibitem[EHJ{\etalchar{+}}04]{efron2004least}
Bradley Efron, Trevor Hastie, Iain Johnstone, Robert Tibshirani, et~al.
\newblock Least angle regression.
\newblock {\em The Annals of statistics}, 32(2):407--499, 2004.

\bibitem[FGSS08]{fukumizu2008kernel}
Kenji Fukumizu, Arthur Gretton, Xiaohai Sun, and Bernhard Sch{\"o}lkopf.
\newblock Kernel measures of conditional dependence.
\newblock In {\em Advances in neural information processing systems}, pages
  489--496, 2008.

\bibitem[FHT08]{friedman2008sparse}
Jerome Friedman, Trevor Hastie, and Robert Tibshirani.
\newblock Sparse inverse covariance estimation with the graphical lasso.
\newblock {\em Biostatistics}, 9(3):432--441, 2008.

\bibitem[GFT{\etalchar{+}}08]{gretton2008kernel}
Arthur Gretton, Kenji Fukumizu, Choon~H Teo, Le~Song, Bernhard Sch{\"o}lkopf,
  and Alex~J Smola.
\newblock A kernel statistical test of independence.
\newblock In {\em Advances in neural information processing systems}, pages
  585--592, 2008.

\bibitem[GHS{\etalchar{+}}05]{gretton2005kernel}
Arthur Gretton, Ralf Herbrich, Alexander Smola, Olivier Bousquet, and Bernhard
  Sch{\"o}lkopf.
\newblock Kernel methods for measuring independence.
\newblock {\em Journal of Machine Learning Research}, 6(Dec):2075--2129, 2005.

\bibitem[GVO11]{gijbels2011conditional}
Ir{\`e}ne Gijbels, No{\"e}l Veraverbeke, and Marel Omelka.
\newblock Conditional copulas, association measures and their applications.
\newblock {\em Computational Statistics \& Data Analysis}, 55(5):1919--1932,
  2011.

\bibitem[GZ18]{gimenez2018improving}
Jaime~Roquero Gimenez and James Zou.
\newblock Improving the stability of the knockoff procedure: Multiple
  simultaneous knockoffs and entropy maximization.
\newblock {\em arXiv preprint arXiv:1810.11378}, 2018.

\bibitem[H{\etalchar{+}}10]{huang2010testing}
Tzee-Ming Huang et~al.
\newblock Testing conditional independence using maximal nonlinear conditional
  correlation.
\newblock {\em The Annals of Statistics}, 38(4):2047--2091, 2010.

\bibitem[HJ19]{huang2019relaxing}
Dongming Huang and Lucas Janson.
\newblock Relaxing the assumptions of knockoffs by conditioning.
\newblock {\em arXiv preprint arXiv:1903.02806}, 2019.

\bibitem[HL10]{harman2010decompositional}
Radoslav Harman and Vladim{\'\i}r Lacko.
\newblock On decompositional algorithms for uniform sampling from n-spheres and
  n-balls.
\newblock {\em Journal of Multivariate Analysis}, 101(10):2297--2304, 2010.

\bibitem[JM14a]{javanmard2014confidence}
Adel Javanmard and Andrea Montanari.
\newblock Confidence intervals and hypothesis testing for high-dimensional
  regression.
\newblock {\em The Journal of Machine Learning Research}, 15(1):2869--2909,
  2014.

\bibitem[JM14b]{javanmard2014hypothesis}
Adel Javanmard and Andrea Montanari.
\newblock Hypothesis testing in high-dimensional regression under the gaussian
  random design model: Asymptotic theory.
\newblock {\em IEEE Transactions on Information Theory}, 60(10):6522--6554,
  2014.

\bibitem[KB07]{kalisch2007estimating}
Markus Kalisch and Peter B{\"u}hlmann.
\newblock Estimating high-dimensional directed acyclic graphs with the
  pc-algorithm.
\newblock {\em Journal of Machine Learning Research}, 8(Mar):613--636, 2007.

\bibitem[LC21]{li2021deploying}
Shuangning Li and Emmanuel~J Cand{\`e}s.
\newblock Deploying the conditional randomization test in high multiplicity
  problems.
\newblock {\em arXiv preprint arXiv:2110.02422}, 2021.

\bibitem[Lec01]{lechner2001identification}
Michael Lechner.
\newblock Identification and estimation of causal effects of multiple
  treatments under the conditional independence assumption.
\newblock In {\em Econometric evaluation of labour market policies}, pages
  43--58. Springer, 2001.

\bibitem[LF16]{lei2016power}
Lihua Lei and William Fithian.
\newblock Power of ordered hypothesis testing.
\newblock In {\em International Conference on Machine Learning}, pages
  2924--2932, 2016.

\bibitem[LF18]{lei2018adapt}
Lihua Lei and William Fithian.
\newblock Adapt: an interactive procedure for multiple testing with side
  information.
\newblock {\em Journal of the Royal Statistical Society}, 2018.

\bibitem[LG97]{linton1997conditional}
Oliver Linton and Pedro Gozalo.
\newblock Conditional independence restrictions: testing and estimation.
\newblock {\em V Cowles Foundation Discussion Paper}, 1140, 1997.

\bibitem[LKJR20]{liu2020fast}
Molei Liu, Eugene Katsevich, Lucas Janson, and Aaditya Ramdas.
\newblock Fast and powerful conditional randomization testing via distillation.
\newblock {\em arXiv preprint arXiv:2006.03980}, 2020.

\bibitem[Lou14]{louppe2014understanding}
Gilles Louppe.
\newblock Understanding random forests: From theory to practice.
\newblock {\em arXiv preprint arXiv:1407.7502}, 2014.

\bibitem[LRF17]{lei2017star}
Lihua Lei, Aaditya Ramdas, and William Fithian.
\newblock Star: A general interactive framework for fdr control under
  structural constraints.
\newblock {\em arXiv preprint arXiv:1710.02776}, 2017.

\bibitem[M{\etalchar{+}}46]{mosteller1946some}
Frederick Mosteller et~al.
\newblock On some useful" inefficient" statistics.
\newblock {\em The Annals of Mathematical Statistics}, 17(4):377--408, 1946.

\bibitem[Pea88]{pearl1988probabilistic}
Judea Pearl.
\newblock {\em Probabilistic Reasoning in Intelligent Systems: Networks of
  Plausible Inference}.
\newblock Morgan Kaufmann, 1988.

\bibitem[RBL{\etalchar{+}}08]{rothman2008sparse}
Adam~J Rothman, Peter~J Bickel, Elizaveta Levina, Ji~Zhu, et~al.
\newblock Sparse permutation invariant covariance estimation.
\newblock {\em Electronic Journal of Statistics}, 2:494--515, 2008.

\bibitem[Ros02]{book}
Paul Rosenbaum.
\newblock {\em Observational studies}.
\newblock Springer, 2 edition, 2002.

\bibitem[RSC19]{romano2019deep}
Yaniv Romano, Matteo Sesia, and Emmanuel Cand{\`e}s.
\newblock Deep knockoffs.
\newblock {\em Journal of the American Statistical Association}, pages 1--12,
  2019.

\bibitem[Run18]{runge2018conditional}
Jakob Runge.
\newblock Conditional independence testing based on a nearest-neighbor
  estimator of conditional mutual information.
\newblock In {\em International Conference on Artificial Intelligence and
  Statistics}, pages 938--947, 2018.

\bibitem[RWL{\etalchar{+}}10]{ravikumar2010high}
Pradeep Ravikumar, Martin~J Wainwright, John~D Lafferty, et~al.
\newblock High-dimensional ising model selection using $l_1$-regularized
  logistic regression.
\newblock {\em The Annals of Statistics}, 38(3):1287--1319, 2010.

\bibitem[S{\etalchar{+}}09]{song2009testing}
Kyungchul Song et~al.
\newblock Testing conditional independence via rosenblatt transforms.
\newblock {\em The Annals of Statistics}, 37(6B):4011--4045, 2009.

\bibitem[SGSH00]{spirtes2000causation}
Peter Spirtes, Clark~N Glymour, Richard Scheines, and David Heckerman.
\newblock {\em Causation, prediction, and search}.
\newblock MIT press, 2000.

\bibitem[SKB{\etalchar{+}}20]{sesia2020multi}
Matteo Sesia, Eugene Katsevich, Stephen Bates, Emmanuel Cand{\`e}s, and Chiara
  Sabatti.
\newblock Multi-resolution localization of causal variants across the genome.
\newblock {\em Nature communications}, 11(1):1--10, 2020.

\bibitem[SP18]{shah2018hardness}
Rajen~D Shah and Jonas Peters.
\newblock The hardness of conditional independence testing and the generalised
  covariance measure.
\newblock {\em arXiv preprint arXiv:1804.07203}, 2018.

\bibitem[Spi10]{spirtes2010introduction}
Peter Spirtes.
\newblock Introduction to causal inference.
\newblock {\em Journal of Machine Learning Research}, 11(5), 2010.

\bibitem[SSC19]{sesia2019gene}
Matteo Sesia, Chiara Sabatti, and Emmanuel~J Cand{\`e}s.
\newblock Gene hunting with hidden markov model knockoffs.
\newblock {\em Biometrika}, 106(1):1--18, 2019.

\bibitem[SSS{\etalchar{+}}17]{sen2017model}
Rajat Sen, Ananda~Theertha Suresh, Karthikeyan Shanmugam, Alexandros~G Dimakis,
  and Sanjay Shakkottai.
\newblock Model-powered conditional independence test.
\newblock In {\em Advances in neural information processing systems}, pages
  2951--2961, 2017.

\bibitem[SV95]{singh1995construction}
Moninder Singh and Marco Valtorta.
\newblock Construction of bayesian network structures from data: a brief survey
  and an efficient algorithm.
\newblock {\em International journal of approximate reasoning}, 12(2):111--131,
  1995.

\bibitem[SW07]{su2007consistent}
Liangjun Su and Halbert White.
\newblock A consistent characteristic function-based test for conditional
  independence.
\newblock {\em Journal of Econometrics}, 141(2):807--834, 2007.

\bibitem[SW08]{su2008nonparametric}
Liangjun Su and Halbert White.
\newblock A nonparametric hellinger metric test for conditional independence.
\newblock {\em Econometric Theory}, pages 829--864, 2008.

\bibitem[SW14]{su2014testing}
Liangjun Su and Halbert White.
\newblock Testing conditional independence via empirical likelihood.
\newblock {\em Journal of Econometrics}, 182(1):27--44, 2014.

\bibitem[SZ12]{sun2012scaled}
Tingni Sun and Cun-Hui Zhang.
\newblock Scaled sparse linear regression.
\newblock {\em Biometrika}, 99(4):879--898, 2012.

\bibitem[SZV19]{strobl2019approximate}
Eric~V Strobl, Kun Zhang, and Shyam Visweswaran.
\newblock Approximate kernel-based conditional independence tests for fast
  non-parametric causal discovery.
\newblock {\em Journal of Causal Inference}, 7(1), 2019.

\bibitem[TB10]{tsamardinos2010permutation}
Ioannis Tsamardinos and Giorgos Borboudakis.
\newblock Permutation testing improves bayesian network learning.
\newblock In {\em Joint European Conference on Machine Learning and Knowledge
  Discovery in Databases}, pages 322--337. Springer, 2010.

\bibitem[TVZ{\etalchar{+}}18]{tansey2018holdout}
Wesley Tansey, Victor Veitch, Haoran Zhang, Raul Rabadan, and David~M Blei.
\newblock The holdout randomization test: Principled and easy black box feature
  selection.
\newblock {\em arXiv preprint arXiv:1811.00645}, 2018.

\bibitem[VdGBR{\etalchar{+}}14]{van2014asymptotically}
Sara Van~de Geer, Peter B{\"u}hlmann, Ya’acov Ritov, Ruben Dezeure, et~al.
\newblock On asymptotically optimal confidence regions and tests for
  high-dimensional models.
\newblock {\em The Annals of Statistics}, 42(3):1166--1202, 2014.

\bibitem[VGS17]{voelker2017efficiently}
Aaron~R Voelker, Jan Gosmann, and Terrence~C Stewart.
\newblock Efficiently sampling vectors and coordinates from the n-sphere and
  n-ball.
\newblock Technical report, Tech. Rep.) Waterloo, ON: Centre for Theoretical
  Neuroscience. doi: 10.13140~…, 2017.

\bibitem[VOG11]{veraverbeke2011estimation}
No{\"e}l Veraverbeke, Marek Omelka, and Irene Gijbels.
\newblock Estimation of a conditional copula and association measures.
\newblock {\em Scandinavian Journal of Statistics}, 38(4):766--780, 2011.

\bibitem[WPH{\etalchar{+}}15]{wang2015conditional}
Xueqin Wang, Wenliang Pan, Wenhao Hu, Yuan Tian, and Heping Zhang.
\newblock Conditional distance correlation.
\newblock {\em Journal of the American Statistical Association},
  110(512):1726--1734, 2015.

\bibitem[WR09]{wasserman2009high}
Larry Wasserman and Kathryn Roeder.
\newblock High dimensional variable selection.
\newblock {\em Annals of statistics}, 37(5A):2178, 2009.

\bibitem[YL07]{yuan2007model}
Ming Yuan and Yi~Lin.
\newblock Model selection and estimation in the gaussian graphical model.
\newblock {\em Biometrika}, 94(1):19--35, 2007.

\bibitem[ZH{\etalchar{+}}08]{zhang2008sparsity}
Cun-Hui Zhang, Jian Huang, et~al.
\newblock The sparsity and bias of the lasso selection in high-dimensional
  linear regression.
\newblock {\em The Annals of Statistics}, 36(4):1567--1594, 2008.

\bibitem[Zhu05]{zhu2005semi}
Xiaojin~Jerry Zhu.
\newblock Semi-supervised learning literature survey.
\newblock Technical report, University of Wisconsin-Madison Department of
  Computer Sciences, 2005.

\bibitem[ZPJS12]{zhang2012kernel}
Kun Zhang, Jonas Peters, Dominik Janzing, and Bernhard Sch{\"o}lkopf.
\newblock Kernel-based conditional independence test and application in causal
  discovery.
\newblock {\em arXiv preprint arXiv:1202.3775}, 2012.

\bibitem[ZZ14]{zhang2014confidence}
Cun-Hui Zhang and Stephanie~S Zhang.
\newblock Confidence intervals for low dimensional parameters in high
  dimensional linear models.
\newblock {\em Journal of the Royal Statistical Society: Series B: Statistical
  Methodology}, pages 217--242, 2014.

\end{thebibliography}

\bibliographystyle{alpha}

%%%%%%%%%%%%%%%%%%%%%%%%%%
\clearpage
\appendix
\noindent\textbf{\Large Appendix}
\section{Supplementary Analysis on the Robustness of CRRT}
\label{supp_robust}

\subsection{Continuous Analysis on the Example 2}

Here, we assume that $\lim\limits_{n\rightarrow \infty}b = \infty$. According to the result provided in Theorem 3, for any measurable functions of $z$, $\varepsilon_1(z),\ldots,\varepsilon_b(z)$, the following inequality holds,
\begin{equation}
\mathbb{P}\left(P_{CRRT}\leq \alpha\right) \leq \mathbb{E}\left[\frac{\lfloor (b+1)\alpha\rfloor}{1+e^{-\varepsilon_1}+\ldots+e^{-\varepsilon_b}}\right] + \mathbb{P}\left(\bigcup\limits_{k=1}\limits^b\left\{M_{(k)}-M_0>\varepsilon_k\right\}\right).
\label{b1}
\end{equation}
We work with the second term on the righthand side firstly:
\begin{equation}
\begin{array}{ll}
& \mathbb{P}\left(\bigcup\limits_{k=1}\limits^b\left\{M_{(k)}-M_0>\varepsilon_k\right\}\right) = \mathbb{E}\left[\mathbb{P}\left(\bigcup\limits_{k=1}\limits^b\left\{M_{(k)}-M_0>\varepsilon_k\right\}|z\right)\right]\\

= &  \mathbb{E}\left[\mathbbm{1}\left\{|M_0|\leq \sqrt{\frac{np^2\log p}{N}}f(n)\right\}\mathbb{P}\left(\bigcup\limits_{k=1}\limits^b\left\{M_{(k)}-M_0>\varepsilon_k\right\}|z\right)\right]\\

& + \mathbb{E}\left[\mathbbm{1}\left\{|M_0|> \sqrt{\frac{np^2\log p}{N}}f(n)\right\}\mathbb{P}\left(\bigcup\limits_{k=1}\limits^b\left\{M_{(k)}-M_0>\varepsilon_k\right\}|z\right)\right]\\

\leq & \mathbb{E}\left[\mathbb{P}\left(\bigcup\limits_{k=1}\limits^b\left\{M_{(k)}>\varepsilon_k-\sqrt{\frac{np^2\log p}{N}}f(n)\right\}|z\right)\right] + \mathbb{P}\left(|M_0|> \sqrt{\frac{np^2\log p}{N}}f(n)\right)\\

= & \mathbb{E}\left[\mathbb{P}\left(\bigcup\limits_{k=1}\limits^b\left\{M_{(k)}>\varepsilon_k-\sqrt{\frac{np^2\log p}{N}}f(n)\right\}|z\right)\right] + o(1),\\
\end{array}
\label{b2}
\end{equation}
where the third step is based on that $\sqrt{\frac{N}{np^2\log p}}M_0=O_p(1)$. Next, we have
\begin{equation}
\begin{array}{ll}
& \mathbb{P}\left(\bigcup\limits_{k=1}\limits^b\left\{M_{(k)}>\varepsilon_k-\sqrt{\frac{np^2\log p}{N}}f(n)\right\}|z\right) \\

\leq & \mathbb{P}\left(\bigcup\limits_{k=1}\limits^b\left\{\sum\limits_{i=1}\limits^b\mathbbm{1}\left\{M_i\leq \varepsilon_k-\sqrt{\frac{np^2\log p}{N}}f(n)\right\}\leq k\right\}|z\right)\\

= & \mathbb{P}\left(\bigcup\limits_{k=1}\limits^b\left\{\frac{1}{b}\sum\limits_{i=1}\limits^b\mathbbm{1}\left\{M_i\leq \varepsilon_k-\sqrt{\frac{np^2\log p}{N}}f(n)\right\} \leq \frac{k}{b} \right\}|z\right)\\

= & \mathbb{P}\left(\bigcup\limits_{k=1}\limits^b\left\{F_b\left(\varepsilon_k- \sqrt{\frac{np^2\log p}{N}}f(n)\right) - F\left(\varepsilon_k- \sqrt{\frac{np^2\log p}{N}}f(n)\right) \right. \right.\\
& \left.\left. \leq \frac{k}{b}-F\left(\varepsilon_k- \sqrt{\frac{np^2\log p}{N}}f(n)\right)\right\}|z\right),\\
\end{array}
\label{b3}
\end{equation}
where $F_b$ is the empirical CDF of $M_1|z$ and $F$ is the true CDF of $M_1|z$. Denote $\eta\triangleq \frac{\log b}{b}$, $\delta\triangleq e^{-\frac{h^2(n)}{2}}$, where $h(n)$ satisfies that $\lim\limits_{n\rightarrow \infty}h(n)=\infty$ and $\lim\limits_{n\rightarrow \infty}\sqrt{\frac{np^2\log p}{N}}f(n)h(n)=0$, and set
$$
\left\{
\begin{array}{ll}
\varepsilon_k=F^{-1}\left(\frac{k}{b}+\eta\right)+ \sqrt{\frac{np^2\log p}{N}}f(n), & if\ 1\leq k\leq b(1-\eta-\delta),\\
\varepsilon_k = \infty, & if\ b(1-\eta-\delta)<k\leq b.
\end{array}
\right.
$$
Under such settings, $\frac{k}{b}-F\left(\varepsilon_k- \sqrt{\frac{np^2\log p}{N}}f(n)\right)=-\eta$ when $1\leq k\leq b(1-\eta-\delta)$. Hence, based on (\ref{b3}), we have
\begin{equation}
\resizebox{.90\hsize}{!}{$
\begin{array}{ll}
& \mathbb{P}\left(\bigcup\limits_{k=1}\limits^b\left\{M_{(k)}>\varepsilon_k-\sqrt{\frac{np^2\log p}{N}}f(n)\right\}|z\right) \\

= & \mathbb{P}\left(\bigcup\limits_{k=1}\limits^{\lfloor b(1-\eta-\delta)\rfloor}\left\{F_b\left(\varepsilon_k- \sqrt{\frac{np^2\log p}{N}}f(n)\right) - F\left(\varepsilon_k- \sqrt{\frac{np^2\log p}{N}}f(n)\right) \leq -\eta\right\}|z\right)\\

\leq & \mathbb{P}\left(\inf\limits_{x\in\mathbb{R}}\left\{F_b(x)-F(x)\right\}\leq -\eta|z \right) \leq e^{-2b\eta^2},\\
\end{array}
$}
\label{b4}
\end{equation}
where the last step is based on the Dvoretzky–Kiefer–Wolfowitz inequality. Combining results in (\ref{b4}) and (\ref{b2}), we know that $\mathbb{P}\left(\bigcup\limits_{k=1}\limits^b\left\{M_{(k)}-M_0>\varepsilon_k\right\}\right) = o(1)$ as $n\rightarrow \infty$. Denote $D\triangleq \frac{1}{\sigma^2}(\hat{\beta}-\beta)^Tz^Tz(\hat{\beta}-\beta)$. Then, for $1\leq k \leq b(1-\eta-\delta)$,
\begin{equation}
\resizebox{.90\hsize}{!}{$
\begin{array}{ll}
& \mathbb{P}\left(\frac{M_1-D}{\sqrt{D}}>\sqrt{-2\log\left(1-\frac{k}{b}-\eta\right)}|z\right)\leq e^{-\frac{1}{2}[-2\log (1-\frac{k}{b}-\eta)]}=1-\frac{k}{b}-\eta,\\
\Rightarrow & \frac{k}{b}+\eta \leq \mathbb{P}\left(\frac{M_1-D}{\sqrt{D}}\leq \sqrt{-2\log\left(1-\frac{k}{b}-\eta\right)}|z\right) = F\left( \sqrt{-2D\log\left(1-\frac{k}{b}-\eta\right)}+D\right),\\
\Rightarrow & F^{-1}\left(\frac{k}{b}+\eta\right) \leq \sqrt{-2D\log\left(1-\frac{k}{b}-\eta\right)}+D,\\
\Rightarrow &  \varepsilon_k \leq  \sqrt{-2D\log\left(1-\frac{k}{b}-\eta\right)}+D + \sqrt{\frac{np^2\log p}{N}}f(n),\\
\end{array}
$}
\label{b5}
\end{equation}
where the first result is based on the concentration inequality. Under the current settings of $\varepsilon_k$'s, we have
\begin{equation}
\resizebox{.90\hsize}{!}{$
\begin{array}{ll}
 & \mathbb{E}\left[\frac{\lfloor (b+1)\alpha\rfloor}{1+e^{-\varepsilon_1}+\ldots+e^{-\varepsilon_b}}\right]= \mathbb{E}\left[\frac{\lfloor (b+1)\alpha\rfloor}{1+\sum\limits_{k=1}\limits^{\lfloor b(1-\eta-\delta)\rfloor }e^{-\varepsilon_k}}\right]\\
 
\leq &  \mathbb{E}\left[\frac{\lfloor (b+1)\alpha\rfloor}{1+\sum\limits_{k=1}\limits^{\lfloor b(1-\eta-\delta)\rfloor}exp\left\{-\left[\sqrt{-2D\log\left(1-\frac{k}{b}-\eta\right)}+D + \sqrt{\frac{np^2\log p}{N}}f(n)\right]\right\}}\right]\\

\leq & \mathbb{E}\left[\frac{\lfloor (b+1)\alpha\rfloor}{1+\sum\limits_{k=1}\limits^{b(1-\eta-\delta)}exp\left\{-\left[\sqrt{-2\frac{np^2\log p}{N}f^2(n)\log\left(1-\frac{k}{b}-\eta\right)}+\frac{np^2\log p}{N}f^2(n) + \sqrt{\frac{np^2\log p}{N}}f(n)\right]\right\}}\right]+o(1)\\

\leq & \mathbb{E}\left[\frac{\lfloor (b+1)\alpha\rfloor}{1+\sum\limits_{k=1}\limits^{b(1-\eta-\delta)}exp\left\{-\left[\sqrt{\frac{np^2\log p}{N}f^2(n)h^2(n)}+\frac{np^2\log p}{N}f^2(n) + \sqrt{\frac{np^2\log p}{N}}f(n)\right]\right\}}\right]+o(1)\\

\rightarrow & \alpha,
\end{array}
$}
\label{b6}
\end{equation}
as $n\rightarrow \infty$. We can see that $F$ depends only on $z$ and therefore $\varepsilon_k$'s are independent of $x$ and $y$ conditional on $z$. Thus, combining results in (\ref{b6}), (\ref{b1}) and that $\mathbb{P}\left(\bigcup\limits_{k=1}\limits^b\left\{M_{(k)}-M_0>\varepsilon_k\right\}\right) = o(1)$ as $n\rightarrow \infty$, we can conclude that $\mathbb{P}(P_{CRRT}\leq \alpha)\rightarrow \alpha$ as $n\rightarrow \infty$.

%%%%%%%%%%%%%%%%%%%%%%%%%

\subsection{The Impact of Large $b$, the Number of Conditional Samplings}
For convenience, we restate the last bound in the Theorem \ref{thm:robust2} here:
\begin{equation}
\mathbb{P}(P_{CRRT}\leq \alpha) \leq \inf\limits_{\varepsilon_1,\ldots,\varepsilon_b\ge 0} \left\{\frac{\lfloor (b+1)\alpha\rfloor}{1+e^{-\varepsilon_1}+\ldots+e^{-\varepsilon_b}}+ \mathbb{P}\left(\bigcup\limits_{k=1}\limits^{b}\{\widehat{KL}_{(k)}>\varepsilon_k\}\right)\right\}.
\label{klboundd}
\end{equation}
It appears that $\mathbb{P}\left(\bigcup\limits_{k=1}\limits^{b}\{\widehat{KL}_{(k)}>\varepsilon_k\}\right)$ is monotonically increasing with $b$. It is true when all $\varepsilon_1,\ldots,\varepsilon_b$ are identical and fixed. Such phenomenon is counterintuitive and may make this upper bound meaningless in practice. Luckily, $\varepsilon_1,\ldots,\varepsilon_b$ are adjustable and can vary with $b$. We roughly show that for any given small $\epsilon>0$, there exist a function $g(Q,Q^{\star},\alpha)$ such that
$$
\resizebox{.92\hsize}{!}{$
\inf\limits_{\varepsilon_1,\ldots,\varepsilon_b\ge 0} \left\{\frac{\lfloor (b+1)\alpha\rfloor}{1+e^{-\varepsilon_1}+\ldots+e^{-\varepsilon_b}}+ \mathbb{P}\left(\bigcup\limits_{k=1}\limits^{b}\{\widehat{KL}_{(k)}>\varepsilon_k\}\right)\right\} \leq g(Q,Q^{\star},\alpha) + o(1),\ as\ b\rightarrow \infty,
$}
$$
and
$$
g(Q,Q^{\star},\alpha) = \alpha(1+\epsilon),\ if\ Q=Q^{\star}.
$$
It's true that the above conclusion cannot demonstrate that the bound in (\ref{klboundd}) is monotonically decreasing with $b$. But, at least it implies that large $b$ won't have severe negative impact when the conditional distribution is misspecified. We don't intend to give a thorough and rigorous proof. We just roughly sketch our ideas here. For $k=1,\ldots,b$, let
$$
\widehat{KL}_k=\sum\limits_{i=1}\limits^n\log \frac{Q^{\star}(x_i|z_{[i]})Q(x_i^{(k)}|z_{[i]})}{Q(x_i|z_{[i]})Q^{\star}(x_i^{(k)}|z_{[i]})} \triangleq \bar{T}_k - \bar{T}_0,
$$
where $\bar{T}_k = \sum\limits_{i=1}\limits^n\log \frac{Q(x_i^{(k)}|z_{[i]})}{Q^{\star}(x_i^{(k)}|z_{[i]})}$ and $\bar{T}_0 = \sum\limits_{i=1}\limits^n\log \frac{Q(x_i|z_{[i]})}{Q^{\star}(x_i|z_{[i]})}$. Denote the ordered statistics of $\bar{T}_1,\ldots,\bar{T}_b$ by $\bar{T}_{(1)}\leq \ldots \leq \bar{T}_{(b)}$. We observe that the marginal distributions of $\bar{T}_0,\bar{T}_1,\ldots,\bar{T}_b$ do not vary with $b$. Besides, conditional on $z$, $\bar{T}_1,\ldots,\bar{T}_b$ are i.i.d. and independent of $\bar{T}_0$. Denote the distribution of $\bar{T}_1$ conditional on $z$ by $F_1(\cdot|z)$ and the distribution of $\bar{T}_0$ conditional on $z$ by $F_0(\cdot|z)$. Denote the quantile function associated with $F_1$ and $F_0$ by $F_1^{-1}$ and $F_0^{-1}$ respectively. For any fixed positive integer $m>1$, let $\varepsilon_k=\varepsilon_{\lceil\frac{lb}{m}\rceil}$, if $\lceil\frac{(l-1)b}{m}\rceil<k\leq \lceil\frac{lb}{m}\rceil$ for some integer $1\leq l\leq m$, and
$$
\left\{
\begin{array}{ll}
\varepsilon_b=\infty, & \\
\varepsilon_{\lceil\frac{lb}{m}\rceil}=\max\{0,F_1^{-1}(\frac{l}{m})-F_0^{-1}(\mathbb{E}\theta_z)\}, & l=1,\ldots,m-1, 
\end{array}
\right.
$$
where $\theta_z$ can be any measure of the distance between $Q$ and $Q^{\star}$ such that $\theta_z=0$ when $Q$ and $Q^{\star}$ are identical. Then, we have
\begin{equation}
\begin{array}{ll}
& \mathbb{P}\left(\bigcup\limits_{k=1}\limits^{b}\{\widehat{KL}_{(k)}>\varepsilon_k\}\right)\\

\leq & \mathbb{P}\left( \bigcup\limits_{l=1}\limits^{m}\{\widehat{KL}_{\left(\lceil\frac{lb}{m}\rceil\right)}>\varepsilon_{\lceil\frac{lb}{m}\rceil}\}\right)\\
= & \mathbb{P}\left( \bigcup\limits_{l=1}\limits^{m}\{\bar{T}_{(\lceil\frac{lb}{m}\rceil)}-\varepsilon_{\lceil\frac{lb}{m}\rceil}>\bar{T}_0\}\right)\\
=&\mathbb{E}\left[\mathbb{P}\left( \bigcup\limits_{l=1}\limits^{m}\{\bar{T}_{(\lceil\frac{lb}{m}\rceil)}-\varepsilon_{\lceil\frac{lb}{m}\rceil}>\bar{T}_0\}|z\right)\right]\\
\approx &  \mathbb{E}\left[\mathbb{P}\left( \bigcup\limits_{l=1}\limits^{m}\{\bar{T}_0<F_1^{-1}\left(\frac{l}{m}\right)-\varepsilon_{\lceil\frac{lb}{m}\rceil}\}|z\right)\right]\\
\leq& \mathbb{E}\left[\mathbb{P}\left(\bar{T}_0<F_0^{-1}(\mathbb{E}\theta_z)|z\right)\right]\\
= & \mathbb{E}\theta_z,
\end{array}
 \label{mosteller1}
\end{equation}
where the 4th step is based on the asymptotic results of ordered statistics given in Mosteller (1946). For any given $z$, 
$$
\mathbb{P}\left( \bigcup\limits_{l=1}\limits^{m}\{\bar{T}_{(\lceil\frac{lb}{m}\rceil)}-\varepsilon_{\lceil\frac{lb}{m}\rceil}>\bar{T}_0\}|z\right) = \mathbb{P}\left( \bigcup\limits_{l=1}\limits^{m}\{\bar{T}_0<F_1^{-1}\left(\frac{l}{m}\right)-\varepsilon_{\lceil\frac{lb}{m}\rceil}\}|z\right) +o(1)
$$
as $b\rightarrow \infty$. Hence, under mild conditions, for any given small $\nu>0$, there exists $S_Z\in\mathbb{R}^{n\times p}$, such that $\mathbb{P}(z\in S_Z)>1-\nu$ and 
$$
\mathbb{P}\left( \bigcup\limits_{l=1}\limits^{m}\{\bar{T}_{(\lceil\frac{lb}{m}\rceil)}-\varepsilon_{\lceil\frac{lb}{m}\rceil}>\bar{T}_0\}|z\right) = \mathbb{P}\left( \bigcup\limits_{l=1}\limits^{m}\{\bar{T}_0<F_1^{-1}\left(\frac{l}{m}\right)-\varepsilon_{\lceil\frac{lb}{m}\rceil}\}|z\right) +o(1)
$$ 
holds uniformly for $z\in S_Z$. Next, we have
\begin{equation}
\begin{array}{ll}
&\frac{\lfloor (b+1)\alpha\rfloor}{1+e^{-\varepsilon_1}+\ldots+e^{-\varepsilon_b}}\\
 = & \frac{\lfloor (b+1)\alpha\rfloor}{1+\sum\limits_{l=1}\limits^{m-1}\sum\limits_{k=\lceil\frac{(l-1)b}{m}\rceil +1}\limits^{\lceil \frac{lb}{m}\rceil}exp\left(-\max\{0,F_1^{-1}(\frac{l}{m})+F_0^{-1}(\mathbb{E}\theta_z)\}\right)}\\
 = &\frac{\alpha}{\frac{1}{m}\sum\limits_{l=1}\limits^{m-1}exp\left(-\max\{0,F_1^{-1}(\frac{l}{m})-F_0^{-1}(\mathbb{E}\theta_z)\}\right)}+o(1),\\
 \end{array}
\label{mosteller2}
\end{equation}
which will be $\frac{m}{m-1}\alpha+o(1)$ when $Q$ and $Q^{\star}$ are identical as $b\rightarrow \infty$. Results in (\ref{mosteller1}) and (\ref{mosteller2}) combined together can lead to our claim at the beginning.

%%%%%%%%%%%%%%%%%%%%%%%%%%%%%
\subsection{ A Generalization of Theorem \ref{thm:robust2}}
Here, we give a general type 1 error upper bound control result, with the Theorem \ref{thm:robust2} being its special form.
\begin{theorem}
(Type 1 Error Upper Bound 3.) Suppose conditions in the Theorem \ref{thm:robust2} hold. Let $\{u_0,u_1,\ldots,u_b\}$ be the unordered set of $\{x,x^{(1)},\ldots,x^{(b)}\}$. Denote $D(\cdot,z)\triangleq \frac{Q^{\star}(\cdot|z)}{Q(\cdot|z)}$. For a given $s$, we denote the ordered statistics of $\{D(u_k,z):k\ne s\}$ by $D^s_{(1)}\leq D^s_{(2)}\leq \ldots D^s_{(b)}$. For any given measurable functions $$
\varepsilon_1(u_0,u_1,\ldots,u_b,z,y),\ldots,\varepsilon_b(u_0,u_1,\ldots,u_b,z,y),
$$
$$
\eta_1(u_0,u_1,\ldots,u_b,z,y),\ldots,\eta_b(u_0,u_1,\ldots,u_b,z,y),
$$
and any given $\alpha\in[0,1]$, suppose the null hypothesis $H_0:X\perp Y|Z$ holds. For conciseness, we denote
$$
I(s) \triangleq \mathbbm{1}\left\{\bigcap\limits_{k=1}^b \left\{e^{\eta_k}\leq \frac{D(u_{s-1},z)}{D^{s-1}_{(b+1-k)}}\leq e^{\varepsilon_k}\right\}\right\}.
$$
Then, 
\begin{equation}
\begin{array}{ll}
& \mathbb{P}(P_{CRRT}\leq \alpha,\eta_1\leq \widehat{KL}_{(1)}\leq \varepsilon_1,\ldots,\eta_b\leq\widehat{KL}_{(b)}\leq \varepsilon_b|u_0,\ldots,u_b,z,y)\\
\leq & \min\left\{\sum\limits_{s=1}\limits^{\lfloor (b+1)\alpha\rfloor}\frac{I(s)}{I(s)+e^{-\varepsilon_1}+\ldots+e^{-\varepsilon_b}} ,  1-\sum\limits_{s=\lfloor (b+1)\alpha\rfloor+1}\limits^{b+1}\frac{I(s)}{I(s)+e^{\eta_1}+\ldots+e^{\eta_b}} \right\}.
\end{array}
\label{complete1}
\end{equation}
This leads to that,
$$
\begin{array}{ll}
& \mathbb{P}(P_{CRRT}\leq \alpha,\eta_1\leq \widehat{KL}_{(1)}\leq \varepsilon_1,\ldots,\eta_b\leq\widehat{KL}_{(b)}\leq \varepsilon_b)\\
\leq & \min\left\{\mathbb{E}\left[\sum\limits_{s=1}\limits^{\lfloor (b+1)\alpha\rfloor}\frac{I(s)}{I(s)+e^{-\varepsilon_1}+\ldots+e^{-\varepsilon_b}}\right],\mathbb{E}\left[1-\sum\limits_{s=\lfloor (b+1)\alpha\rfloor+1}\limits^{b+1}\frac{I(s)}{I(s)+e^{\eta_1}+\ldots+e^{\eta_b}}\right] \right\}\\
\triangleq  &\min\left\{M_1,M_2 \right\}.
\end{array}
$$
Since $\varepsilon_1,\ldots,\varepsilon_b,\eta_1,\ldots,\eta_b$ are arbitrary, we can obtain the following bound directly controlling type 1 error,
$$
\begin{array}{ll}
& \mathbb{P}(P_{CRRT}\leq \alpha)\\
 \leq & \inf\limits_{\varepsilon_1,\ldots,\varepsilon_b,\eta_1,\ldots,\eta_b} \left\{\min \{M_1,M_2\}+ \mathbb{P}\left(\bigcup\limits_{k=1}\limits^{b}\left\{\{\widehat{KL}_{(k)}>\varepsilon_k\}\cup\{\widehat{KL}_{(k)}<\eta_k\}\right\}\right)\right\}.
\end{array}
$$
\label{robust3}
\end{theorem}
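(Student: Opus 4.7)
The plan is to condition throughout on the unordered multiset $U=\{u_0,\ldots,u_b\}$, on $z$, and on $y$, so that the only residual randomness is the label $s\in\{1,\ldots,b+1\}$ indicating which element of $U$ equals the genuine $x$ (drawn from $Q^{\star}$) as opposed to a pseudo-copy (drawn from $Q$). Since the joint density of $(x,x^{(1)},\ldots,x^{(b)})$ at $(v_0,v_1,\ldots,v_b)$ is $Q^{\star}(v_0|z)\prod_{k\ge 1}Q(v_k|z)$ and $y\perp (x,x^{(1)},\ldots,x^{(b)})\mid z$ under $H_0$, Bayes' rule yields
$$R(s):=\mathbb{P}(x=u_{s-1}\mid U,z,y)=\frac{D(u_{s-1},z)}{\sum_{j=0}^{b}D(u_{j},z)},\qquad s=1,\ldots,b+1.$$
Continuity of $Q$ and $Q^{\star}$ makes the $u_j$'s distinct almost surely, so this conditional mass function is well-defined.

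Next I would use $X$-symmetry to reduce both the CRRT decision and the $\widehat{KL}$ statistics to functions of the label $s$. Applying the identity $T(y,z,\tilde{x}_{\text{permute}(\sigma)})=T(y,z,\tilde{x})_{\text{permute}(\sigma)}$ with any $\sigma$ that fixes column $0$ shows that $T_0$ depends only on which element sits in column $0$; call that value $\tau_{s-1}=\tau(u_{s-1},U,y,z)$. Choosing $\sigma$ to swap columns $0$ and $k$ then identifies $T_k$ with $\tau_{k'}$, where $u_{k'}$ is the element sitting in column $k$, so the output multiset is always $\{\tau_0,\ldots,\tau_b\}$. After relabelling $u_0,\ldots,u_b$ so that $\tau_0\ge\tau_1\ge\cdots\ge\tau_b$ (with any measurable tie-breaking rule), the event $\{p_{CRRT}\le\alpha\}$ is exactly $\{s\le\lfloor(b+1)\alpha\rfloor\}$. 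For the KL side, write $\widehat{KL}_k=\log D(x,z)-\log D(x^{(k)},z)$; ordering gives $\widehat{KL}_{(k)}=\log D(x,z)-\log D^{x}_{(b+1-k)}$, and on $\{x=u_{s-1}\}$ the multiset $\{D(x^{(k)},z)\}_{k}$ matches $\{D(u_j,z):j\ne s-1\}$, so $D^{x}_{(b+1-k)}=D^{s-1}_{(b+1-k)}$ and $\{\eta_k\le\widehat{KL}_{(k)}\le\varepsilon_k,\ \forall k\}=\{I(s)=1\}$. Combining,
$$\mathbb{P}\bigl(p_{CRRT}\le\alpha,\ \eta_k\le\widehat{KL}_{(k)}\le\varepsilon_k\ \forall k\ \bigm|\ U,z,y\bigr)=\sum_{s=1}^{\lfloor(b+1)\alpha\rfloor}I(s)R(s).$$

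The remaining step is to control $R(s)$ using $I(s)$. On $\{I(s)=1\}$, inverting the defining inequality yields $e^{-\varepsilon_k}\le D^{s-1}_{(b+1-k)}/D(u_{s-1},z)\le e^{-\eta_k}$; summing over $k$ and adding $1$ gives $1+\sum_k e^{-\varepsilon_k}\le \sum_j D(u_j,z)/D(u_{s-1},z)\le 1+\sum_k e^{-\eta_k}$, which inverts to the sandwich $\tfrac{I(s)}{I(s)+\sum_k e^{-\eta_k}}\le I(s)R(s)\le \tfrac{I(s)}{I(s)+\sum_k e^{-\varepsilon_k}}$ (both ends vanish on $\{I(s)=0\}$). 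Feeding the upper bound into the previous identity produces the $M_1$-type estimate. For the $M_2$-type estimate, I would start from $\mathbb{P}(p_{CRRT}\le\alpha,\,\cdots\mid U,z,y)\le \mathbb{P}(p_{CRRT}\le\alpha\mid U,z,y)=1-\sum_{s>\lfloor(b+1)\alpha\rfloor}R(s)$ and apply the lower bound $R(s)\ge I(s)R(s)\ge I(s)/(I(s)+\sum_k e^{-\eta_k})$ term-by-term. Taking the minimum of the two bounds gives (\ref{complete1}) and, on integration, the $\min\{M_1,M_2\}$ inequality. The final display then follows by splitting $\mathbb{P}(p_{CRRT}\le\alpha)$ into its intersection with the KL event (bounded by $\min\{M_1,M_2\}$) and its complement (bounded by a union bound over $\bigcup_k(\{\widehat{KL}_{(k)}>\varepsilon_k\}\cup\{\widehat{KL}_{(k)}<\eta_k\})$).

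The principal obstacle is making the $X$-symmetry reduction airtight: one must verify that $T_0$ is indeed a function of only the column-$0$ entry together with $U$, that the output multiset is intrinsic to $U$ in the required sense, and that the relabelling by $\tau$-value can be done measurably and with consistent tie-breaking so that $\{p_{CRRT}\le\alpha\}$ aligns cleanly with $\{s\le\lfloor(b+1)\alpha\rfloor\}$. The Bayes identity, the two-sided sandwich on $R(s)$, and the final union bound are then routine bookkeeping.
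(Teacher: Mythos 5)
Your proposal is correct and follows essentially the same route as the paper's proof: you condition on the unordered set $\{u_0,\ldots,u_b\}$, identify the rank of $T_0$ with the label of the true $x$ (the paper's Claim~1), rewrite the joint KL event as $\{I(s)=1\}$, and sandwich the conditional probability via the likelihood-ratio weights $D(u_j,z)$ --- your explicit posterior $R(s)=D(u_{s-1},z)/\sum_j D(u_j,z)$ is just the closed form of the paper's pairwise ratio identity (\ref{goldenratio}), and your step of summing the reciprocal ratios over $k$ and adding $1$ is exactly the paper's summation over $r\ne s$ in (\ref{gen4}). One remark: your derivation produces $e^{-\eta_k}$ in the denominator of the $M_2$-type bound, which is what the paper's own proof actually establishes in (\ref{gen6})--(\ref{gen7}); the $e^{\eta_k}$ appearing in the theorem statement appears to be a sign typo, so your version is the one that is actually provable.
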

%%%%%%%%%%%%%%%%%%%%%%%%%%
\begin{proof}[\bf Proof of Theorem \ref{robust3}]
We use same notations and similar ideas as in the proof of Theorem \ref{thm:robust2}. For any $s\ne r\in\{1,\ldots,1+b\}$,
\begin{equation}
\resizebox{.90\hsize}{!}{$
\begin{array}{cl}
& \mathbb{P}\left(T_0\ ranks\ s\ th,\eta_1\leq \widehat{KL}_{(1)}\leq \varepsilon_1,\ldots,\eta_b\leq \widehat{KL}_{(b)}\leq \varepsilon_b|u_0,u_1,\ldots,u_b,z,y\right)\\

= & \mathbb{E}\left(\mathbbm{1}\{T_0\ ranks\ s\ th\} \mathbbm{1}\left\{\eta_1\leq\widehat{KL}_{(1)}\leq \varepsilon_1,\ldots,\eta_b\leq\widehat{KL}_{(b)}\leq \varepsilon_b\right\}|u_0,u_1,\ldots,u_b,z,y\right)\\

= & \mathbb{E}\left(\mathbbm{1}\left\{T_0\ ranks\ s\ th\right\} \mathbbm{1}\left\{e^{\eta_1}\leq\frac{D(x^{(0)},z)}{D_{(b)}} \leq e^{\varepsilon_1},\ldots,e^{\eta_b}\leq\frac{D(x^{(0)},z)}{D_{(1)}} \leq e^{\varepsilon_b}\right\}|u_0,u_1,\ldots,u_b,z,y\right)\\

= & \mathbb{E}\left(\mathbbm{1}\left\{T_0\ ranks\ s\ th\right\} I(s)|u_0,u_1,\ldots,u_b,z,y\right)\\

= & I(s) \mathbb{P}\left(T_0\ ranks\ s\ th|u_0,u_1,\ldots,u_b,z,y\right)\\

= & I(s) \frac{D(u_{s-1},z)}{D(u_{r-1},z)}\mathbb{P}\left(T_0\ ranks\ r\ th|u_0,u_1,\ldots,u_b,z,y\right),\\
\end{array}
$}
\label{gen1}
\end{equation}
where the last step holds based on (\ref{goldenratio}). Actually, we have, 
\begin{equation}
\begin{array}{cl}
 & I(s)\frac{D(u_{s-1},z)}{D(u_{r-1},z)}\\
= & I(s)\frac{D(u_{s-1},z)}{D^{s-1}_{(b+1-\Phi_{s-1}(r-1))}}\\
\in  & \left[I(s) e^{\eta_{\Phi_{s-1}(r-1)}}, I(s) e^{\varepsilon_{\Phi_{s-1}(r-1)}}\right],\\ 
\end{array}
\label{gen2}
\end{equation}
where the definition of $\Phi$ is given in the proof of Theorem \ref{thm:robust2}. Combining results in (\ref{gen1}) and (\ref{gen2}), we have
\begin{equation}
\begin{array}{ll}
& \frac{I(s)\mathbb{P}\left(T_0\ ranks\ r\ th|u_0,u_1,\ldots,u_b,z,y\right)}{\mathbb{P}\left(T_0\ ranks\ s\ th,\eta_1\leq \widehat{KL}_{(1)}\leq \varepsilon_1,\ldots,\eta_b\leq \widehat{KL}_{(b)}\leq \varepsilon_b|u_0,u_1,\ldots,u_b,z,y\right)} \\
\in & \left[ e^{-\varepsilon_{\Phi_{s-1}(r-1)}},e^{-\eta_{\Phi_{s-1}(r-1)}}\right].\\
\end{array}
\label{gen3}
\end{equation}
For any given $s$, we have
\begin{equation}
\resizebox{.90\hsize}{!}{$
\begin{array}{rl}
& I(s)\\
= & \sum\limits_{r=1}\limits^{b+1} I(s)\mathbb{P}\left(T_0\ ranks\ r\ th|u_0,u_1,\ldots,u_b,z,y\right)\\

=&  I(s)\mathbb{P}(T_0\ ranks\ s\ th|u_0,u_1,\ldots,u_b,z,y) \\
&+ \sum\limits_{1\leq r \leq b+1,r\ne s}I(s)\mathbb{P}(T_0\ ranks\ r\ th|u_0,u_1,\ldots,u_b,z,y)\\
\ge & \mathbb{P}(T_0\ ranks\ s\ th,\eta_1\leq\widehat{KL}_{(1)}\leq \varepsilon_1,\ldots,\eta_b\leq\widehat{KL}_{(b)}\leq  \varepsilon_b|u_0,\ldots,u_b,z,y)\sum\limits_{\mathop{1\leq r \leq b+1,}\limits_{r\ne s}}e^{-\varepsilon_{\Phi_{s-1}(r-1)}}\\
& + I(s)\mathbb{P}(T_0\ ranks\ s\ th|u_0,u_1,\ldots,u_b,z,y)\\
\ge & \mathbb{P}(T_0\ ranks\ s\ th,\eta_1\leq\widehat{KL}_{(1)}\leq \varepsilon_1,\ldots,\eta_b\leq\widehat{KL}_{(b)}\leq  \varepsilon_b|u_0,\ldots,u_b,z,y)(I(s)+\sum\limits_{r=1}\limits^{b}e^{-\varepsilon_r}),\\
\end{array}
$}
\label{gen4}
\end{equation}
where the third step is based on (\ref{gen3}) and the last step is legitimate due to $\Phi_s$'s one-to-one property, $s=0,1,\ldots,b$. Therefore, we have
\begin{equation}
\begin{array}{cl}
& \mathbb{P}\left(P_{CRRT}\leq \alpha,\eta_1\leq \widehat{KL}_{(1)}\leq \varepsilon_1,\ldots,\eta_b\leq \widehat{KL}_{(b)}\leq  \varepsilon_b|u_0,\ldots,u_b,z,y\right)\\
= & \sum\limits_{s=1}\limits^{\lfloor (b+1)\alpha\rfloor}\mathbb{P}\left(T_0\ ranks\ s\ th,\eta_1\leq \widehat{KL}_{(1)}\leq \varepsilon_1,\ldots,\eta_b\leq\widehat{KL}_{(b)}\leq  \varepsilon_b|u_0,\ldots,u_b,z,y\right)\\
\leq & \sum\limits_{s=1}\limits^{\lfloor (b+1)\alpha\rfloor}\frac{I(s)}{I(s)+e^{-\varepsilon_1}+\ldots+e^{-\varepsilon_b}}.
\end{array}
\label{gen5}
\end{equation}

Based on (\ref{gen3}), similar to (\ref{gen4}), we have
\begin{equation}
\resizebox{.90\hsize}{!}{$
\mathbb{P}(T_0\ ranks\ s\ th,\eta_1\leq\widehat{KL}_{(1)}\leq \varepsilon_1,\ldots,\eta_b\leq\widehat{KL}_{(b)}\leq  \varepsilon_b|u_0,\ldots,u_b,z,y)\ge \frac{I(s)}{(I(s)+\sum\limits_{r=1}\limits^{b}e^{-\eta_r})}.
\label{gen6}
$}
\end{equation}

Hence, we have
\begin{equation}
\resizebox{.90\hsize}{!}{$
\begin{array}{cl}
& \mathbb{P}\left(P_{CRRT}\leq \alpha,\eta_1\leq \widehat{KL}_{(1)}\leq \varepsilon_1,\ldots,\eta_b\leq \widehat{KL}_{(b)}\leq  \varepsilon_b|u_0,\ldots,u_b,z,y\right)\\
= & 1-\sum\limits_{s=\lfloor (b+1)\alpha\rfloor1}\limits^{b+1}\mathbb{P}\left(T_0\ ranks\ s\ th,\eta_1\leq \widehat{KL}_{(1)}\leq \varepsilon_1,\ldots,\eta_b\leq\widehat{KL}_{(b)}\leq  \varepsilon_b|u_0,\ldots,u_b,z,y\right)\\
\leq & 1-\sum\limits_{s=\lfloor (b+1)\alpha\rfloor1}\limits^{b+1}\mathbb{P}\frac{I(s)}{I(s)+e^{-\eta_1}+\ldots+e^{-\eta_b}}.
\end{array}
\label{gen7}
$}
\end{equation}
Combining results in (\ref{gen5}) and (\ref{gen7}), we can conclude that (\ref{complete1}) is valid.

\end{proof}

%%%%%%%%%%%%%%%%%%%%%%%%%%
\section{Connections Between CRRT and Knockoffs}
\label{sec:knockoff}
The CRRT is deeply influenced by the spirit of knockoffs (\cite{barber2015controlling}; \cite{candes2018panning}; \cite{barber2019knockoff}) that utilizing information on the covariates sides can help us avoid the uncertainty of the relationship between the response variable and covariates. In this section, we make this connection more concrete. Specifically speaking, we show that there's a direct comparison between the CRRT and multiple knockoffs (\cite{gimenez2018improving}), which is a variant of the original knockoffs. After comparison, we point out that the version of multiple knockoffs given in \cite{gimenez2018improving} may be too conservative. We provide a modified version that can own greater power and at the same time, maintain finite sample FDR control at a same level. 

Suppose that we have original data $(y,x)$, where $y$ is a $n\times 1$ vector, $x$ is a $n\times p$ matrix, which sometimes is denoted by $x^{(0)}$ alternatively. Each row of $(y,x)$, a realization of $(Y,X^T)^T$, is independently and identically generated from some distribution $F_{Y,X}$. Just be careful that the $x$ in this section is different from those in previous sections. For a given positive integer $b$, we generate a $b$-knockoff copy of $x^{(0)}$, denoted by $(x^{(1)},x^{(2)},\ldots,x^{(b)})$, where $x^{(k)}$ is an $n\times p$ matrix, for $k=1,\ldots,b$. We call $(x^{(1)},x^{(2)},\ldots,x^{(b)})$ a $b$-knockoff copy of $x^{(0)}$ if $(x^{(0)},x^{(1)},\ldots,x^{(b)})$ satisfies the extended exchangeability defined below and $(x^{(1)},x^{(2)},\ldots,x^{(b)})\perp y|x^{(0)}$. 

\begin{definition}
If $\sigma = (\sigma^{1},\ldots,\sigma^{p})\in \Sigma_{b+1}\times\Sigma_{b+1}\times\ldots\times\Sigma_{b+1}$, where $\Sigma_{b+1}$ is the collection of all permutations of $\{1,2,\ldots,b+1\}$, we define $(x^{(0)},x^{(1)},\ldots,x^{(b)})_{swap(\sigma)}\triangleq (u^{(0)},\ldots,u^{(b)})$, where $u^{(k)}_j=x^{(\sigma^j_{k+1})}_j$, for $k=0,1,\ldots,b$ and $j=1,\ldots,p$. We say that $(x^{(0)},x^{(1)},\ldots,x^{(b)}) \in \mathbb{R}^n\times \mathbb{R}^n\times\ldots \times\mathbb{R}^n $ is extended exchangeable, if for any $\sigma^{1},\ldots,\sigma^{p}\in \Sigma_{b+1}$,
$$
(x^{(0)},x^{(1)},\ldots,x^{(b)})_{swap(\sigma)} \mathop{=}\limits^{d} (x^{(0)},x^{(1)},\ldots,x^{(b)}),
$$
where $\sigma = (\sigma^{1},\ldots,\sigma^{p})$.
\label{extexch}
\end{definition}
To construct multiple knockoffs selection, we can use any test function $\bar{T}=(T^0,T^1,\ldots,T^{b})$ satisfying that for any $\sigma = (\sigma^{1},\ldots,\sigma^{p})\in \Sigma_{b+1}\times\Sigma_{b+1}\times\ldots\times\Sigma_{b+1}$,
\begin{equation}
\bar{T}_{swap(\sigma)}=(T^0,\ldots,T^b)_{swap(\sigma)}=\bar{T}(y,(x^{(0)},\ldots,x^{(b)})_{swap(\sigma)})),
\label{multiknockofft}
\end{equation}
where $T^k:\mathbb{R}^{n}\times \mathbb{R}^{n\times[p\times(b+1)]}\rightarrow \mathbb{R}^{p}$ for $k=0,1,\ldots,b$. For $j=1,2,\ldots,p$, denote the ordered statistics of $\{T^0_j,\ldots,T^b_j\}$ by $T^{(1)}_j\leq \ldots,\leq T^{(b+1)}_j$ and define $\tau_j\triangleq T^{(1)}_j-T^{(2)}_j$. As like \cite{gimenez2018improving}, for simplicity, we assume that
$$
\mathbb{P}\left(\bigcap\limits_{j=1}\limits^{p}\{T^{(1)}_j < \ldots,< T^{(b+1)}_j,\ or,\ T^{(1)}_j = \ldots,= T^{(b+1)}_j\}\right)=1.
$$
Then, for $j=1,2,\ldots,p$, we can define 
$$
r_j \triangleq \left\{
\begin{array}{ll}
the\ rank\ of\ T_j^{0}\ among\ \{T_j^{0},\ldots,T_j^{b}\},& if\ T^{(1)}_j < \ldots,< T^{(b+1)}_j,\\
b+1, & if\ T^{(1)}_j = \ldots,= T^{(b+1)}_j.
\end{array}
\right.
$$
In \cite{gimenez2018improving}, for any given desired FDR level $\alpha\in(0,1)$, the selection set is defined as
$$
\hat{S}=\{j\in\{1,\ldots,p\},r_j=1,\tau_j\ge\hat{\tau}\},
$$
where $\hat{\tau}=\min\left\{t\ge \min\{\tau_j:1\leq j\leq p\}:\frac{\frac{1}{b}+\frac{1}{b}\#\{1\leq j\leq p,r_j>1,\tau_j\ge t\}}{\#\{1\leq j\leq p,r_j=1,\tau_j\ge t\}\vee 1}\leq \alpha\right\}$. They prove that this multiple knockoffs selection procedure can control the FDR at level $\alpha$.

Now, let's take a look at a special case of their multiple-knockoff procedure. Assume that $X_1$, the first element of $X$, is our main target and we want to check whether $H_0: X_1\perp Y|X_{-1}$ holds, where $X_{-1} \triangleq (X_2,\ldots,X_p)^T$. For $k=1,2,\ldots,b$, let $x^{(k)}=(x^{(k)}_1,x_2,\ldots,x_p)$, where $x_1^{(1)},\ldots,x_1^{(b)}$ are sampled i.i.d. from distribution of $X_1|X_{-1}$, independent of $x_1$. It's easy to check that under such construction, $(x^{(0)},\ldots,x^{(b)})$ is extended exchangeable. For any qualified multiple knockoffs test function $\bar{T}=(T^0,T^1,\ldots,T^b)$ satisfying the property stated in (\ref{multiknockofft}), we define a function $\bar{V}=(V^0,V^1,\ldots,V^b)\triangleq (T^0_1,T^1_1,\ldots,T^b_1)$, which is essentially a function of $(y,x_2,\ldots,x_p,x^{(0)}_1,x^{(1)}_1,\ldots,x^{(b)}_1)$. In fact, $\bar{V}$ is $X$-symmetric if we view it as a test function for testing conditional independence. For any given $\alpha\in(0,1)$, the definition of $\hat{\tau}$ can be simplified as
$$
\hat{\tau}=\min \left\{t\ge \tau_1:\frac{1}{b}+\frac{1}{b}\mathbbm{1}\{r_1>1,\tau_1\ge t\}\leq \alpha\right\}.
$$
If $\frac{2}{b}\leq \alpha$, we always have $\hat{\tau}=\tau_1$, and hence we'll select the first covariate if and only if $r_1=1$. If $\frac{1}{b}>\alpha$, we have $\hat{\tau}=\infty$ and hence we select no variable. If $\frac{1}{b} \leq \alpha < \frac{2}{b}$, we select the first variable if and only if $r_1=1$.

As comparison, the CRRT based on $\bar{V}$ rejects the null hypothesis and select the first variable if and only if $T_1^0$ ranks at least $\lambda\triangleq \lfloor (b+1)\alpha\rfloor$th among $T^0_1,T^1_1,\ldots,T^b_1$. Thus, when $b<\frac{2}{\alpha}-1$, decision rules given by the multiple knockoffs and the CRRT are basically identical. To be more specific, when $b<\frac{1}{\alpha}-1$, the CRRT select no vairable. When $\frac{1}{\alpha}-1\leq b<\frac{2}{\alpha}-1$, $X_1$ is selected. However, there's a great difference when $b\ge \frac{2}{\alpha}-1$ that the special multiple knockoffs procedure rejects $H_0$ only if $T_1^0$ is the largest one while the CRRT rejects $H_0$ when $T_1^0$ ranks high enough. When $b$ gets larger, the above special multiple knockoffs procedure exhibits greater conservativeness, which is unnecessary. Though such comparison is unfair because the multiple knockoffs doesn't specialize in conditional independence testing, it still gives us some hint to improve the multiple knockoffs. Trying to eliminate such redundant conservativeness, we provide a modified definition of the threshold:
\begin{equation}
\resizebox{.90\hsize}{!}{$
\tilde{\tau} = \min\left\{t\ge \min\{\tau_j:1\leq j\leq p\}:\frac{\eta(\tilde{\lambda})\left(1+\#\{1\leq j\leq p,r_j>\tilde{\lambda},\tau_j\ge t\}\right)}{\#\{1\leq j\leq p,r_j\leq \tilde{\lambda},\tau_j\ge t\}\vee 1}\leq \alpha\right\},
\label{tautilde}
$} 
\end{equation} 
where $\tilde{\lambda}=\lfloor (b+1)\frac{\alpha}{\alpha+1}\rfloor$, $\eta(\tilde{\lambda})=\frac{\tilde{\lambda}}{b+1-\tilde{\lambda}}$, and further we let the selected set be $\tilde{S}=\{j\in\{1,\ldots,p\},r_j\leq \tilde{\lambda},\tau_j\ge \tilde{\tau}\}$. This modified multiple knockoffs procedure roughly matches the CRRT in single variable testing problem. Since $\eta(\tilde{\lambda})\leq \alpha$, the modified selection procedure selects $X_1$ if and only if $r_1\leq \tilde{\lambda} = \lfloor (b+1)\frac{\alpha}{\alpha+1}\rfloor$, which is close to $\lambda$ when $\alpha$ is small. We conjecture that this modified version would be more powerful in general model selection problem. But we don't pursue to show it here because we want to focus on the conditional testing problem. In fact, the modified multiple knockoffs procedure can also control finite sample FDR like the original procedure.
\begin{proposition}
For any given desired FDR level $\alpha\in(0,1)$. The selected set $\tilde{S}$ can control FDR at level $\alpha$. To be more concrete,
$$
FDR=\mathbb{E}\left[\frac{|\tilde{S}\setminus S |}{|\tilde{S}|\vee 1} \right]\leq \alpha,
$$
where $S$ is the Markov blanket for $Y$ (\cite{pearl1988probabilistic}; \cite{candes2018panning}), which can be interpreted as the set of nonnull variables.
\label{multimodified}
\end{proposition}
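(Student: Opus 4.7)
The plan is to adapt the Barber--Cand\`es FDR supermartingale argument, as extended to multiple knockoffs by Gim\'enez and Zou (2018), replacing the ``rank~$=1$ vs.\ rank~$>1$'' dichotomy by the coarser ``rank~$\leq\tilde\lambda$ vs.\ rank~$>\tilde\lambda$'' one. The key structural input is that for every null $j\in S^c$, the rank $r_j$ is uniform on $\{1,\ldots,b+1\}$ conditional on the multiset $\{T_j^0,\ldots,T_j^b\}$ and on all statistics $\bar T_{-j}$ for coordinates $j'\neq j$, and that these conditional ranks are independent across null~$j$. This follows because, when $j$ is null, the extended exchangeability of Definition~\ref{extexch} together with $Y\perp X_j\mid X_{-j}$ makes the joint law of $(y,x^{(0)},\ldots,x^{(b)})$ invariant under every permutation of $\{x_j^{(0)},\ldots,x_j^{(b)}\}$; property~(\ref{multiknockofft}) then transfers this invariance to the statistics, yielding exchangeability of $(T_j^0,\ldots,T_j^b)$. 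Applying the swap argument jointly in several null coordinates gives the independence claim. In particular, conditionally on the magnitudes $\tau_j$ and on everything associated with $S$, the signs $s_j:=\mathbbm{1}\{r_j\leq\tilde\lambda\}$ for $j\in S^c$ are i.i.d.\ Bernoulli$(p)$ with $p=\tilde\lambda/(b+1)$.

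With this in hand, I would set up the ``reverse'' filtration $\mathcal F_t$ obtained by revealing coordinates in decreasing order of $\tau_j$, and define the null counts $V^+(t)=\#\{j\in S^c:r_j\leq\tilde\lambda,\tau_j\geq t\}$ and $V^-(t)=\#\{j\in S^c:r_j>\tilde\lambda,\tau_j\geq t\}$. The selection statistics $R(t)=\#\{j:r_j\leq\tilde\lambda,\tau_j\geq t\}$ and $W(t)=\#\{j:r_j>\tilde\lambda,\tau_j\geq t\}$ are $\mathcal F_t$-measurable, so the threshold $\tilde\tau$ defined in~(\ref{tautilde}) is a valid stopping time. The central probabilistic lemma, which is the exact analog of Lemma~1 of Barber--Cand\`es (2015) with $p=\tilde\lambda/(b+1)$ in place of $1/(b+1)$, states that for any stopping time $\hat\tau$ adapted to $\mathcal F_t$,
$$\mathbb{E}\!\left[\frac{V^+(\hat\tau)}{1+V^-(\hat\tau)}\right]\leq \frac{p}{1-p}=\frac{\tilde\lambda}{b+1-\tilde\lambda}=\eta(\tilde\lambda).$$
This is proved by optional stopping applied to a supermartingale built from the i.i.d.\ Bernoulli sign sequence delivered by Step~1, using exactly the same template as in the original knockoffs proof.

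Given the lemma, the remainder is algebraic. The defining inequality of $\tilde\tau$ in~(\ref{tautilde}) at $t=\tilde\tau$ yields $R(\tilde\tau)\vee 1\geq \eta(\tilde\lambda)(1+W(\tilde\tau))/\alpha$; combined with the trivial bound $W(\tilde\tau)\geq V^-(\tilde\tau)$ and the identity $|\tilde S\setminus S|=V^+(\tilde\tau)$, one finds
$$\mathrm{FDR}=\mathbb{E}\!\left[\frac{V^+(\tilde\tau)}{R(\tilde\tau)\vee 1}\right]\leq \frac{\alpha}{\eta(\tilde\lambda)}\,\mathbb{E}\!\left[\frac{V^+(\tilde\tau)}{1+V^-(\tilde\tau)}\right]\leq \alpha,$$
as required. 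The main technical hurdle is the supermartingale lemma: one must verify that the Barber--Cand\`es/Gim\'enez--Zou optional-stopping argument survives the passage from the ``rank~$=1$'' split used in the existing literature to the generalized ``rank~$\leq\tilde\lambda$'' split, and that $\tilde\tau$ is indeed adapted to the reverse filtration under the modified stopping rule. Since the martingale construction only consumes the i.i.d.\ Bernoulli structure of the null signs produced in Step~1, the extension is essentially routine once that structure is secured, but it needs to be spelled out because the partition parameter $\tilde\lambda$ is not~$1$.
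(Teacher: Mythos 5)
Your proposal is correct and follows essentially the same route as the paper: both rest on the exchangeability claim that the null ranks $r_j$ are i.i.d.\ uniform on $\{1,\ldots,b+1\}$ (hence the signs $\mathbbm{1}\{r_j\leq\tilde{\lambda}\}$ are i.i.d.\ Bernoulli$(\tilde{\lambda}/(b+1))$ independent of the magnitudes and of the non-nulls), and both then convert the defining inequality of $\tilde{\tau}$ into the bound $\mathrm{FDR}\leq(\alpha/\eta(\tilde{\lambda}))\,\mathbb{E}[V^+(\tilde{\tau})/(1+V^-(\tilde{\tau}))]\leq\alpha$. The only cosmetic difference is that the paper packages the null signs as one-bit p-values and invokes the Selective SeqStep+ theorem of Barber and Cand\`es (2015) as a black box (which already covers a general threshold $\tilde{\lambda}/(b+1)$, so no re-verification of the martingale step is needed), whereas you re-derive the underlying optional-stopping lemma directly.
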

From the proof of the above proposition, we can see that actually in (\ref{tautilde}), we can replace $\tau_j\ge t$ with some other conditions based on ordered statistics $\check{T} = \{\{T_j^{(1)},T_j^{(2)},\ldots,T_j^{(b+1)}\}:j=1,2,\ldots,p\}$ while remaining the FDR control, which means that we can generalize the multiple knockoffs framework to make it more flexible. We didn't study this generalization in this paper. There could be some alternative conditions enjoying optimal properties in some sense.

\section{Proofs}

\begin{proof}[\bf Proof of Proposition \ref{xsym}]
We first to show that, for any $\sigma\in \Sigma_{b+1}$
$$
(T_0(y,z,\tilde{x}),\ldots,T_b(y,z,\tilde{x}))_{permute(\sigma)}\stackrel{d}{=}(T_0(y,z,\tilde{x}),\ldots,T_b(y,z,\tilde{x})).
$$
Under the null hypothesis $H_0:X\perp Y|Z$, $x,x^{(1)},\ldots,x^{(b)}|z,y=x,x^{(1)},\ldots,x^{(b)}|z$ are i.i.d.. Therefore, we have $\tilde{x}_{permute(\sigma)}|z,y\stackrel{d}{=}\tilde{x}|z,y$. Further, we have $\tilde{x}_{permute(\sigma)}\stackrel{d}{=}\tilde{x}$. Thus,
$$
\begin{array}{cl}
& (T_0(y,z,\tilde{x}),\ldots,T_b(y,z,\tilde{x}))_{permute(\sigma)}\\
= & (T_0(y,z,\tilde{x}_{permute(\sigma)}),\ldots,T_b(y,z,\tilde{x}_{permute(\sigma)}))\\
\stackrel{d}{=} & (T_0(y,z,\tilde{x}),\ldots,T_b(y,z,\tilde{x})),
\end{array}
$$
which is our desired result. This result directly implies that
$$
T_0(y,z,\tilde{x})\stackrel{d}{=}T_1(y,z,\tilde{x})\stackrel{d}{=}\ldots\stackrel{d}{=}T_b(y,z,\tilde{x}).
$$
\end{proof}

%%%%%%%%%%%%%%%%%%%%

\begin{proof}[\bf Proof of Proposition \ref{conditionprop}]
According to Proposition \ref{xsym}, for any $\sigma\in \Sigma_{b+1}$, 
$$
(T_0,\ldots,T_b)_{permute(\sigma)}\stackrel{d}{=}(T_0,\ldots,T_b).
$$ 
Since the ordered statistics of $(T_0,\ldots,T_b)_{permute(\sigma)}$ is also $T_{(0)}\leq T_{(1)}\leq \ldots,\leq T_{(b)}$, we have
$$
(T_0,\ldots,T_b)_{permute(\sigma)}|z,y,T_{(0)},\ldots,T_{(b)}\stackrel{d}{=}(T_0,\ldots,T_b)|z,y,T_{(0)},\ldots,T_{(b)}.
$$
It implies that 
$$
\mathbb{P}(T_{\sigma_1}=T_{(0)},\ldots,T_{\sigma_{b+1}}=T_{(b)})=\mathbb{P}(T_0=T_{(0)},\ldots,T_b=T_{(b)}),
$$
where $\sigma_i$ is the $i$-th element of $\sigma$, $i=1,2,\ldots,b+1$. Because $\sigma$ is arbitrary, we know that conditional on $z,y,T_{(0)},\ldots,T_{(b)}$, $(T_0,\ldots,T_b)$ is uniformly distributed on 
$$
S=\{(T_{(0)},\ldots,T_{(b)})_{permute(\sigma)}:\sigma\in \Sigma_{b+1}\}.
$$
Based on such symmetry, we can easily know that
$$
\mathbb{P}(p_{CRRT}\leq \alpha|z,y,T_{(0)},T_{(1)},\ldots,T_{(b)})\leq \alpha,
$$
for any given $\alpha$.
\end{proof}

%%%%%%%%%%%%%%%%%%%%

\begin{proof}[\bf Proof of Theorem \ref{robust1}]
Let $\dot{x}$ be a n-dimension vector with elements $\dot{x}_i\sim Q(\cdot|z{[i]})$ independently, $i=1,2,\ldots,n$, independent of $y$ and of $\tilde{x}=(x,x^{(1)},\ldots,x^{(b)})$.

Recall that $p_{CRRT} = \frac{\sum\limits_{k=0}\limits^{b}\mathbbm{1}\{T_0(y,z,\tilde{x})\leq T_k(y,z,\tilde{x})\}}{1+b}$, which is a function of $(y,z,\tilde{x})$. Define $S_{\alpha} = \{(y,z,\tilde{x}):p_{CRRT}(y,z,\tilde{x})\leq \alpha\}$. For given $z$ and $y$, define 
$$\bar{S}_{\alpha}(y,z) = \{(x,x^{(1)},\ldots,x^{(b)}):(y,z,x,x^{(1)},\ldots,x^{(b)})\in S_{\alpha}\}.
$$
Then,
$$
\begin{array}{cl}
&\mathbb{P}(p_{CRRT}\leq \alpha|y,z)\\
= & \mathbb{P}((y,z,x,x^{(1)},\ldots,x^{(b)})\in S_{\alpha}|y,z)\\
= & \mathbb{P}((x,x^{(1)},\ldots,x^{(b)})\in \bar{S}_{\alpha}(y,z)|y,z)\\
\leq & \mathbb{P}((\dot{x},x^{(1)},\ldots,x^{(b)})\in \bar{S}_{\alpha}(y,z)|y,z) \\
&+ |\mathbb{P}((\dot{x},x^{(1)},\ldots,x^{(b)})\in \bar{S}_{\alpha}(y,z)|y,z)-\mathbb{P}((x,x^{(1)},\ldots,x^{(b)})\in \bar{S}_{\alpha}(y,z)|y,z)|
\end{array}
$$

Since $\dot{x}$ is sampled based on $Q(\cdot|Z)$, $\dot{x},x^{(1)},\ldots,x^{(b)}$ are i.i.d. conditional on $z$ and $y$. According to Theorem \ref{maintheorem}, $\mathbb{P}((\dot{x},x^{(1)},\ldots,x^{(b)})\in \bar{S}_{\alpha}(y,z)|y,z)=\mathbb{P}(p_{CRRT}(y,z,\dot{x},x^{(1)},\ldots,x^{(b)})\leq \alpha|y,z)\leq \alpha$.

As for the second part, for given $y,z,x^{(1)},\ldots,x^{(b)}$, we define $C_{\alpha}(y,z,x^{(1)},\ldots,x^{(b))}=\{\dot{x}:(y,z,\dot{x},x^{(1)},\ldots,x^{(b)})\in S_{\alpha}$. Then,
$$
\begin{array}{cl}
& |\mathbb{P}((\dot{x},x^{(1)},\ldots,x^{(b)})\in \bar{S}_{\alpha}(y,z)|y,z)-\mathbb{P}((x,x^{(1)},\ldots,x^{(b)})\in \bar{S}_{\alpha}(y,z)|y,z)|\\

= & |\int\limits_{(x^{(1)},\ldots,x^{(b)})\in \mathbb{R}^b} \int\limits_{\dot{x}\in C_{\alpha}(y,z,x^{(1)},\ldots,x^{(b)})} 1 dQ(\dot{x}|y,z)dQ(x^{(1)},\ldots,x^{(b)}|y,z)\}\\

& - \int\limits_{(x^{(1)},\ldots,x^{(b)})\in \mathbb{R}^b} \int\limits_{{x}\in C_{\alpha}(y,z,x^{(1)},\ldots,x^{(b)})} 1 dQ^{\star}({x}|y,z)dQ(x^{(1)},\ldots,x^{(b)}|y,z)\}|\\

\leq & \int\limits_{(x^{(1)},\ldots,x^{(b)})\in \mathbb{R}^b} | \int\limits_{\dot{x}\in C_{\alpha}(y,z,x^{(1)},\ldots,x^{(b)})} 1 dQ(\dot{x}|y,z)\\

& - \int\limits_{{x}\in C_{\alpha}(y,z,x^{(1)},\ldots,x^{(b)})} 1 dQ^{\star}({x}|y,z) | dQ(x^{(1)},\ldots,x^{(b)}|y,z)\\

\leq & \int\limits_{(x^{(1)},\ldots,x^{(b)})\in \mathbb{R}^b} d_{TV}(Q^{\star}_n(\cdot|z),Q_n(\cdot|z)) dQ(x^{(1)},\ldots,x^{(b)}|y,z)\\

= &d_{TV}(Q^{\star}_n(\cdot|z),Q_n(\cdot|z)).
\end{array}
$$

Combining the results above, we have 
$$
\mathbb{P}(p_{CRRT}\leq \alpha|y,z) \leq \alpha + d_{TV}(Q^{\star}_n(\cdot|z),Q_n(\cdot|z)).
$$

Taking expectation with respect to $y$ and $z$ on both sides completes the proof.
\end{proof}

%%%%%%%%%%%%%%%%%%%%

\begin{proof}[\bf Proof of Theorem \ref{thm:robust2}]
In the following proof, sometimes we use $x^{(0)}$ to denote $x$ alternatively. When we say $T_0$ ranks $r$th, we mean that $T_0$ is the $r$th largest value among $\{T_0,\ldots,T_{b}\}$. Recall that 
$$
p_{CRRT} = \frac{\sum\limits_{k=0}\limits^{b}\mathbbm{1}\{T_0(y,z,\tilde{x})\leq T_k(y,z,\tilde{x})\}}{1+b}.
$$
Based on the above definition, we know that $p_{CRRT}\leq \alpha$ is roughly equivalent to that $T_0$ is among the $\lfloor (b+1)\alpha\rfloor$ largest of $T_0,\ldots,T_b$. When all $T_k$'s are distinct, such equivalence is exact. When there are some $T_k$'s, $k\ge 1$ tied with $T_0$, our rule in fact assigns $T_0$ a lower rank (smaller the value, lower the rank) and hence tends to be conservative. And we can see that whether there are repeated values among $T_k,k\ge 1$ doesn't affect $T_0$'s rank. Thus, without loss of generality, we may assume that all $T_k$'s are distinct with probability 1, conditional on $z$ and $y$. That is,
$$
\mathbb{P}(\exists i\ne j,T_i=T_j|z,y)=0.
$$
Denote the unordered set of $\{x,x^{(1)},\ldots,x^{(b)}\}$ by $\{u_0,\ldots,u_b\}$. According to our assumption, with probability 1, conditional on $z$ and $y$, $u_0,\ldots,u_b$ are distinct with probability 1 (no typo here). We have the following claim. \vspace{0.2cm}

\noindent {\bf Claim 1} With probability 1, there exists a one-to-one map $\Psi:\{1,\ldots,b+1\}\longrightarrow \{u_0,\ldots,u_b\}$, such that $T_0$ ranks $r$th if and only if $x=\Psi(r)$, $r = 1,\ldots,b+1$. Such map depends on the unordered set $\{u_0,\ldots,u_b\}$ and the test function $T$, $y$ and $z$. The proof of this claim is given immediately after the current proof.
\vspace{0.2cm}

Given $y,z,u_0,\ldots,u_b$, according to Claim 1, without loss of generality, we assume that $T_0$ ranks $r$th if and only if $x=u_{r-1}$, for $r=1,2,\ldots,b+1$. We are going to show that for any $s,r\in\{1,\ldots,b+1\}$ and $s\ne r$,
\begin{equation}
\frac{\mathbb{P}(T_0\ ranks\ s\ th|u_0,u_1,\ldots,u_b,z,y)}{\mathbb{P}(T_0\ ranks\ r\ th|u_0,u_1,\ldots,u_b,z,y)}=\frac{Q^{\star}(u_{s-1}|z)Q(u_{r-1}|z)}{Q^{\star}(u_{r-1}|z)Q(u_{s-1}|z)}.
\label{goldenratio}
\end{equation}
Actually, based on our assumption, we have
$$
\begin{array}{cl}
& \frac{\mathbb{P}(T_0\ ranks\ s\ th|u_0,u_1,\ldots,u_b,z,y)}{\mathbb{P}(T_0\ ranks\ r\ th|u_0,u_1,\ldots,u_b,z,y)}\\
= & \frac{\mathbb{P}(x=u_{s-1}|u_0,u_1,\ldots,u_b,z,y)}{\mathbb{P}(x=u_{r-1}|u_0,u_1,\ldots,u_b,z,y)}\\
= & \frac{\sum\limits_{k=1}\limits^{b}\mathbb{P}(x=u_{s-1},x^{(k)}=u_{r-1}|u_0,u_1,\ldots,u_b,z,y)}{\sum\limits_{k=1}\limits^{b}\mathbb{P}(x=u_{r-1},x^{(k)}=u_{s-1}|u_0,u_1,\ldots,u_b,z,y)}.
\end{array}
$$
To prove (\ref{goldenratio}), it suffices to show that $\frac{\mathbb{P}(x=u_{s-1},x^{(k)}=u_{r-1}|u_0,u_1,\ldots,u_b,z,y)}{\mathbb{P}(x=u_{r-1},x^{(k)}=u_{s-1}|u_0,u_1,\ldots,u_b,z,y)}=\frac{Q^{\star}(u_{s-1}|z)Q(u_{r-1}|z)}{Q^{\star}(u_{r-1}|z)Q(u_{s-1}|z)}$ for any $k\in\{1,\ldots,b\}$. We have the following observation,
\begin{equation}
\begin{array}{cl}
& \frac{\mathbb{P}(x=u_{s-1},x^{(k)}=u_{r-1}|u_0,u_1,\ldots,u_b,z,y)}{\mathbb{P}(x=u_{r-1},x^{(k)}=u_{s-1}|u_0,u_1,\ldots,u_b,z,y)}\\
=&\frac{\sum\limits_{\sigma\in\Sigma_{b+1},\sigma_1=s,\sigma_{k+1}=r}\mathbb{P}(x^{(t-1)}=u_{\sigma_t-1},t=1,\ldots,b+1|u_0,u_1,\ldots,u_b,z,y)}{\sum\limits_{\sigma'\in\Sigma_{b+1},\sigma'_1=r,\sigma'_{k+1}=s}\mathbb{P}(x^{(t-1)}=u_{\sigma'_t-1},t=1,\ldots,b+1|u_0,u_1,\ldots,u_b,z,y)}.
\end{array}
\label{goldenratio2}
\end{equation}
It's obvious that there's a one-to-one map $\Gamma$ from $\{\sigma\in\Sigma_{b+1}:\sigma_1=s,\sigma_{k+1}=r\}$ to $\{\sigma'\in\Sigma_{b+1}:\sigma'_1=r,\sigma'_{k+1}=s\}$ such that if $\sigma'=\Gamma(\sigma)$, for some $\sigma\in\{\sigma\in\Sigma_{b+1}:\sigma_1=s,\sigma_{k+1}=r\}$, then $\sigma'_t=\sigma_t$, for $t\ne 1\ or\ k+1$. If $\sigma'=\Gamma(\sigma)$,
$$
\begin{array}{cl}
& \frac{\mathbb{P}(x^{(t-1)}=u_{\sigma_t-1},t=1,\ldots,b+1|u_0,u_1,\ldots,u_b,z,y)}{\mathbb{P}(x^{(t-1)}=u_{\sigma'_t-1},t=1,\ldots,b+1|u_0,u_1,\ldots,u_b,z,y)}\\
= & \left(\frac{\mathbb{P}(x^{(t-1)}=u_{\sigma_t-1},t=1,\ldots,b+1|z,y)}{\mathbb{P}(\tilde{x}\in U|z,y)} \right)/ \left(\frac{\mathbb{P}(x^{(t-1)}=u_{\sigma'_t-1},t=1,\ldots,b+1|z,y)}{\mathbb{P}(\tilde{x}\in U|z,y)} \right)\\
= & \frac{\mathbb{P}(x^{(t-1)}=u_{\sigma_t-1},t=1,2,\ldots,b+1|z,y)}{\mathbb{P}(x^{(t-1)}=u_{\sigma'_t-1},t=1,2,\ldots,b+1|z,y)} = \frac{Q^{\star}(u_{s-1}|z)\prod\limits_{t=2}\limits^{b+1}Q(u_{\sigma_t-1}|z)}{Q^{\star}(u_{r-1}|z)\prod\limits_{t=2}\limits^{b+1}Q(u_{\sigma'_t-1}|z)}\\
= & \frac{Q^{\star}(u_{s-1}|z)Q(u_{r-1}|z)}{Q^{\star}(u_{r-1}|z)Q(u_{s-1}|z)},
\end{array}
$$
where $U=\{(u_0,\ldots,u_b)_{permute(\hat{\sigma})}:\hat{\sigma}\in\Sigma_{b+1}\}$. Given the above results and that $\Gamma$ is one-to-one, we have
$$
\resizebox{.92\hsize}{!}{$
\frac{\sum\limits_{\sigma\in\Sigma_{b+1},\sigma_1=s,\sigma_{k+1}=r}\mathbb{P}(x^{(t-1)}=u_{\sigma_t-1},t=1,\ldots,b+1|u_0,u_1,\ldots,u_b,z,y)}{\sum\limits_{\sigma'\in\Sigma_{b+1},\sigma'_1=r,\sigma'_{k+1}=s}\mathbb{P}(x^{(t-1)}=u_{\sigma'_t-1},t=1,\ldots,b+1|u_0,u_1,\ldots,u_b,z,y)} =  \frac{Q^{\star}(u_{s-1}|z)Q(u_{r-1}|z)}{Q^{\star}(u_{r-1}|z)Q(u_{s-1}|z)},
$}
$$
which, combined with (\ref{goldenratio2}), results in
$$
\frac{\mathbb{P}(x=u_{s-1},x^{(k)}=u_{r-1}|u_0,u_1,\ldots,u_b,z,y)}{\mathbb{P}(x=u_{r-1},x^{(k)}=u_{s-1}|u_0,u_1,\ldots,u_b,z,y)} = \frac{Q^{\star}(u_{s-1}|z)Q(u_{r-1}|z)}{Q^{\star}(u_{r-1}|z)Q(u_{s-1}|z)}. 
$$
And this implies (\ref{goldenratio}) is validate. Denote $D(x^{(k)},z) = \frac{Q^{\star}(x^{(k)}|z)}{Q(x^{(k)}|z)}$, for $k=0,1,\ldots,b$. Denote the ordered statistics of $\{D(x^{(k)},z):k=1,\ldots,b\}$ by $D_{(1)}\leq D_{(2)}\leq \ldots \leq D_{(b)}$. For a given $s$, we denote the ordered statistics of $\{D(u_k,z):k\ne s\}$ by $D^s_{(1)}\leq D^s_{(2)}\leq \ldots \leq D^s_{(b)}$. Next, for any $s\ne r\in\{1,\ldots,1+b\}$,
\begin{equation}
\resizebox{.90\hsize}{!}{$
\begin{array}{cl}
& \mathbb{P}(T_0\ ranks\ s\ th,\widehat{KL}_{(1)}\leq \varepsilon_1,\ldots,\widehat{KL}_{(b)}\leq \varepsilon_b|u_0,u_1,\ldots,u_b,z,y)\\
= & \mathbb{E}(\mathbbm{1}\{T_0\ ranks\ s\ th\} \mathbbm{1}\{\widehat{KL}_{(1)}\leq \varepsilon_1,\ldots,\widehat{KL}_{(b)}\leq \varepsilon_b\}|u_0,u_1,\ldots,u_b,z,y)\\
= & \mathbb{E}(\mathbbm{1}\{T_0\ ranks\ s\ th\} \mathbbm{1}\{\frac{D(x^{(0)},z)}{D_{(b)}} \leq e^{\varepsilon_1},\frac{D(x^{(0)},z)}{D_{(b-1)}} \leq e^{\varepsilon_2},\ldots,\frac{D(x^{(0)},z)}{D_{(1)}} \leq e^{\varepsilon_b}\}|u_0,u_1,\ldots,u_b,z,y)\\
= & \mathbb{E}(\mathbbm{1}\{T_0\ ranks\ s\ th\} \mathbbm{1}\{\frac{D(u_{s-1},z)}{D^{s-1}_{(b)}} \leq e^{\varepsilon_1},\frac{D(u_{s-1},z)}{D^{s-1}_{(b-1)}} \leq e^{\varepsilon_2},\ldots,\frac{D(u_{s-1},z)}{D^{s-1}_{(1)}} \leq e^{\varepsilon_b}\}|u_0,u_1,\ldots,u_b,z,y)\\
= & \mathbbm{1}\{\frac{D(u_{s-1},z)}{D^{s-1}_{(b)}} \leq e^{\varepsilon_1},\ldots,\frac{D(u_{s-1},z)}{D^{s-1}_{(1)}} \leq e^{\varepsilon_b}\} \mathbb{P}(T_0\ ranks\ s\ th|u_0,u_1,\ldots,u_b,z,y)\\
= & \mathbbm{1}\{\frac{D(u_{s-1},z)}{D^{s-1}_{(b)}} \leq e^{\varepsilon_1},\ldots,\frac{D(u_{s-1},z)}{D^{s-1}_{(1)}} \leq e^{\varepsilon_b}\} \frac{D(u_{s-1},z)}{D(u_{r-1},z)} \mathbb{P}(T_0\ ranks\ r\ th|u_0,u_1,\ldots,u_b,z,y),\\
\end{array}
$}
\label{mid1}
\end{equation}
where the 3rd step is valid because we previously assume that $T_0$ ranks $s$th if and only if $x^{(0)}=u_{s-1}$, the second to the last step is valid due to the fact that $\mathbbm{1}\{\frac{D(u_{s-1},z)}{D^{s-1}_{(b)}} \leq e^{\varepsilon_1},\ldots,\frac{D(u_{s-1},z)}{D^{s-1}_{(1)}} \leq e^{\varepsilon_b}\}$ is a function of $u_0,\ldots,u_b,z$, the last step holds based on (\ref{goldenratio}). For any given s, it's obvious that there's a one-to-one map $\Phi_s$ from $\{0,1,\ldots,b\}\setminus\{s\}$ to $\{1,\ldots,b\}$ such that $D(u_{r},z)=D^{s}_{(b+1-\Phi_s(r))}$, $r\in\{0,1,\ldots,b\}\setminus\{s\}$. Then, we can resume our previous derivation,
$$
\begin{array}{cl}
 & \mathbbm{1}\{\frac{D(u_{s-1},z)}{D^{s-1}_{(b)}} \leq e^{\varepsilon_1},\ldots,\frac{D(u_{s-1},z)}{D^{s-1}_{(1)}} \leq e^{\varepsilon_b}\} \frac{D(u_{s-1},z)}{D(u_{r-1},z)}\\
= & \mathbbm{1}\{\frac{D(u_{s-1},z)}{D^{s-1}_{(b)}} \leq e^{\varepsilon_1},\ldots,\frac{D(u_{s-1},z)}{D^{s-1}_{(1)}} \leq e^{\varepsilon_b}\}\frac{D(u_{s-1},z)}{D^{s-1}_{(b+1-\Phi_{s-1}(r-1))}}\\
 \leq  & e^{\varepsilon_{\Phi_{s-1}(r-1)}}.\\ 
\end{array}
$$
Plugging it back to (\ref{mid1}), we have
$$
\begin{array}{cl}
& \mathbb{P}(T_0\ ranks\ s\ th,\widehat{KL}_{(1)}\leq \varepsilon_1,\ldots,\widehat{KL}_{(b)}\leq  \varepsilon_b|u_0,u_1,\ldots,u_b,z,y) \\
\leq & e^{\varepsilon_{\Phi_{s-1}(r-1)}} \mathbb{P}(T_0\ ranks\ r\ th|u_0,u_1,\ldots,u_b,z,y).
\end{array}
$$
With the above results, for any given $s$, we have
$$
\resizebox{.92\hsize}{!}{$
\begin{array}{rl}
1= & \sum\limits_{r=1}\limits^{b+1} \mathbb{P}(T_0\ ranks\ r\ th|u_0,u_1,\ldots,u_b,z,y)\\

=&  \mathbb{P}(T_0\ ranks\ s\ th|u_0,u_1,\ldots,u_b,z,y)+ \sum\limits_{1\leq r \leq b+1,r\ne s}\mathbb{P}(T_0\ ranks\ r\ th|u_0,u_1,\ldots,u_b,z,y)\\

\ge & \mathbb{P}(T_0\ ranks\ s\ th,\widehat{KL}_{(1)}\leq \varepsilon_1,\ldots,\widehat{KL}_{(b)}\leq  \varepsilon_b|u_0,\ldots,u_b,z,y)\left(1+\sum\limits_{1\leq r \leq b+1,r\ne s}e^{-\varepsilon_{\Phi_{s-1}(r-1)}}\right)\\

= & \mathbb{P}(T_0\ ranks\ s\ th,\widehat{KL}_{(1)}\leq \varepsilon_1,\ldots,\widehat{KL}_{(b)}\leq  \varepsilon_b|u_0,\ldots,u_b,z,y)\left(1+\sum\limits_{r=1}\limits^{b}e^{-\varepsilon_r}\right),\\
\end{array}
$}
$$
where the last step is true due to $\Phi_s$'s one-to-one property, $s=0,1,\ldots,b$. Based on this result, we have
$$
\begin{array}{cl}
& \mathbb{P}(P_{CRRT}\leq \alpha,\widehat{KL}_{(1)}\leq \varepsilon_1,\ldots,\widehat{KL}_{(b)}\leq  \varepsilon_b|u_0,\ldots,u_b,z,y)\\
= & \sum\limits_{s=1}\limits^{\lfloor (b+1)\alpha\rfloor}\mathbb{P}(T_0\ ranks\ s\ th,\widehat{KL}_{(1)}\leq \varepsilon_1,\ldots,\widehat{KL}_{(b)}\leq  \varepsilon_b|u_0,\ldots,u_b,z,y)\\
\leq & \sum\limits_{s=1}\limits^{\lfloor (b+1)\alpha\rfloor}\left(1+\sum\limits_{r=1}\limits^{b}e^{-\varepsilon_r}\right)^{-1}\\
= & \frac{\lfloor (b+1)\alpha\rfloor}{1+e^{-\varepsilon_1}+\ldots,+e^{-\varepsilon_b}}.
\end{array}
$$
By taking expectations on both sides, the above result immediately gives us the desired conclusions stated in Theorem \ref{thm:robust2}.

\noindent {\bf Proof of Claim 1} First, we show that with probability 1, for fixed $z$ and $y$, for any $r\in \{1,\ldots,b+1\}$, there exists $u\in \{u_0,\ldots,u_b\}$, such that $x=u$ sufficiently lead to that $T_0$ ranks $r$th. We only need to focus on $y,z,u_0,u_1,\ldots,u_b$ such that for any $i\ne j$, $T_i\ne T_j$. Such event happens with probability 1 based on our assumption mentioned at the beginning of the proof. Let permutation $\sigma\in\Sigma_{b+1}$ satisfy that when $\tilde{x}=(u_0,\ldots,u_b)_{permute(\sigma)}=(u_{\sigma_1-1},\ldots,u_{\sigma_{b+1}-1})$, $T_0$ ranks $r$th. (Such permutation $\sigma$ does exist. Suppose that when $x=u_0,\ldots,x^{(b)}=u_b$, $T_0$ ranks $\hat{r}$th, and $T_{i_r}$ ranks $r$th. We just need to let $x=u_{i_r}$ and $x^{(i_r)}=u_0$. Based on $T$'s X-symmetry, now $T_0$ ranks $r$th.)

For any $\hat{\sigma}\in\Sigma_{b+1}$ such that $\hat{\sigma}_1=\sigma_1$, there exists a $\hat{\hat{\sigma}}\in\Sigma_{b+1}$ such that 
$$
[(u_0,\ldots,u_b)_{permute(\sigma)}]_{permute(\hat{\hat{\sigma}})}=(u_0,\ldots,u_b)_{permute(\hat{\sigma})}.
$$
Then, 
$$
\begin{array}{cl}
& T(y,z,\tilde{x}=(u_0,\ldots,u_b)_{permute(\hat{\sigma})})\\
= & T(y,z,\tilde{x}=[(u_0,\ldots,u_b)_{permute(\sigma)}]_{permute(\hat{\hat{\sigma}})})\\
= & T(y,z,\tilde{x}=(u_0,\ldots,u_b)_{permute(\sigma)})_{permute(\hat{\hat{\sigma}})}.
\end{array}
$$
Since permutation $\hat{\hat{\sigma}}$ doesn't change the 1st element, we have
$$
T_0(y,z,\tilde{x}=(u_0,\ldots,u_b)_{permute(\hat{\sigma})})=T_0(y,z,\tilde{x}=(u_0,\ldots,u_b)_{permute(\sigma)}),
$$
which implies that $T_0(y,z,\tilde{x}=(u_0,\ldots,u_b)_{permute(\hat{\sigma})})$ still ranks $r$th among $\{T_k,0\leq k\leq b\}$ as $T_0(y,z,\tilde{x}=(u_0,\ldots,u_b)_{permute(\sigma)})$. Hence, due to the arbitrariness of $\hat{\sigma}$, $x=u_{\sigma_1-1}$ is a sufficient condition that $T_0$ ranks $r$th.

Next, we show that for any permutations $\sigma,\sigma'\in\Sigma_{b+1}$, if $\sigma_1\ne \sigma'_1$, the rank of $T_0(y,z,\tilde{x}=(u_0,\ldots,u_b)_{permute(\sigma)})$, denoted by $r_{\sigma}$, is different from the rank of $T_0(y,z,\tilde{x}=(u_0,\ldots,u_b)_{permute(\sigma')})$, denoted by $r_{\sigma'}$. Suppose on the contrary that, $r_{\sigma}=r_{\sigma'}=r\in\{1,\ldots,b+1\}$. There exists $k\in \{2,\ldots,b+1\}$, such that $\sigma'_k=\sigma_1$. Let $\sigma''\in\Sigma_{b+1}$ satisfying that $\sigma''_g=\sigma'_g$, $g\ne1\ or\ k$, and $\sigma''_1=\sigma'_k$, $\sigma''_k=\sigma'_1$. Then, based on what've just shown, the rank of $T_0(y,z,\tilde{x}=(u_0,\ldots,u_b)_{permute(\sigma'')})$, denoted by $r_{\sigma''}$, equals $r_{\sigma}$. Because, $r_{\sigma}=r_{\sigma'}$, now we have $r_{\sigma''}=r_{\sigma'}$, which is impossible, since if we switch $x$ and $x_{k-1}$, we also switch the rank of $T_0$ and $T_{k-1}$. Combining 2 parts shown above, we can conclude that Claim 1 is true.
\end{proof}

\begin{proof}[\bf Proof of Theorem \ref{lowerbound1}]

For brevity, we denote $d_{TV}(Q_n^{\star}(\cdot|z),Q_n(\cdot|z))$ by $d_{TV}(z)$. According to the definition of $d_{TV}(z)$, there exists a set $A(z)\subseteq \mathbb{R}^n$ such that
$$
\mathbb{P}_{Q_n^{\star}}(x\in A(z)|z) = \mathbb{P}_{Q_n}(x\in A(z)|z) + d_{TV}(z),
$$
and 
$$
\left\{
\begin{array}{ll}
\frac{Q_n^{\star}(x|z)}{Q_n(x|z)}\ge 1, & \forall x\in A(z),\\
\frac{Q_n^{\star}(x|z)}{Q_n(x|z)}\leq 1, & \forall x\in A^c(z).
\end{array}
\right.
$$
Denote $\mathbb{P}_{Q_n}(x\in A(z)|z)$ by $\alpha_0(z)$. Let's first to consider the case when $\alpha_0(z)\ge \alpha-\varepsilon$. Let $\tilde{A}(z)\subseteq A(z)$ such that $\mathbb{P}_{Q_n}(x\in \tilde{A}(z)|z)=\alpha-\varepsilon$, and
$$
\mathbb{P}_{Q^{\star}_n}(x\in \tilde{A}(z)|z) \ge \sup\limits_{B(z)\subseteq A(z):\mathbb{P}_{Q_n}(x\in B(z)|z)=\alpha-\varepsilon} \mathbb{P}_{Q^{\star}_n}(x\in B(z)|z).
$$
Such $\tilde{A}(z)$ exists a.s. due to the assumption that $X$ is conditionally continuous under conditional distribution $Q(\cdot|Z)$ and $Q^{\star}(\cdot|Z)$ a.s.. Then, it's not hard to know that
$$
\tilde{\alpha}(z)\triangleq \mathbb{P}_{Q^{\star}_n}(x\in \tilde{A}(z)|z) \ge \alpha-\varepsilon + d_{TV}(z)\frac{\alpha-\varepsilon}{\alpha_0(z)}.
$$
Because $x^{(0)}=x\sim Q_n(\cdot|z)$ and $x^{(1)}\ldots,x^{(k)}\sim Q_n^{\star}(\cdot|z)$ independently, we have
$$
\left\{
\begin{array}{l}
\left( \sum\limits_{k=1}\limits^{b} \mathbbm{1}\{x^{(k)}\in\tilde{A}(z)\}|y,z\right) \sim Binomial(b,\alpha-\varepsilon),\\
\mathbbm{1} \{x^{(0)}\in \tilde{A}(z)\}|y,z \sim Bernoulli(\tilde{\alpha}(z)).
\end{array}
\right.
$$
For $k=0,1,\ldots, b$, let $T_k(y,z,\tilde{x})=\mathbbm{1}\{x^{(k)}\in\tilde{A}(z)\}$, and $T(y,z,\tilde{x})=(T_0,\ldots,T_b)^T$. Define $p_{CRRT}$ as in (\ref{pvalue}). Then, denoting $h(u)=(1+u)\log(1+u)-u$, as $b$ goes to infinity, we have
$$
\begin{array}{ll}
& \mathbb{P}(p_{CRRT}\leq \alpha|y,z)\\
= & \mathbb{P}\left(\frac{\sum\limits_{k=0}\limits^{b}\mathbbm{1}\{T_0(y,z,\tilde{x})\leq T_k(y,z,\tilde{x})\}}{1+b} \leq \alpha|y,z\right)\\
\ge & \mathbb{P}\left(T_0(y,z,\tilde{x})=1\ and\ \sum\limits_{k=1}\limits^{b}\mathbbm{1}\{T_k(y,z,\tilde{x})=1\} \leq (b+1)\alpha-1|y,z\right)\\
= & \tilde{\alpha}(z)\mathbb{P}\left(Binomial(b,\alpha-\varepsilon)\leq \alpha(b+1)-1\right)\\
= & \tilde{\alpha}(z)\left(1- \mathbb{P}\left(Binomial(b,\alpha-\varepsilon)-b(\alpha-\varepsilon)> \alpha-1+b\varepsilon\right) \right)\\
\ge &  \tilde{\alpha}(z)\left(1- exp\left\{-\frac{b(\alpha-\varepsilon)(1-(\alpha-\varepsilon))}{(1-(\alpha-\varepsilon))^2} h\left(\frac{(1-(\alpha-\varepsilon))(\alpha-1+b\varepsilon)}{b((\alpha-\varepsilon)(1-(\alpha-\varepsilon))} \right) \right\} \right)\\
= & \tilde{\alpha}(z)\left(1-O\left(exp\left\{-\frac{\alpha-\varepsilon}{1-(\alpha-\varepsilon)}h\left(\frac{\varepsilon}{\alpha-\varepsilon}\right)b \right\}\right) \right)\\
\ge & \left( \alpha-\varepsilon + d_{TV}(z)\frac{\alpha-\varepsilon}{\alpha_0(z)}\right)\left(1-O\left(exp\left\{-\frac{\alpha-\varepsilon}{1-(\alpha-\varepsilon)}h\left(\frac{\varepsilon}{\alpha-\varepsilon}\right)b \right\}\right) \right),\\
\end{array}
$$
where we use the conditional independence of $x,x^{(1)},\ldots,x^{(b)}$ in the 3rd step, we apply the Bennett's inequality in the 5th step and require $b$ to be sufficiently large such that $b>\frac{1-\alpha}{\varepsilon}$.

In the case when $\alpha_0(z)<\alpha-\varepsilon$, we proceed in a similar way. Let $\tilde{A}(z) \supseteq A(z)$ such that $\mathbb{P}_{Q_n}(x\in \tilde{A}(z)|z)=\alpha-\varepsilon$, and
$$
\mathbb{P}_{Q^{\star}_n}(x\in \tilde{A}(z)|z) \ge \sup\limits_{B(z)\supseteq A(z):\mathbb{P}_{Q_n}(x\in B(z)|z)=\alpha-\varepsilon} \mathbb{P}_{Q^{\star}_n}(x\in B(z)|z).
$$
Then, we have
$$
\begin{array}{rl}
& \tilde{\alpha}(z)=\mathbb{P}_{Q_n^{\star}}(x\in \tilde{A}(z)|z)\\

= & \mathbb{P}_{Q_n^{\star}}(x\in {A}(z)|z) + \mathbb{P}_{Q_n^{\star}}(x\in \tilde{A}(z) \cap [A(z)]^c|z) \\

\ge & [\alpha_0(z) + d_{TV}(z)] + \left\{ [\alpha-\varepsilon - \alpha_0(z)] - \frac{\alpha-\varepsilon - \alpha_0(z)}{1-\alpha_0(z)}d_{TV}(z)\right\}\\

= &\alpha-\varepsilon + d_{TV}(z)\frac{1-(\alpha-\varepsilon)}{1-\alpha_0(z)}.\\
\end{array}
$$
For $k=0,1,\ldots,b$, let $T_k(y,z,\tilde{x})=\mathbbm{1}\{x^{(k)}\in\tilde{A}(z)\}$, and $T(y,z,\tilde{x})=(T_0,\ldots,T_b)^T$. Similarly, we can show that
$$
\begin{array}{ll}
& \mathbb{P}(p_{CRRT}\leq \alpha|y,z)\\
\ge & \left(\alpha-\varepsilon + d_{TV}(z)\frac{1-(\alpha-\varepsilon)}{1-\alpha_0(z)}\right)\left(1-O\left(exp\left\{-\frac{\alpha-\varepsilon}{1-(\alpha-\varepsilon)}h\left(\frac{\varepsilon}{\alpha-\varepsilon}\right)b \right\}\right) \right).
\end{array}
$$
We want to point out that though the definitions of $\tilde{A}(z)$ in the above two situations are different, they depend only on $z,Q$ and $Q^{\star}$. We can summarize a compact definition of $\tilde{A}(z)$ as follows,
$$
\tilde{A}(z)=\left\{
\begin{array}{ll}
\mathop{\arg\sup}\limits_{B(z)\subseteq A(z):\mathbb{P}_{Q_n}(x\in B(z)|z)=\alpha-\varepsilon} \mathbb{P}_{Q^{\star}_n}(x\in B(z)|z), & if\ \alpha_0(z)\ge \alpha-\varepsilon,\\
\mathop{\arg\sup}\limits_{B(z)\supseteq A(z):\mathbb{P}_{Q_n}(x\in B(z)|z)=\alpha-\varepsilon} \mathbb{P}_{Q^{\star}_n}(x\in B(z)|z), & if\ \alpha_0(z)< \alpha-\varepsilon.\\
\end{array}
\right.
$$
For $k=0,1,\ldots,b$, let $T_k(y,z,\tilde{x})=\mathbbm{1}\{x^{(k)}\in\tilde{A}(z)\}$, $T(y,z,\tilde{x})=(T_0,\ldots,T_b)^T$, and $p_{CRRT}$ defined as in (\ref{pvalue}). It's easy to check that $T$ is $X$-symmetric. Combining the above results together, we have
$$
\begin{array}{ll}
& \mathbb{P}(p_{CRRT}\leq \alpha|y,z)\\
\ge & \left[ \alpha-\varepsilon + \mathbbm{1}\{\alpha_0(z)\ge \alpha-\varepsilon\}d_{TV}(z)\frac{\alpha-\varepsilon}{\alpha_0(z)} + \mathbbm{1}\{\alpha_0(z)< \alpha-\varepsilon\} d_{TV}(z)\frac{1-(\alpha-\varepsilon)}{1-\alpha_0(z)}\right]\\
& \times \left(1-O\left(exp\left\{-\frac{\alpha-\varepsilon}{1-(\alpha-\varepsilon)}h\left(\frac{\varepsilon}{\alpha-\varepsilon}\right)b \right\}\right) \right).
\end{array}
$$
Note that the definition of $T$ doesn't depend on $b$. If we allow $T$ to depend on $b$, we can replace $\varepsilon$ with some other functions of $b$ like $\frac{1}{2}\sqrt{\frac{\log b}{b}}$ and can have a tighter asymptotic bound.
\end{proof}

\begin{proof}[\bf Proof of Theorem \ref{lowerbound2}]

In the following proof, sometimes we use $x^{(0)}$ to denote $x$ alternatively. When we say $T_0$ ranks $r$th, we mean that $T_0$ is the $r$th largest value among $\{T_0,\ldots,T_{b}\}$. Since the main condition is a function of $\widehat{KL}_k,k=1,\ldots,b$, it's natural to consider
$$
T_k \triangleq D(x^{(k)},z) = \frac{Q^{\star}(x^{(k)}|z)}{Q(x^{(k)}|z)},k=0,1,\ldots,b.
$$
It's easy to see that $T=(T_0,T_1\ldots,T_b)$ is $X$-symmetric. As before, we denote the unordered set of $\{x,x^{(1)},\ldots,x^{(b)}\}$ by $\{u_0,\ldots,u_b\}$. Based on the assumption that $X|Z$ is a continuous variable under both of $Q(\cdot|Z)$ and $Q^{\star}(\cdot|Z)$, $T_0,T_1,\ldots,T_b$ are distinct with probability 1. 

We'll frequently use equation (\ref{goldenratio}). Therefore, we restate it here. For any $s,r\in\{1,\ldots,b+1\}$ and $s\ne r$,

\begin{equation}
\frac{\mathbb{P}(T_0\ ranks\ s\ th|u_0,u_1,\ldots,u_b,z,y)}{\mathbb{P}(T_0\ ranks\ r\ th|u_0,u_1,\ldots,u_b,z,y)}=\frac{Q^{\star}(u_{s-1}|z)Q(u_{r-1}|z)}{Q^{\star}(u_{r-1}|z)Q(u_{s-1}|z)}.
\label{goldenratio11}
\end{equation}

In the first setting, that is, when $x,x^{(1)},\ldots,x^{(b)}\sim_{i.i.d.}Q^{\star}(\cdot|z)$, since $T_0,T_1,\ldots,T_b$ are distinct with probability 1, events $\{\{T_0\ ranks\ s\ th\}:s=1,\ldots,b+1\}$ are of equal probability. Therefore,
$$
\mathbb{P}(P_{CRRT}\leq \alpha)=\sum\limits_{s=1}\limits^{\lfloor (b+1)\alpha\rfloor}\mathbb{P}(T_0\ ranks\ s\ th)=\frac{\lfloor (b+1)\alpha\rfloor}{b+1}.
$$
Next, we show that if $x\sim Q^{\star}(\cdot|z),x^{(k)}\sim Q(\cdot|z),k=1,\ldots,b$, independently, we can promise a nontrivial lower bound for the type 1 error. Denote the ordered statistics of $T_k,k=0,1,\ldots,b$, by $T_{(0)}\leq T_{(1)}\leq \ldots \leq T_{(b)}$. We start from our probability condition,
\begin{equation}
\begin{array}{rl}
c\leq & \mathbb{P}(\widehat{KL}_k\ge 0,k=1,\ldots,b,\ and,\ \widehat{KL}_{(\lambda)}\ge \varepsilon)\\
= & \mathbb{E}\mathbb{E}[\mathbbm{1}\{T_{(b)}/T_{(b-\lambda)}\ge e^{\varepsilon}\}\mathbbm{1}\{T_0\ ranks\ 1st\}|u_0,\ldots,u_b,z,y]\\
= &  \mathbb{E}\{\mathbbm{1}\{T_{(b)}/T_{(b-\lambda)}\ge e^{\varepsilon}\}\mathbb{E}[\mathbbm{1}\{T_0\ ranks\ 1st\}|u_0,\ldots,u_b,z,y]\}\\
=& \mathbb{E}\left[\mathbbm{1}\{T_{(b)}/T_{(b-\lambda)}\ge e^{\varepsilon}\}\frac{D(u_0,z)}{\sum\limits_{k=0}\limits^{b}D(u_k,z)}\right],\\
\end{array}
\label{inequalityc}
\end{equation}
where $D(u,z)\triangleq \frac{Q^{\star}(u|z)}{Q(u|z)}$ and the last step is based on (\ref{goldenratio11}). Next,
$$
\begin{array}{cl}
& \mathbb{P}(P_{CRRT}\leq \alpha|u_0,\ldots,u_b,z,y)\\
= & \mathbb{P}(\bigcup\limits_{s=1}\limits^{\lambda}\{T_0\ ranks\ s\ th\}|u_0,\ldots,u_b,z,y)\\
= & {\left[\sum\limits_{k=0}\limits^{\lambda-1}\frac{Q^{\star}(u_k|z)}{Q(u_k|z)}\right]}/{\left[\sum\limits_{k=0}\limits^{b}D(u_k,z)\right]},\\
\end{array}
$$
based on which, we further have
\begin{equation}
\begin{array}{cl}
& \mathbb{P}(P_{CRRT}\leq \alpha)\\
= &\mathbb{E}\left\{ {\left[\sum\limits_{k=0}\limits^{\lambda-1}D(u_k,z)\right]}/{\left[\sum\limits_{k=0}\limits^{b}D(u_k,z)\right]} \right\} \\
= & \frac{\lambda}{b+1}+ \mathbb{E}\left\{ {\left[\sum\limits_{k=0}\limits^{\lambda-1}D(u_k,z)\right]}/{\left[\sum\limits_{k=0}\limits^{b}D(u_k,z)\right]}-\frac{\lambda}{b+1} \right\} \\
\ge & \frac{\lambda}{b+1}+ \mathbb{E}\left\{ \left\{{\left[\sum\limits_{k=0}\limits^{\lambda-1}D(u_k,z)\right]}/{\left[\sum\limits_{k=0}\limits^{b}D(u_k,z)\right]}-\frac{\lambda}{b+1} \right\} \mathbbm{1}\{T_{(b)}/T_{(b-\lambda)}\ge e^{\varepsilon}\}\right\} \\
= & \frac{\lambda}{b+1}+ \mathbb{E}\left\{ \left\{\frac{\left[\sum\limits_{k=0}\limits^{\lambda-1}D(u_k,z)\right]}{\left[\sum\limits_{k=0}\limits^{b}D(u_k,z)\right]}-\frac{\lambda}{b+1} \right\} \frac{\sum\limits_{k=0}\limits^{b}D(u_k,z)}{D(u_0,z)}  \frac{D(u_0,z)}{\sum\limits_{k=0}\limits^{b}D(u_k,z)}\mathbbm{1}\left\{\frac{D(u_0,z)}{D(u_{\lambda},z)}\ge e^{\varepsilon}\right\}\right\}. \\
\end{array}
\label{mid2}
\end{equation}
Actually, we can see that
$$
\begin{array}{cl}
& \min\limits_{M_0\ge M_1 \ge\ldots \ge M_b\ge 0,\frac{M_0}{M_{\lambda}}\ge e^{\varepsilon}} \left\{\left(\frac{M_0+\ldots+M_{\lambda-1}}{M_0+\ldots+M_{b}}-\frac{\lambda}{b+1} \right)\frac{M_0+\ldots+M_{b}}{M_0} \right\}\\
=& \min\limits_{M_0\ge M_1 \ge\ldots \ge M_b\ge 0,\frac{M_0}{M_{\lambda}}\ge e^{\varepsilon}} \left\{ \frac{M_0+\ldots+M_{\lambda-1}}{M_0}- \frac{\lambda}{b+1}\frac{M_0+\ldots+M_{b}}{M_0} \right\} \\
= & \min\limits_{M_0\ge M_1 \ge\ldots \ge M_b\ge 0,\frac{M_0}{M_{\lambda}}\ge e^{\varepsilon}} \left\{\left(1-\frac{\lambda}{b+1} \right) \frac{M_0+\ldots+M_{\lambda-1}}{M_0} - \frac{\lambda}{b+1}\frac{M_{\lambda}+\ldots+M_{b}}{M_0} \right\} \\
= &  \min\limits_{M_0\ge M_{\lambda}\ge 0,\frac{M_0}{M_{\lambda}}\ge e^{\varepsilon}} \left\{\left(1-\frac{\lambda}{b+1} \right) \frac{M_0+(\lambda-1)M_{\lambda}}{M_0} - \frac{\lambda}{b+1}\frac{(b-\lambda+1)M_{\lambda}}{M_0} \right\}\\
= &  \min\limits_{M_0\ge M_{\lambda}\ge 0,\frac{M_0}{M_{\lambda}}\ge e^{\varepsilon}} \left[ \left(1-\frac{\lambda}{b+1} \right) - \left(1-\frac{\lambda}{b+1} \right)\frac{M_{\lambda}}{M_0} \right]\\
= & \left(1-\frac{\lambda}{b+1} \right) (1-e^{-\varepsilon}).\\
\end{array}
$$
Based on this result, we can proceed (\ref{mid2}) as
$$
\begin{array}{rl}
\mathbb{P}(P_{CRRT}\leq \alpha) \ge & \frac{\lambda}{b+1}+ \left(1-\frac{\lambda}{b+1} \right) (1-e^{\varepsilon}) \mathbb{E}\left[\frac{D(u_0,z)}{\sum\limits_{k=0}\limits^{b}D(u_k,z)}\mathbbm{1}\left\{\frac{D(u_0,z)}{D(u_{\lambda},z)}\ge e^{\varepsilon}\right\} \right]\\
\ge &  \frac{\lambda}{b+1} + c \left(1-\frac{\lambda}{b+1} \right) (1-e^{-\varepsilon}),\\
\end{array}
$$
where the last step holds based on (\ref{inequalityc}).

\end{proof}

\begin{proof}[\bf Proof of Proposition \ref{cptlower2}]
For given $x_{ordered}$ and $z$, let $A_R(x_{ordered},z)$ be the set satisfying that 
$$
\resizebox{.92\hsize}{!}{$
\int\limits_{x\in A_R(x_{ordered},z)}R^{\star}_n(x|x_{ordered},z)-R_n(x|x_{ordered},z)dx = d_{TV}(R^{\star}_n(\cdot|z,x_{ordered}),R_n(\cdot|z,x_{ordered})). 
$}
$$
Let $A(z)\triangleq \{x:x\in A_R(x_{ordered},z)\}$. Denote $\Omega_o=\{(r_1,\ldots,r_n)^T\in\mathbb{R}:r_1\leq\ldots\leq r_n\}$. Denote $DF(x_{ordered},z)\triangleq \sum\limits_{x\in\Gamma(x_{ordered})}(Q_n(x|z)-Q^{\star}_n(x|z))$. Then we have
$$
\resizebox{.92\hsize}{!}{$
\begin{array}{ll}
& \mathbb{E}\left[d_{TV}(R^{\star}_n(\cdot|z,x_{ordered}),R_n(\cdot|z,x_{ordered})) |z\right]\\
= & \int\limits_{x_{ordered}\in\Omega_o}\left[\int\limits_{x\in A_R(x_{ordered},z)}R^{\star}_n(x|x_{ordered},z)-R_n(x|x_{ordered},z)dx\right]\sum\limits_{x\in\Gamma(x_{ordered})}Q^{\star}_n(x|z)dx_{ordered}\\
\leq & \left| \int\limits_{x_{ordered}\in\Omega_o}\int\limits_{x\in A_R(x_{ordered},z)}R^{\star}_n(x|x_{ordered},z)dx\sum\limits_{x\in\Gamma(x_{ordered})}Q^{\star}_n(x|z)dx_{ordered}\right.\\
& \left.- \int\limits_{x_{ordered}\in\Omega_o}\int\limits_{x\in A_R(x_{ordered},z)}R_n(x|x_{ordered},z)dx\sum\limits_{x\in\Gamma(x_{ordered})}Q_n(x|z)dx_{ordered}\right|\\
& + \left| \int\limits_{x_{ordered}\in\Omega_o}\int\limits_{x\in A_R(x_{ordered},z)}R_n(x|x_{ordered},z)dx\sum\limits_{x\in\Gamma(x_{ordered})}(Q_n(x|z)-Q^{\star}_n(x|z))dx_{ordered} \right|\\
= & \left| \mathbb{P}_{Q^{\star}}(x\in A(z)|z)- \mathbb{P}_{Q}(x\in A(z)|z)  \right|\\
& + \left| \int\limits_{x_{ordered}\in\Omega_o}\int\limits_{x\in A_R(x_{ordered},z)}R_n(x|x_{ordered},z)dx\sum\limits_{x\in\Gamma(x_{ordered})}(Q_n(x|z)-Q^{\star}_n(x|z))dx_{ordered} \right|\\
\leq & d_{TV}(Q_n^{\star}(\cdot|z),Q_n(\cdot|z))\\
& + \left| \int\limits_{\mathop{x_{ordered}\in\Omega_o}\limits_{DF(x_{ordered},z)>0}}\int\limits_{x\in A_R(x_{ordered},z)}R_n(x|x_{ordered},z)dx DF(x_{ordered},z) dx_{ordered} \right.\\
& + \left. \int\limits_{\mathop{x_{ordered}\in\Omega_o}\limits_{DF(x_{ordered},z)<0}}\int\limits_{x\in A_R(x_{ordered},z)}R_n(x|x_{ordered},z)dx DF(x_{ordered},z) dx_{ordered} \right|\\
\leq & d_{TV}(Q_n^{\star}(\cdot|z),Q_n(\cdot|z))\\
& + \max\left\{\left|\int\limits_{\mathop{x_{ordered}\in\Omega_o}\limits_{DF(x_{ordered},z)>0}} DF(x_{ordered},z) dx_{ordered}\right|,\left|\int\limits_{\mathop{x_{ordered}\in\Omega_o}\limits_{DF(x_{ordered},z)<0}} DF(x_{ordered},z) dx_{ordered}\right|\right\}\\
\leq & 2d_{TV}(Q_n^{\star}(\cdot|z),Q_n(\cdot|z)),
\end{array}
$}
$$
where the last step is valid because
$$
\begin{array}{cl}
& \left|\int\limits_{\mathop{x_{ordered}\in\Omega_o}\limits_{DF(x_{ordered},z)>0}} DF(x_{ordered},z) dx_{ordered}\right|\\
= & \left| \mathbb{P}_{Q^{\star}}(x\in\{v\in\mathbb{R}^n:v\in \Gamma(v_{ordered}),v_{ordered}\in\Omega_o,DF(v_{ordered},z)>0\}|z) \right.\\
& \left. - \mathbb{P}_{Q}(x\in\{v\in\mathbb{R}^n:v\in \Gamma(v_{ordered}),v_{ordered}\in\Omega_o,DF(v_{ordered},z)>0\}|z) \right|\\
\leq & d_{TV}(Q_n^{\star}(\cdot|z),Q_n(\cdot|z)),
\end{array}
$$
and similar result holds for $\left|\int\limits_{\mathop{x_{ordered}\in\Omega_o}\limits_{DF(x_{ordered},z)<0}} DF(x_{ordered},z) dx_{ordered}\right|$.
\end{proof}

\begin{proof}[\bf Proof of Proposition \ref{multimodified}]
Since tied $T_j$ values can only make the given modified multiple knockoffs procedure more conservative, for simplicity, we assume that for $j=1,2,\ldots,p$, $\mathbb{P}(\exists 1\leq k<k'\leq b+1, T_j^{k}=T_j^{k'})=0$. Then, we can make a claim similar to the Lemma 3.2 in Gemenez and Zou (2018):

\noindent {\bf Claim 2}: The ranks corresponding to null variables, $\{r_j:j\in S^c\}$, are i.i.d. uniformly on the set $\{1,2,\ldots,b+1\}$, and are independent of the other ranks $\{r_j:j\in S\}$, and of $\check{T}=\{\{T^{(1)}_j,\ldots,T^{(b+1)}_j\}:j=1,2,\ldots,p\}$.

We skip the proof of this claim because we can show it in an almost same way as in the Lemma 3.2 of Gemenez and Zou (2018). Recall that
$$
\tilde{\tau} = \min\left\{t\ge \min\{\tau_j:1\leq j\leq p\}:\frac{\frac{\tilde{\lambda}}{b+1-\tilde{\lambda}}\left(1+\#\{1\leq j\leq p,r_j>\tilde{\lambda},\tau_j\ge t\}\right)}{\#\{1\leq j\leq p,r_j\leq \tilde{\lambda},\tau_j\ge t\}\vee 1}\leq \alpha\right\}.
$$
Similar to Barber and Cand{\`e}s (2015) and Gemenez and Zou (2018), we construct one-bit p-values for $j=1,\ldots,p$:
$$
p_j=\left\{
\begin{array}{cc}
\frac{\tilde{\lambda}}{b+1}, & r_j\leq \tilde{\lambda},\\
1, & r_j>\tilde{\lambda}.
\end{array}
\right.
$$
Based on the Claim 2, for $j\in \{1,\ldots,p\}\setminus S$, we have 
$$
\left\{
\begin{array}{l}
\mathbb{P}(p_j=\frac{\tilde{\lambda}}{b+1}|\check{T})=\mathbb{P}(p_j=\frac{\tilde{\lambda}}{b+1})=\mathbb{P}(r_j\leq \tilde{\lambda})=\frac{\tilde{\lambda}}{b+1},\\
\mathbb{P}(p_j=1|\check{T})=\mathbb{P}(p_j=1)=\mathbb{P}(r_j>\tilde{\lambda})=\frac{b+1-\tilde{\lambda}}{b+1},
\end{array}
\right.
$$
which implies that $p_j|\check{T} \mathop{\ge}\limits^{d} \mathscr{U}([0,1])$. Since for $j\in\{1,\ldots,p\}$, $p_j$ is a function of $r_j$, based on the Claim 2, all $p_j$'s corresponding to null variables are mutually independent and are independent from the nonnulls, conditional on $\check{T}$.

Before defining a Selective SeqStep+ sequential test, we adjust the order of variables. Because all of our arguments are conditional on $\check{T}$, it's legitimate to rearrange the order of $X_1,\ldots,X_p$ based on some functions of $\check{T}$. In particular, we can adjust the order of variables based on $\tau_1,\ldots,\tau_p$. Hence, we can directly assume that
$$
\tau_1\ge \tau_2 \ge \ldots \ge \tau_p \ge0.
$$
Next, we define a Selective SeqStep+ threshold
$$
\tilde{k}=\max \left\{1\leq k\leq p:\frac{1+\#\{j\leq k:p_j>\frac{\tilde{\lambda}}{b+1}\}}{\#\{j\leq k:p_j\leq \frac{\tilde{\lambda}}{b+1}\}\vee 1}\leq \alpha \frac{b+1-\tilde{\lambda}}{\tilde{\lambda}}\right\}
$$
and $\tilde{S}'=\left\{j\leq \tilde{k}:p_j\leq \frac{\tilde{\lambda}}{b+1}\right\}$. Noting that $\frac{\tilde{\lambda}}{b+1}/(1-\frac{\tilde{\lambda}}{b+1})=\frac{\tilde{\lambda}}{b+1-\tilde{\lambda}}$, we can assert that
$$
\mathbb{E}\left[\frac{|\tilde{S}'\setminus S|}{|\tilde{S}'|\vee 1}\right]=\mathbb{E}\mathbb{E}\left[\frac{|\tilde{S}'\setminus S|}{|\tilde{S}'|\vee 1}|\check{T}\right]\leq \alpha
$$
by applying the Theorem 3 in Barber and Cand{\`e}s (2015). Our last job is to show that $\tilde{S}'$ is identical with $\tilde{S}$. We have
$$
\begin{array}{ll}
\tilde{k} & =\max \left\{1\leq k\leq p:\frac{1+\#\{j\leq k:p_j>\frac{\tilde{\lambda}}{b+1}\}}{\#\{j\leq k:p_j\leq \frac{\tilde{\lambda}}{b+1}\}\vee 1}\leq \alpha \frac{b+1-\tilde{\lambda}}{\tilde{\lambda}}\right\}\\
& = \max \left\{1\leq k\leq p:\frac{1+\#\{j\leq k:r_j>\tilde{\lambda},\tau_j\ge \tau_k\}}{\#\{j\leq k:r_j\leq \tilde{\lambda},\tau_j\ge \tau_k\}\vee 1}\leq \alpha \frac{b+1-\tilde{\lambda}}{\tilde{\lambda}}\right\}\\
& =  \max \left\{1\leq k\leq p:\frac{\frac{\tilde{\lambda}}{b+1-\tilde{\lambda}}(1+\#\{j\leq k:r_j>\tilde{\lambda},\tau_j\ge \tau_k\})}{\#\{j\leq k:r_j\leq \tilde{\lambda},\tau_j\ge \tau_k\}\vee 1}\leq \alpha \right\},
\end{array}
$$
from which we can tell $\tau_{\tilde{k}}=\tilde{\tau}$. Hence,
$$
\begin{array}{ll}
\tilde{S} & = \{1\leq j\leq p:r_j\leq \tilde{\lambda}, \tau_j\ge \tilde{\tau}\}\\
& = \{1\leq j\leq p:r_j\leq \tilde{\lambda}, \tau_j\ge\tau_{\tilde{k}}\}\\
& = \{1\leq j\leq \tilde{k}:r_j\leq \tilde{\lambda}\}\\
& = \{1\leq j\leq \tilde{k}: p_j = \frac{\tilde{\lambda}}{b+1}\}\\
& = \tilde{S}'.
\end{array}
$$
Thus,
$$
FDR = \mathbb{E}\left[\frac{|\tilde{S}\setminus S|}{|\tilde{S}|\vee 1}\right] = \mathbb{E}\left[\frac{|\tilde{S}'\setminus S|}{|\tilde{S}'|\vee 1}\right] \leq \alpha,
$$
which concludes the proof.
\end{proof}

\section{Supplementary Simulations}

\subsection{Impact of the Number of Conditional Samplings}
\label{appendix:b}
In this part, we inspect the impact of $b$, the number of conditional samplings on the performance of the CRT and the dCRT, which are 2 main competitors of our method, in several different model settings. Based on the Theorem \ref{maintheorem}, we know that the choice of $b$ doesn't affect the validity of type 1 error control of the CRRT. Actually, this conclusion also applies to other conditional sampling based methods considered in this paper, like CRT and dCRT. As mentioned in the first remark of the Theorem \ref{thm:robust2}, larger $b$ leads to less conservative tests. We demonstrate by the following simulations that $b=200$ is sufficiently large for the CRT and the dCRT in the sense that the amount of removable conservativeness is negligible. Based on results in this subsection, we fix $b=200$ in further simulations for all conditional sampling based methods.
\subsubsection{Impact of the Number of Conditional Samplings on the Performance of Distillation-Based Tests}
Here, we assess the impact of $b$ on the performance of distillation-based tests (Liu, Katsevich, Janson and Ramdas, 2020), including $d_0CRT$, $d_2CRT$ and $d_{12}CRT$. We consider 3 settings, including $n>p$ linear regression, $n=p$ linear regression and $n>p$ logistic regression. For each specific setting, we run 100 Monte-Carlo simulations.

\noindent {\bf ${\bf n>p}$ linear regression}: We set $n=500$, $p=100$ and observations are i.i.d.. For $i=1,2,\ldots,500$, $(x_i,z_{[i]}^T)^T$ is an i.i.d. realization of $(X,Z^T)^T$, where $(X,Z^T)^T$ is a $(p+1)$-dimension random vector following a Gaussian $AR(1)$ process with autoregressive coefficient 0.5. We let
$$
y_i = \beta_0 x_i + z_{[i]}^T\beta+\varepsilon_i,\ i=1,2,\ldots,500,
$$
where $\beta_0\in \mathbb{R}$, $\beta\in \mathbb{R}^{p}$, $\varepsilon_i$'s are i.i.d. standard Gaussian, independent of $x_i$ and $z_{[i]}$. For simplicity, we fix $\beta$ by letting $\beta_1=\beta_2=0.5$ and $0$ elsewhere. We consider various values of $\beta_0$, including $0,0.1,0.2,0.3,0.4$. We use linear Lasso in the distillation step, with regularized parameter tuned by cross-validation, and use linear least square regression in the testing step.

\noindent {\bf ${\bf n=p}$ linear regression}: Basically all things are same as those in the previous setting except that the dimension of $Z$ is 500 instead of 100. 

\noindent {\bf ${\bf n>p}$ logistic regression}: Models for covariates are same as those in the first setting while the model for response variable is different. We let
$$
y_i = \mathbbm{1}\left\{\beta_0 x_i + z_{[i]}^T\beta+\varepsilon_i>0\right\},\ i=1,2,\ldots,500,
$$
where $\beta_0\in \mathbb{R}$, $\beta\in \mathbb{R}^{p}$, $\varepsilon_i$'s are i.i.d. standard Gaussian, independent of $x_i$ and $z_{[i]}$. We use logistic Lasso in the distillation step, with regularized parameter tuned by cross-validation, and use linear least square regression in the testing step.

\begin{figure}[h]
    \centering
    \begin{subfigure}[t]{0.3\textwidth}
        \centering
        \includegraphics[width=1\linewidth,height=0.7\linewidth]{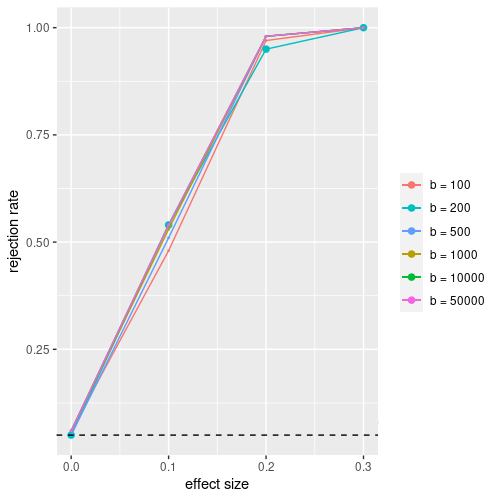} 
        \caption{$n>p$ linear regression} \label{fig:testb1}
    \end{subfigure}
    \hfill
    \begin{subfigure}[t]{0.3\textwidth}
        \centering
        \includegraphics[width=1\linewidth,height=0.7\linewidth]{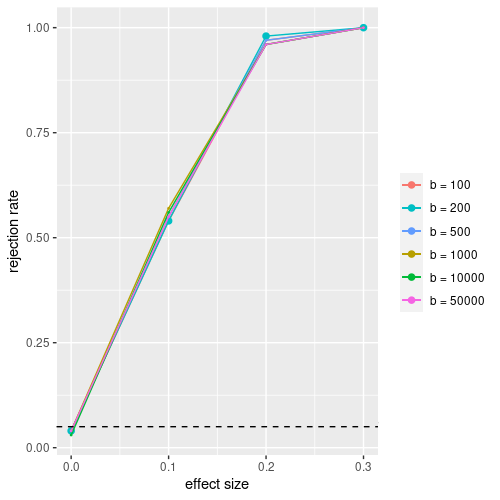} 
        \caption{$n=p$ linear regression} \label{fig:testb2}
    \end{subfigure}
    \hfill
    \begin{subfigure}[t]{0.3\textwidth}
        \centering
        \includegraphics[width=1\linewidth,height=0.7\linewidth]{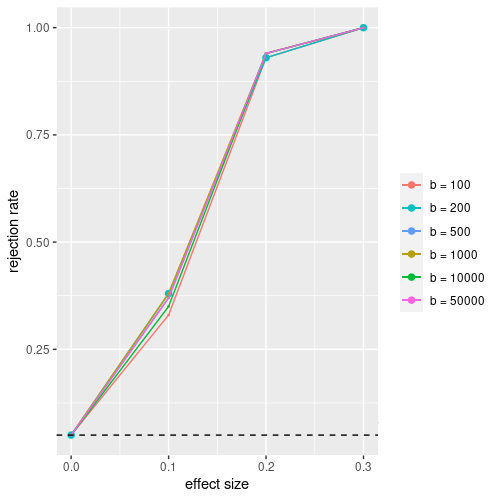} 
        \caption{$n>p$ logistic regression} \label{fig:testb3}
    \end{subfigure}
\caption{Proportions of rejection given by the $d_0CRT$ under 3 choices of $b$ in 3 settings. The y-axis represents the proportion of times the null hypothesis of conditional independence is rejected out of 100 simulations. The x-axis represents the true coefficient of $X$ in the models.}
\label{testbd0crt}
\end{figure}

\begin{figure}[h]
    \centering
    \begin{subfigure}[t]{0.3\textwidth}
        \centering
        \includegraphics[width=1\linewidth,height=0.7\linewidth]{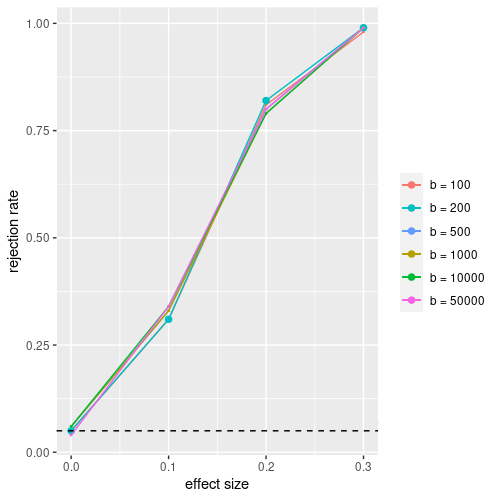} 
        \caption{$n>p$ linear regression} \label{fig:testb4}
    \end{subfigure}
    \hfill
    \begin{subfigure}[t]{0.3\textwidth}
        \centering
        \includegraphics[width=1\linewidth,height=0.7\linewidth]{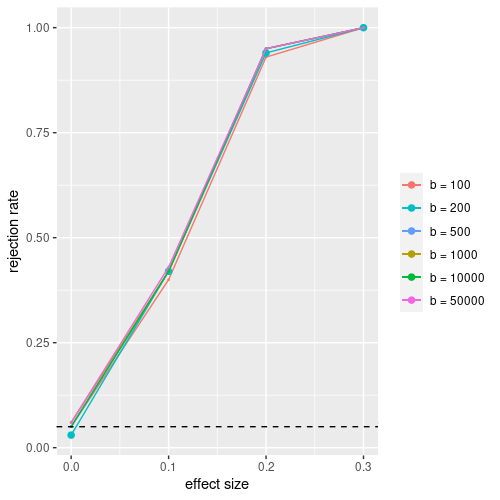} 
        \caption{$n=p$ linear regression} \label{fig:testb5}
    \end{subfigure}
    \hfill
    \begin{subfigure}[t]{0.3\textwidth}
        \centering
        \includegraphics[width=1\linewidth,height=0.7\linewidth]{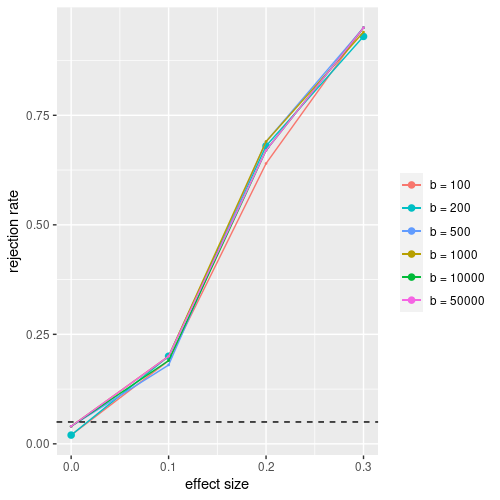} 
        \caption{$n>p$ logistic regression} \label{fig:testb6}
    \end{subfigure}
\caption{Proportions of rejection given by the $d_2CRT$ under 3 choices of $b$ in 3 settings. The y-axis represents the proportion of times the null hypothesis of conditional independence is rejected out of 100 simulations. The x-axis represents the true coefficient of $X$ in the models.}
\label{testbd2crt}
\end{figure}

\begin{figure}[htp]
    \centering
    \begin{subfigure}[t]{0.3\textwidth}
        \centering
        \includegraphics[width=1\linewidth,height=0.7\linewidth]{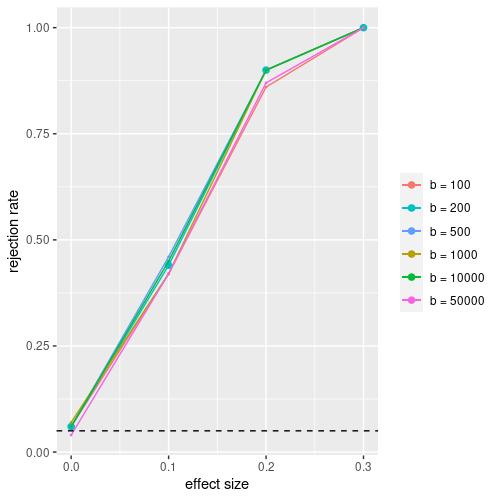} 
        \caption{$n>p$ linear regression} \label{fig:testb7}
    \end{subfigure}
    \hfill
    \begin{subfigure}[t]{0.3\textwidth}
        \centering
        \includegraphics[width=1\linewidth,height=0.7\linewidth]{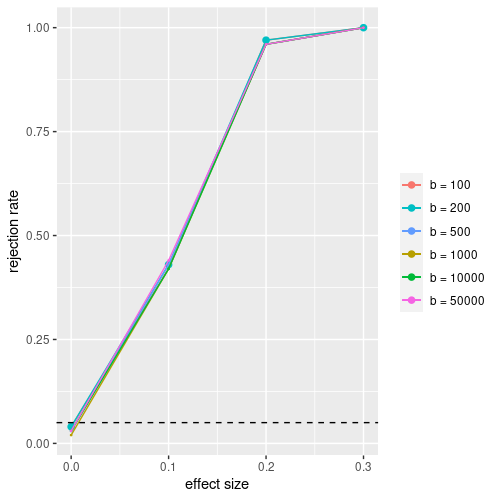} 
        \caption{$n=p$ linear regression} \label{fig:testb8}
    \end{subfigure}
    \hfill
    \begin{subfigure}[t]{0.3\textwidth}
        \centering
        \includegraphics[width=1\linewidth,height=0.7\linewidth]{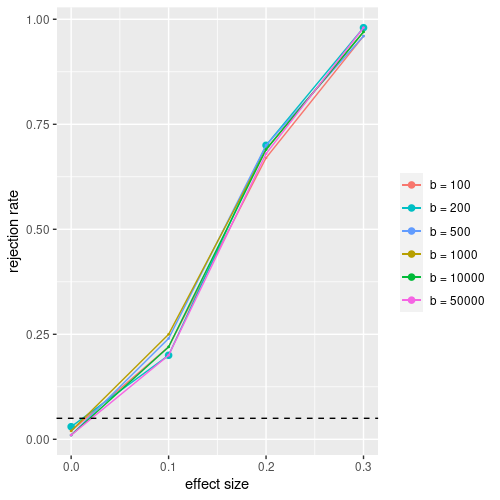} 
        \caption{$n>p$ logistic regression} \label{fig:testb9}
    \end{subfigure}
\caption{Proportions of rejection given by the $d_{12}CRT$ under 3 choices of $b$ in 3 settings. The y-axis represents the proportion of times the null hypothesis of conditional independence is rejected out of 100 simulations. The x-axis represents the true coefficient of $X$ in the models.}
\label{testbd12crt}
\end{figure}

\noindent {\bf Results:} From Figure \ref{testbd0crt}, \ref{testbd2crt} and \ref{testbd12crt}, we can observe that each of the 3 distillation-based tests considered here show similar performance under different choices of $b$ varying from 100 to 50000. In most cases, $b=200$ leads to desirable observed type 1 error, which is less than 0.05, while $b=100$ sometimes brings type 1 error exceeding the threshold. There's no obvious evidence showing that we can gain much greater power by increasing $b$ from 200 to 50000. Hence, we decide to take $b=200$ in subsequent analysis.
%%%%%%%%%%%%%%%%%%%%%%%%%%
\subsubsection{Impact of the Number of Conditional Samplings on the Performance of the CRT}
We also assess the impact of $b$ on the performance of the CRT with same simulation settings as those for distillation-based tests. Unlike distillation-based tests, the original CRT introduced in Cand{\`e}s, Fan, Janson and Lv (2018) is not a two-stage test. In the 2 linear regression settings, we evaluate the importance of variables (i.e. test statistics) by linear Lasso fitted coefficients. In the logistic regression settings, we obtain the variables importance by logistic Lasso fitted coefficients. Due to the limitation of computation, we are not able to simulate the CRT with extremely large $b$. For each specific setting, we run 100 Monte-Carlo simulations. 

\begin{figure}[h]
    \centering
    \begin{subfigure}[t]{0.3\textwidth}
        \centering
        \includegraphics[width=1\linewidth,height=0.7\linewidth]{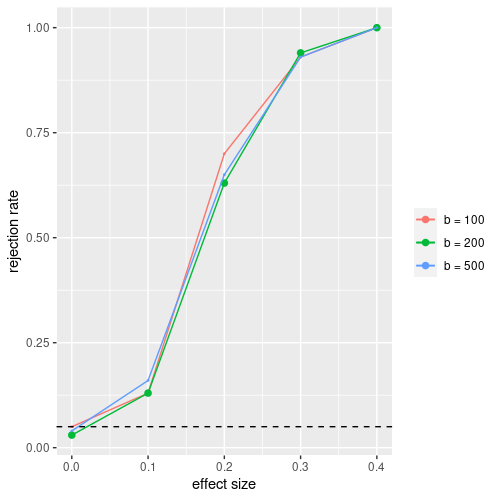} 
        \caption{$n>p$ linear regression} \label{fig:testbcrt1}
    \end{subfigure}
    \hfill
    \begin{subfigure}[t]{0.3\textwidth}
        \centering
        \includegraphics[width=1\linewidth,height=0.7\linewidth]{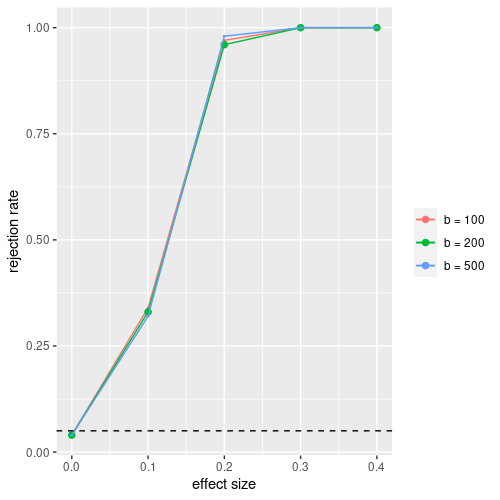} 
        \caption{$n=p$ linear regression} \label{fig:testbcrt2}
    \end{subfigure}
    \hfill
    \begin{subfigure}[t]{0.3\textwidth}
        \centering
        \includegraphics[width=1\linewidth,height=0.7\linewidth]{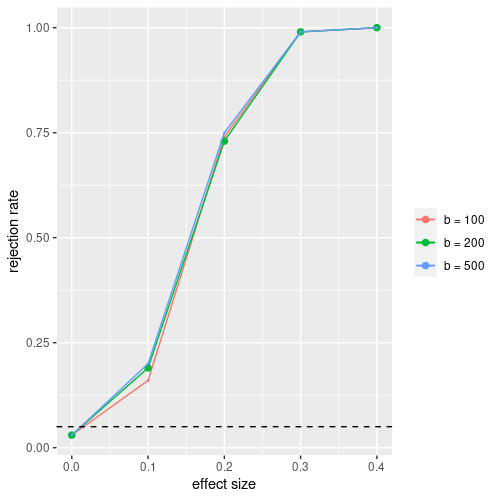} 
        \caption{$n>p$ logistic regression} \label{fig:testbcrt3}
    \end{subfigure}
\caption{Proportions of rejection given by the CRT under 3 choices of $b$ in 3 settings. The y-axis represents the proportion of times the null hypothesis of conditional independence is rejected out of 100 simulations. The x-axis represents the true coefficient of $X$ in the models.}
\label{testbcrt}
\end{figure}

\noindent {\bf Results:} From Figure \ref{testbcrt}, we can see that there's no significant difference in performance among 3 choices of $b$ in 3 considered settings. In particular, no matter what's the value of $b$, the CRT always shows observed type 1 error control under the null hypothesis of conditional independence, which is encouraging. Of course, as mentioned in Cand{\`e}s, Fan, Janson and Lv (2018), there could be some settings in which we need an extremely large $b$ to guarantee that the CRT has good observed type 1 error control in practice. Besides, the CRT under larger $b$ doesn't show obvious advantage in power. Therefore, out of simultaneous consideration of theoretical power and computational burden, we set $b=200$ for further simulations.

\subsection{More Complicated Models}

In the main content, we consider the case of continuous response variable. Here, we provide further investigations in cases of categorical response variable.

Recall that $n=400$ and the number of Monte-Carlo simulations is 200 for each specific setting. For $i=1,2,\ldots,400$, $(x_i,z_{[i]}^T)^T$ is an i.i.d. realization of $(X,Z^T)^T$, where $(X,Z^T)^T$ is a 101-dimension random vector following a Gaussian $AR(1)$ process with autoregressive coefficient 0.5. We consider the following categorical response model with different forms of interactions: 
$$
\resizebox{.92\hsize}{!}{$
\begin{array}{ll}
{\bf Interaction\ Model\ 2:}& \left\{
\begin{array}{l}
\ddot{y}_i = \beta_0 x_i\sum\limits_{j=1}\limits^{100}z_{[i]j} + z_{[i]}^T\ddot{\beta}+\ddot{\varepsilon_i},\ i=1,2,\ldots,400,\\
\mathring{y}_i = exp\left(\beta_0 x_i\sum\limits_{j=1}\limits^{100}z_{[i]j}\right) + (z_{[i]}^T\mathring{\beta})^3-1+\mathring{\varepsilon_i},\ i=1,2,\ldots,400,\\
y_i=\left\{\begin{array}{ll}
0, & if\ \ddot{y}_i>0\ and\ \mathring{y}_i>0,\\
1, & if\ \ddot{y}_i>0\ and\ \mathring{y}_i\leq0,\\
2, & if\ \ddot{y}_i\leq0\ and\ \mathring{y}_i>0,\\
3, & if\ \ddot{y}_i\leq0\ and\ \mathring{y}_i\leq0,\\
\end{array} \right.\  i=1,2,\ldots,400,\\
\end{array} \right.\\
{\bf Interaction\ Model\ 3:}& \left\{
\begin{array}{l}
\ddot{y}_i = \beta_0 x_i\sum\limits_{j=1}\limits^{100}z_{[i]j} + z_{[i]}^T\ddot{\beta}+\ddot{\varepsilon_i},\ i=1,2,\ldots,400,\\
\mathring{y}_i = \beta_0 exp\left(\beta_0 x_i\sum\limits_{j=1}\limits^{100}z_{[i]j}z_{[i]j+1}\right) + z_{[i]}^T\mathring{\beta}+\mathring{\varepsilon_i},\ i=1,2,\ldots,400,\\
y_i=\left\{\begin{array}{ll}
0, & if\ \ddot{y}_i>0\ and\ \mathring{y}_i>0,\\
1, & if\ \ddot{y}_i>0\ and\ \mathring{y}_i\leq0,\\
2, & if\ \ddot{y}_i\leq0\ and\ \mathring{y}_i>0,\\
3, & if\ \ddot{y}_i\leq0\ and\ \mathring{y}_i\leq0,\\
\end{array} \right.\  i=1,2,\ldots,400,\\
\end{array} \right.\\
\end{array}
$}
$$
where $\beta_0\in \mathbb{R}$, $\beta,\ddot{\beta},\mathring{\beta}\in \mathbb{R}^{100}$, $\varepsilon_i,\ddot{\varepsilon}_i,\mathring{\varepsilon}_i$'s are i.i.d. standard Gaussian, independent of $x_i$ and $z_{[i]}$. To construct $\beta$, we randomly sample a set of size 5 without replacement from $\{1,2,\ldots,100\}$, say $S$. For $j=1,\ldots,100$, let $\beta_j=0$ if $j\notin S$, and $\beta_j=0.5\mathbbm{1}\{B_j=1\}-0.5\mathbbm{1}\{B_j=0\}$ if $j\in S$, where $B_j$ is Bernoulli. $\ddot{\beta}$ and $\mathring{\beta}$ are also constructed in the same way independently. We consider various values of $\beta_0$ for different models. The difference between the 3rd and the 2nd model is that, in the 3rd one, covariates entangle with each other in a more involved way.

\begin{figure}[h]
    \centering
    \begin{subfigure}[t]{0.49\textwidth}
        \centering
        \includegraphics[width=1\linewidth,height=0.7\linewidth]{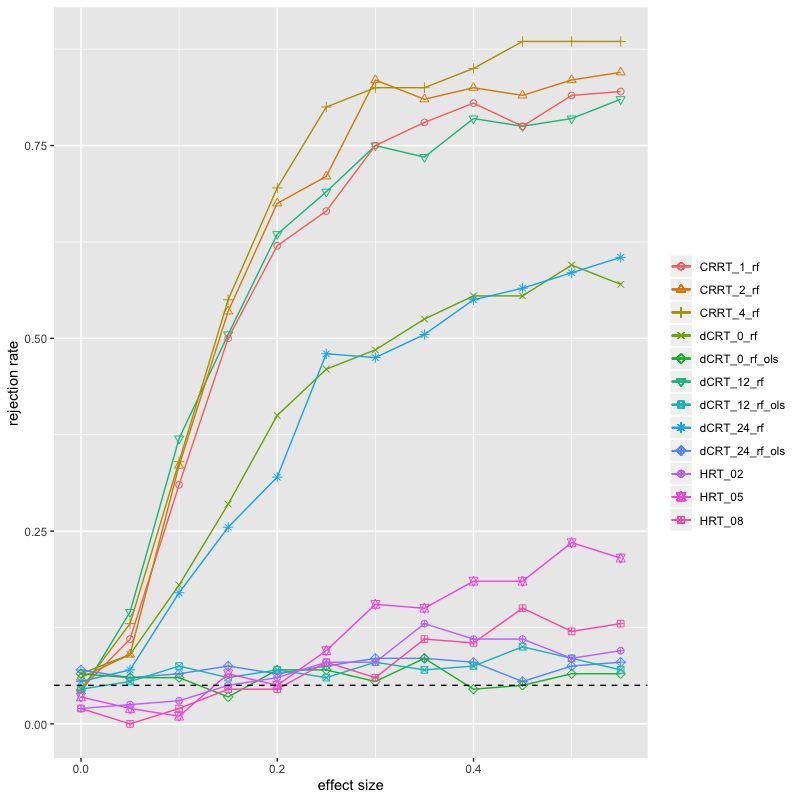} 
        \caption{Rejection Rate} \label{fig:ir2}
    \end{subfigure}
    \hfill
    \begin{subfigure}[t]{0.49\textwidth}
        \centering
        \includegraphics[width=1\linewidth,height=0.7\linewidth]{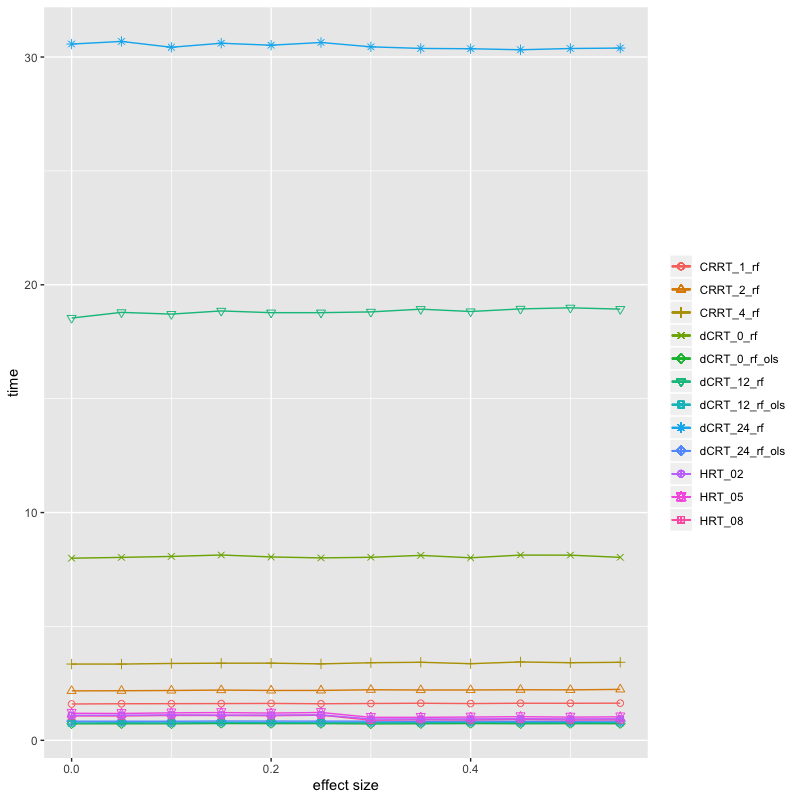} 
        \caption{Time} \label{fig:it2}
    \end{subfigure}
\caption{(a) Proportions of rejection under the Interaction Model 2. The y-axis represents the proportion of times the null hypothesis of conditional independence is rejected out of 200 simulations. The x-axis represents the true coefficient of $X$ in the models. (b) Time in seconds per Monte-Carlo simulation. dCRT\_k\_rf represents the dCRT with random forest in both of the distillation step and the testing step, and keeping k important variables after distillation. dCRT\_k\_rf\_ols represents the dCRT with random forest in the distillation step, least square linear regression in the testing step, and keeping k important variables after distillation. CRRT\_k\_rf represents the CRRT using random forest as the test function and with batch size $(b+1)/k$. HRT\_0k represents the HRT fitting a random forest model and using training set of size $n\times k\times 0.1$.}
\label{inter_class1}
\end{figure}

\begin{figure}[h]
    \centering
    \begin{subfigure}[t]{0.49\textwidth}
        \centering
        \includegraphics[width=1\linewidth,height=0.7\linewidth]{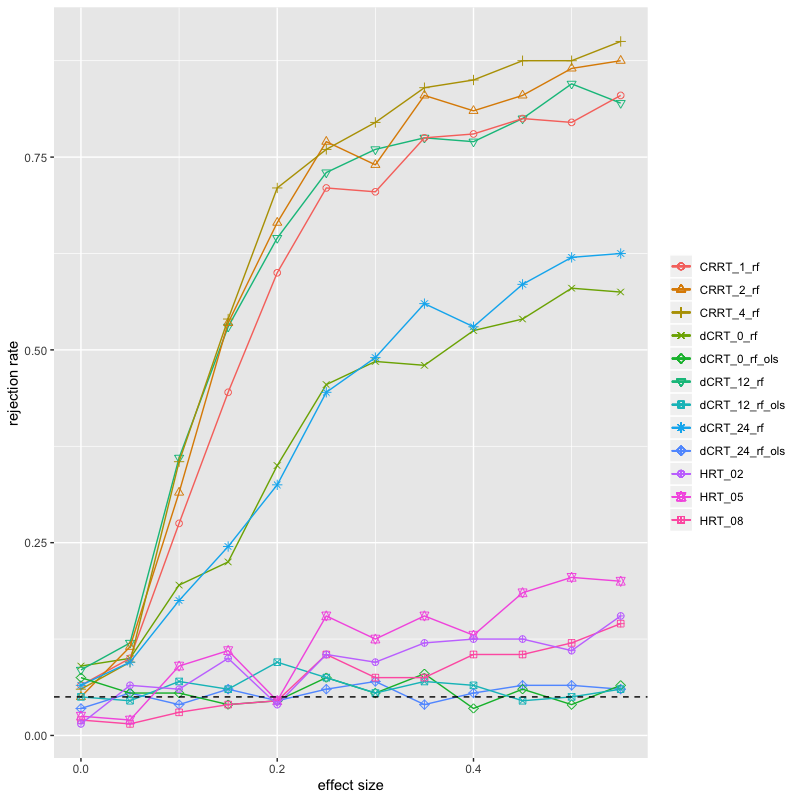} 
        \caption{Rejection Rate} \label{fig:ir3}
    \end{subfigure}
    \hfill
    \begin{subfigure}[t]{0.49\textwidth}
        \centering
        \includegraphics[width=1\linewidth,height=0.7\linewidth]{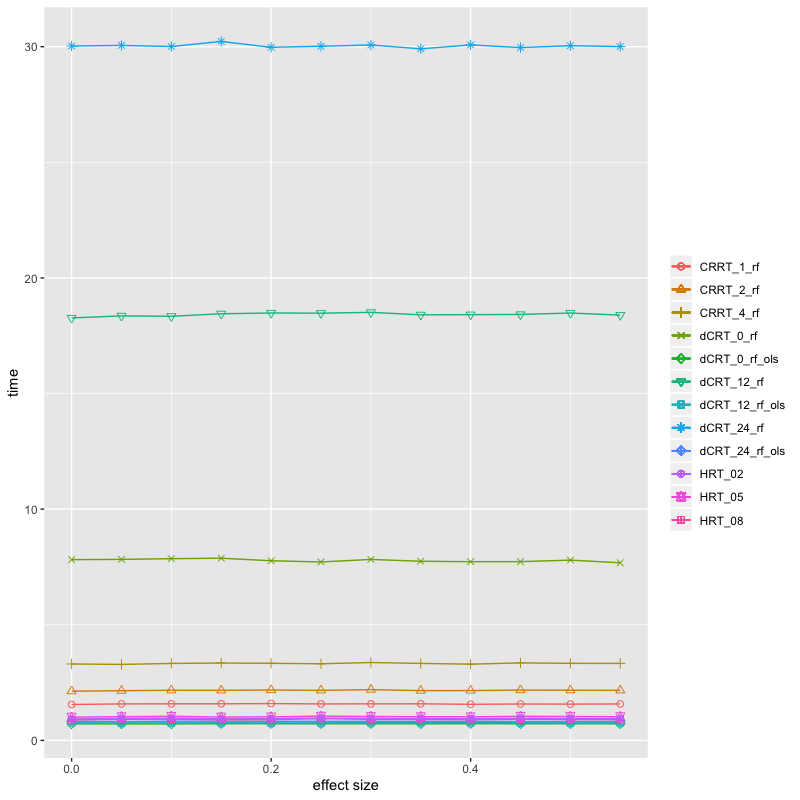} 
        \caption{Time} \label{fig:it3}
    \end{subfigure}
\caption{(a) Proportions of rejection under the Interaction Model 3. The y-axis represents the proportion of times the null hypothesis of conditional independence is rejected out of 200 simulations. The x-axis represents the true coefficient of $X$ in the models. (b) Time in seconds per Monte-Carlo simulation. Definitions of legends can be found in the Figure \ref{inter_class1}.}
\label{inter_class2}
\end{figure}

\noindent {\bf Results:} Results under the Interaction Model 2 and 3 are provided in Figure \ref{inter_class1} and \ref{inter_class2}, where the response variable $Y$ is categorical. Because results under these 2 models are similar, we analyze them together. The CRRT's and the dCRT\_12\_rf are among the most powerful tests. However, the dCRT\_12\_rf is in average at least 4 times slower than the CRRT\_4\_rf, which is consistently the most powerful test. Again, increasing the number of important variables kept from the distillation step doesn't help to improve power while costing more computations. Taking both power and efficiency into consideration, the CRRT with batch size 50 is obviously the best one.
%%%%%%%%%%%%%%%%%%%%%%%%%%%%%%%%

\subsection{Supplementary Results to the Subsection \ref{robustsim}}
\label{mixedeffect}
It's well comprehensible that misspecification of the conditional distribution of $X|Z$ can not only inflate type 1 error like what we've shown in the Subsection \ref{robustsim}, but can sometimes drive it downwards. This could happen and we can easily construct simple intuitive example as follows. Suppose that $Y=Z+\varepsilon$, where both $Y$ and $Z$ are scalars and $\varepsilon\sim N(0,1)$. Let $X\sim N(0,1)$, independent of $Z$ and $\varepsilon$ while $X^{(k)}=\theta Z$ for $k=1,\ldots,b$. As long as $\theta\ne 0$, pseudo variables are more related with $Y$ than $X$ and most methods would tend to believe that pseudo variables are more conditionally important, which brings the type 1 error downwards. Here, we give more examples under non data-dependent misspecification to show the mixed influence of conditional distribution misspecification. We consider the following 4 models.

\noindent {\bf ${\bf n>p}$ linear regression:} We set $n=400$, $p=100$ and observations are i.i.d.. For $i=1,2,\ldots,400$, $(x_i,z_{[i]}^T)^T$ is an i.i.d. realization of $(X,Z^T)^T$, where $(X,Z^T)^T$ is a $(p+1)$-dimension random vector. $Z\in \mathbb{R}^{p}$ follows a Gaussian $AR(1)$ process with autoregressive coefficient 0.5 and $X|Z\sim N(\mu(Z),\sigma^2)$, where the definition of $\mu$ and $\sigma^2$ will be given shortly. We let
$$
y_i = 0\cdot x_i + z_{[i]}^T\beta+\varepsilon_i,\ i=1,2,\ldots,400,
$$
where $\beta\in \mathbb{R}^{p}$, $\varepsilon_i$'s are i.i.d. standard Gaussian, independent of $x_i$ and $z_{[i]}$. $\beta$ is constructed as follows. We randomly sample a set of size 20 without replacement from $\{1,2,\ldots,p\}$, say S. For $j=1,\ldots,p$, let $\beta_j=0$ if $j\ne S$, and $\beta_j=0.5\mathbbm{1}\{B_j=1\}-0.5\mathbbm{1}\{B_j=0\}$ if $j\in S$, where $B_j$ is Bernoulli. 

\noindent {\bf ${\bf n>p}$ cubic regression:} $x$ and $z$ are generated in the same way as the previous model. Instead of assuming a linear model, we let
$$
y_i = 0\cdot x_i + (z_{[i]}\circ z_{[i]}\circ z_{[i]})^T\beta+\varepsilon_i,\ i=1,2,\ldots,400,
$$
where $\beta\in \mathbb{R}^{p}$. $\varepsilon_i$'s and $\beta$ are constructed in a same way as the previous model.

\noindent {\bf ${\bf n=p}$ linear regression:} We set $n=400$, $p=400$ and observations are i.i.d.. For $i=1,2,\ldots,400$, $(x_i,z_{[i]}^T)^T$ is an i.i.d. realization of $(X,Z^T)^T$, where $(X,Z^T)^T$ is a $(p+1)$-dimension random vector. $Z\in \mathbb{R}^{p}$ follows a Gaussian $AR(1)$ process with autoregressive coefficient 0.5 and $X|Z\sim N(\mu(Z),sigma^2)$. We let
$$
y_i = 0\cdot x_i + z_{[i]}^T\beta+\varepsilon_i,\ i=1,2,\ldots,400,
$$
where $\beta\in \mathbb{R}^{p}$, $\varepsilon_i$'s are i.i.d. standard Gaussian, independent of $x_i$ and $z_{[i]}$. $\beta$ is constructed as follows. We randomly sample a set of size 20 without replacement from $\{1,2,\ldots,p\}$, say S. For $j=1,\ldots,p$, let $\beta_j=0$ if $j\ne S$, and $\beta_j=0.5\mathbbm{1}\{B_j=1\}-0.5\mathbbm{1}\{B_j=0\}$ if $j\in S$, where $B_j$ is Bernoulli.

\noindent {\bf ${\bf n>p}$ logistic regression:} $x$ and $z$ are generated in the same way as the previous model, where $p=100$. We consider a binary model,
$$
y_i = \mathbbm{1}\{0\cdot x_i + z_{[i]}^T\beta+\varepsilon_i\},\ i=1,2,\ldots,400,
$$
where $\beta\in \mathbb{R}^{p}$. $\varepsilon_i$'s and $\beta$ are constructed in a same way as the $n>p$ linear regression model.

\begin{figure}[ht]
    \centering
    \begin{subfigure}[t]{0.32\textwidth}
        \centering
        \includegraphics[width=1\linewidth,height=0.7\linewidth]{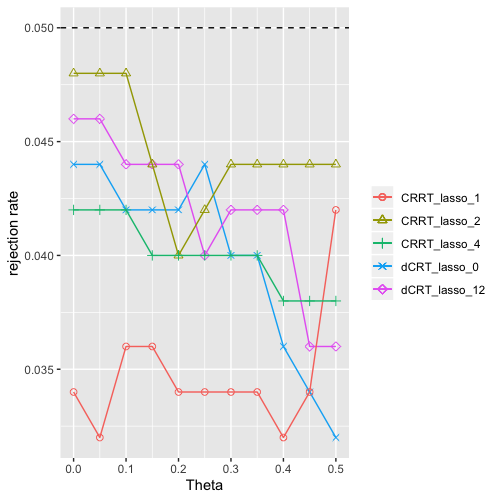} 
        \caption{Quadratic} \label{fig:pll1}
    \end{subfigure}
    \hfill
    \begin{subfigure}[t]{0.32\textwidth}
        \centering
        \includegraphics[width=1\linewidth,height=0.7\linewidth]{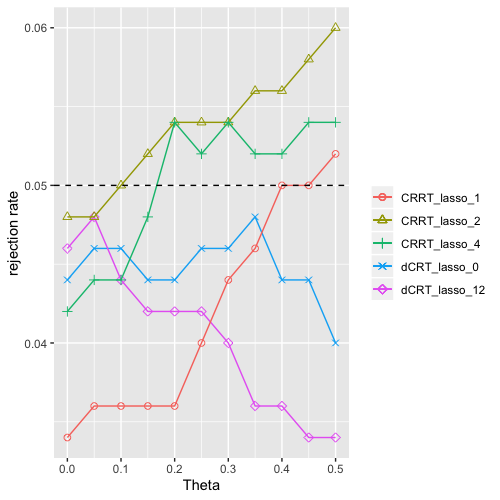} 
        \caption{Cubic} \label{fig:pll2}
    \end{subfigure}
    \hfill
    \begin{subfigure}[t]{0.32\textwidth}
        \centering
        \includegraphics[width=1\linewidth,height=0.7\linewidth]{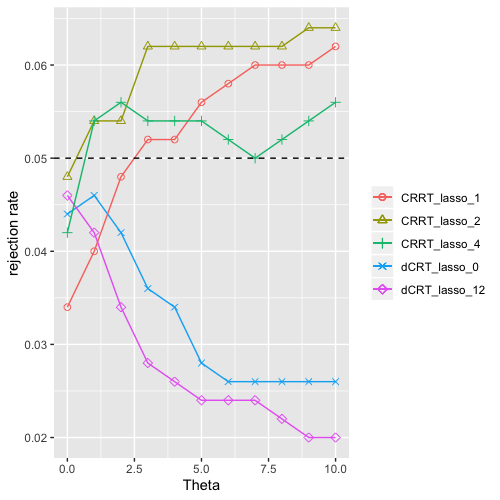} 
        \caption{Tanh} \label{fig:pll3}
    \end{subfigure}
\caption{Rejection rates of 5 Lasso-based tests under $n>p$ linear regression model and 3 non data-dependent misspecification settings: (a) Quadratic, (b) Cubic, (c) Tanh. The y-axis represents the proportion of times the null hypothesis of conditional independence is rejected out of 500 Monte-Carlo simulations. The x-axis is the value of $\theta$, which represents the degree of misspecification of conditional distribution. CRRT\_lasso\_k represents the CRRT using linear Lasso as the test function and with batch size $(b+1)/k$. dCRT\_lasso\_k represents the dCRT with linear Lasso used in the distillation step and keeping k important variables after distillation.}
\label{person_low_lasso}
\end{figure}

\begin{figure}[h]
    \centering
    \begin{subfigure}[t]{0.32\textwidth}
        \centering
        \includegraphics[width=1\linewidth,height=0.7\linewidth]{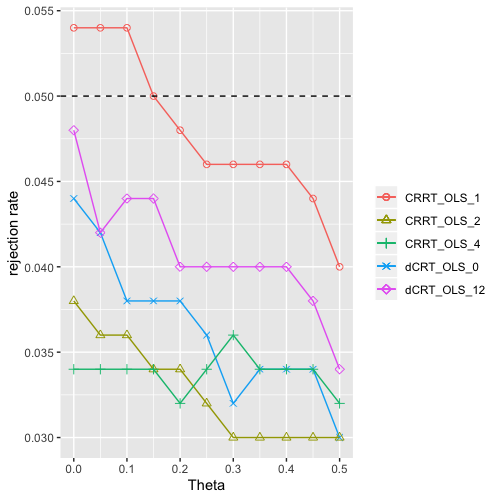} 
        \caption{Quadratic} \label{fig:plo1}
    \end{subfigure}
    \hfill
    \begin{subfigure}[t]{0.32\textwidth}
        \centering
        \includegraphics[width=1\linewidth,height=0.7\linewidth]{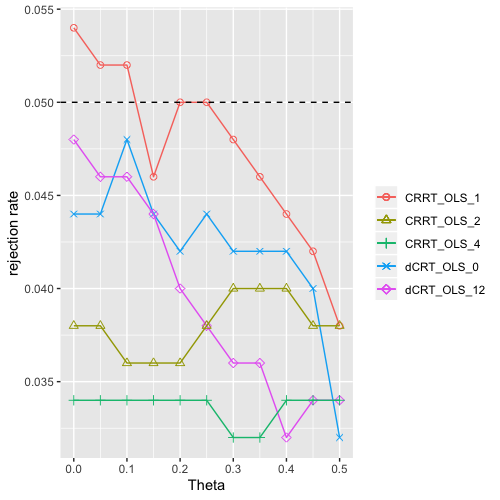} 
        \caption{Cubic} \label{fig:plo2}
    \end{subfigure}
    \hfill
    \begin{subfigure}[t]{0.32\textwidth}
        \centering
        \includegraphics[width=1\linewidth,height=0.7\linewidth]{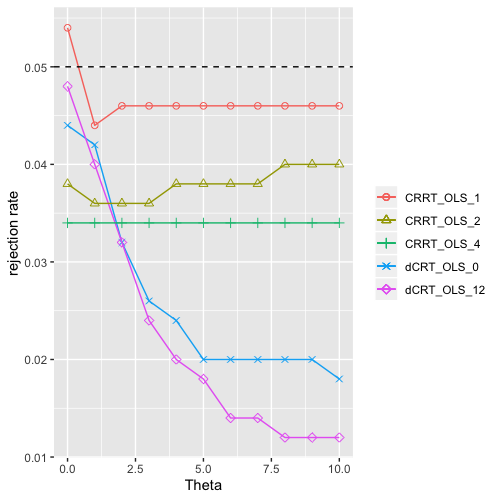} 
        \caption{Tanh} \label{fig:plo3}
    \end{subfigure}
\caption{Rejection rates of 5 OLS-based tests under $n>p$ linear regression model and 3 non data-dependent misspecification settings: (a) Quadratic, (b) Cubic, (c) Tanh. The y-axis represents the proportion of times the null hypothesis of conditional independence is rejected out of 500 Monte-Carlo simulations. The x-axis is the value of $\theta$, which represents the degree of misspecification of conditional distribution. CRRT\_OLS\_k represents the CRRT using OLS linear regression as the test function and with batch size $(b+1)/k$. dCRT\_lasso\_k represents the dCRT with OLS linear regression used in both of the distillation step and the testing step, and keeping k important variables after distillation.}
\label{person_low_ols}
\end{figure}

\begin{figure}[ht]
    \centering
    \begin{subfigure}[t]{0.32\textwidth}
        \centering
        \includegraphics[width=1\linewidth,height=0.7\linewidth]{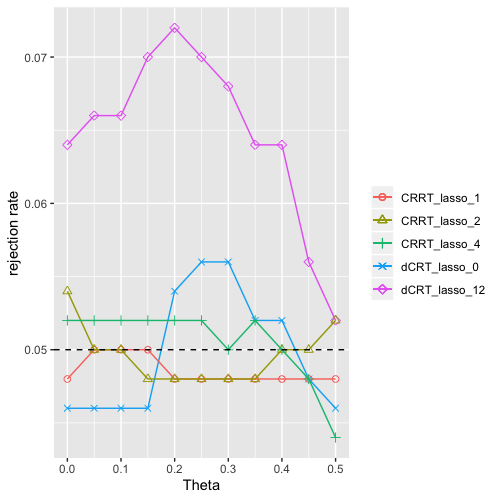} 
        \caption{Quadratic} \label{fig:plw1}
    \end{subfigure}
    \hfill
    \begin{subfigure}[t]{0.32\textwidth}
        \centering
        \includegraphics[width=1\linewidth,height=0.7\linewidth]{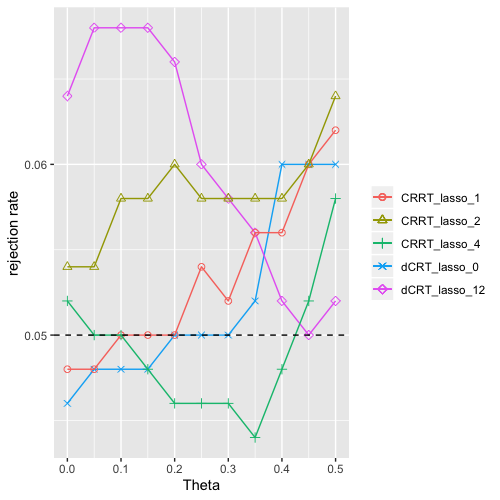} 
        \caption{Cubic} \label{fig:plw2}
    \end{subfigure}
    \hfill
    \begin{subfigure}[t]{0.32\textwidth}
        \centering
        \includegraphics[width=1\linewidth,height=0.7\linewidth]{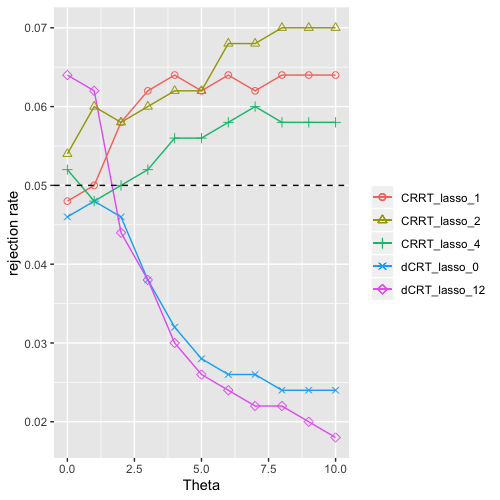} 
        \caption{Tanh} \label{fig:plw3}
    \end{subfigure}
\caption{Rejection rates of 5 Lasso-based tests under $n>p$ cubic regression model and 3 non data-dependent misspecification settings: (a) Quadratic, (b) Cubic, (c) Tanh. The y-axis represents the proportion of times the null hypothesis of conditional independence is rejected out of 500 Monte-Carlo simulations. The x-axis is the value of $\theta$, which represents the degree of misspecification of conditional distribution. Definitions of legends can be found in the Figure \ref{person_low_lasso}}
\label{person_low_wrong}
\end{figure}

\begin{figure}[ht]
    \centering
    \begin{subfigure}[t]{0.32\textwidth}
        \centering
        \includegraphics[width=1\linewidth,height=0.7\linewidth]{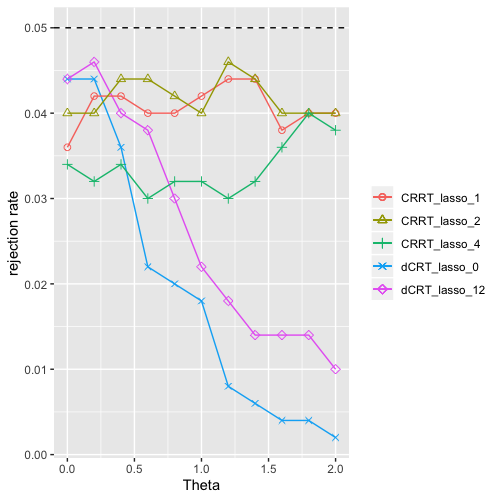} 
        \caption{Quadratic} \label{fig:ph1}
    \end{subfigure}
    \hfill
    \begin{subfigure}[t]{0.32\textwidth}
        \centering
        \includegraphics[width=1\linewidth,height=0.7\linewidth]{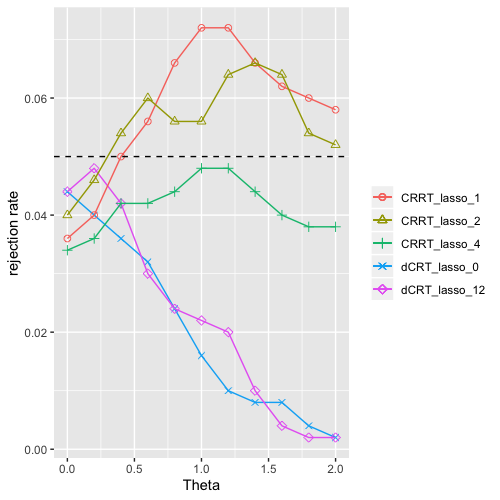} 
        \caption{Cubic} \label{fig:ph2}
    \end{subfigure}
    \hfill
    \begin{subfigure}[t]{0.32\textwidth}
        \centering
        \includegraphics[width=1\linewidth,height=0.7\linewidth]{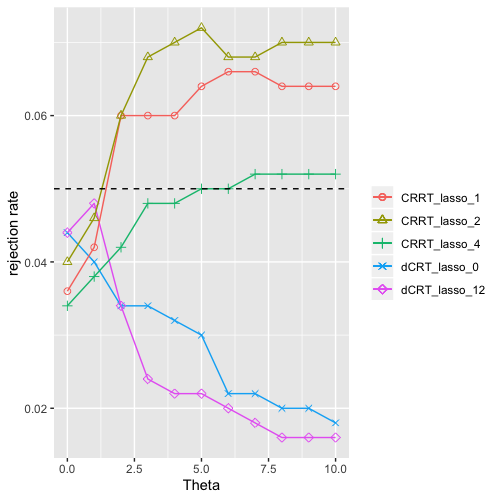} 
        \caption{Tanh} \label{fig:ph3}
    \end{subfigure}
\caption{Rejection rates of 5 Lasso-based tests under $n=p$ linear regression model and 3 non data-dependent misspecification settings: (a) Quadratic, (b) Cubic, (c) Tanh. The y-axis represents the proportion of times the null hypothesis of conditional independence is rejected out of 500 Monte-Carlo simulations. The x-axis is the value of $\theta$, which represents the degree of misspecification of conditional distribution. Definitions of legends can be found in the Figure \ref{person_low_lasso}}
\label{person_high}
\end{figure}

\begin{figure}[h]
    \centering
    \begin{subfigure}[t]{0.32\textwidth}
        \centering
        \includegraphics[width=1\linewidth,height=0.7\linewidth]{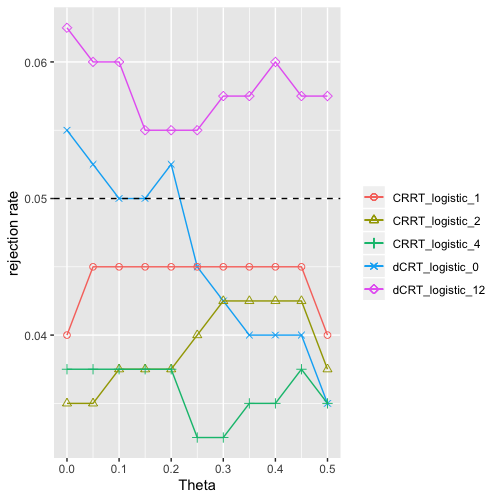} 
        \caption{Quadratic} \label{fig:pl1}
    \end{subfigure}
    \hfill
    \begin{subfigure}[t]{0.32\textwidth}
        \centering
        \includegraphics[width=1\linewidth,height=0.7\linewidth]{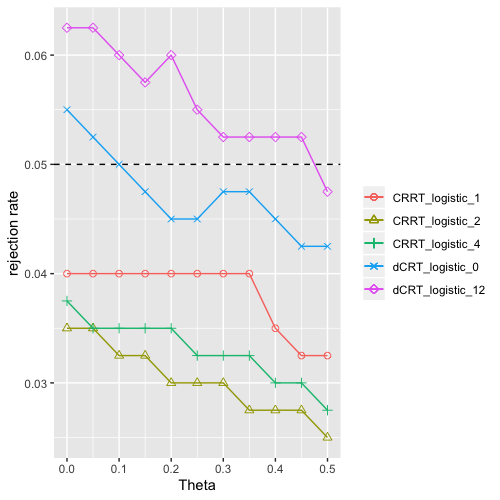} 
        \caption{Cubic} \label{fig:pl2}
    \end{subfigure}
    \hfill
    \begin{subfigure}[t]{0.32\textwidth}
        \centering
        \includegraphics[width=1\linewidth,height=0.7\linewidth]{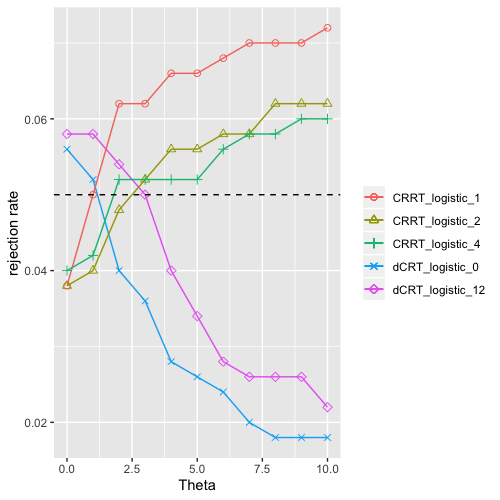} 
        \caption{Tanh} \label{fig:pl3}
    \end{subfigure}
\caption{Rejection rates of 5 logistic Lasso-based tests under $n>p$ logistic regression model and 3 non data-dependent misspecification settings: (a) Quadratic, (b) Cubic, (c) Tanh. The y-axis represents the proportion of times the null hypothesis of conditional independence is rejected out of 500 Monte-Carlo simulations. The x-axis is the value of $\theta$, which represents the degree of misspecification of conditional distribution. CRRT\_logistic\_k represents the CRRT using logistic Lasso as the test function and with batch size $(b+1)/k$. dCRT\_logistic\_k represents the dCRT with logistic Lasso used in the distillation step and keeping k important variables after distillation.}
\label{person_log}
\end{figure}

For each pseudo variable $X^{(k)}$, we let its conditional distribution $Q(\cdot|Z)=N(Z^T\zeta,\sigma^2)$ such that $(X^{(k)},Z^T)^T$ follows a Gaussian $AR(1)$ process with autoregressive coefficient 0.5. This is also the $\sigma^2$ we've mentioned above. We consider 3 types of non data-dependent conditional distribution misspecification by assigning $\mu(Z)$ different from $Z^T\zeta$, which are same as the Berrett, Wang, Barber and Samworth (2019):
\begin{itemize}
\item Quadratic: $\mu(Z)=Z^T\zeta+\theta(Z^T\zeta)^2$,
\item Cubic: $\mu(Z)=Z^T\zeta+\theta(Z^T\zeta)^3$,
\item Tanh: $\mu(Z)=\frac{tanh(\theta Z^T\zeta)}{\theta}\mathbbm{1}\{\theta\ne 0\}+Z^T\zeta\mathbbm{1}\{\theta= 0\}$,
\end{itemize}
where $\theta\in \mathbb{R}$ can reflect the degree of misspecification.

\noindent {\bf Results:} Results for the $n>p$ linear regression model are given in Figures \ref{person_low_lasso} and \ref{person_low_ols}. When we adopt OLS-based CRT's or dCRT's, we get more and more conservative results in general as $Q^{\star}$ deviates more from $Q$. Lasso-based dCRT's also show similar behaviors while Lasso-based CRT's have increasing type 1 error as misspecification is enlarged. 

Results for the $n>p$ cubic regression model are given in Figure \ref{person_low_wrong}. We can see quite different phenomenon in different misspecification settings. When the misspecification is quadratic, the impact of misspecification seems to be non-monotonic. When the misspecification is cubic, dCRT\_lasso\_12 gets more conservative while other tests show opposite behaviors as $\theta$ increases. When the misspecification is tanh, the dCRT's get more conservative while the CRRT's get more aggressive as $\theta$ increases.

Results for the $n=p$ linear regression model are given in Figure \ref{person_high}. No matter under which type of misspecification, the dCRT's always grow conservative as $\theta$ increases. The CRT's have slight increase in type 1 error as $\theta$ increases under quadratic and cubic misspecification and exhibit a relatively more obvious increase under the tanh misspecification.

Results for the $n>p$ logistic regression model are given in Figure \ref{person_log}. Still, the dCRT's become more conservative as $\theta$ increases. The CRT's remain stable under the quadratic misspecification, have decreasing type 1 error under the cubic misspecification and have increasing type 1 error under the tanh misspecification.

To sum up, what we can see here is totally different from what's been shown in the Subsection \ref{robustsim}, where both of the CRT and the dCRT have monotonically increasing rejection rate as the degree of misspecification increases. Results demonstrate that how misspecification of conditional distribution would affect the type 1 error is quiet complicated and unpredictable. It depends on the joint model, the test we use and the type of misspecification.

\end{document}